\newcommand{\Rmnum}[1]{\expandafter\@slowromancap\romannumeral #1@}
\newcommand{\wgfp}{\operatorname{WGFP}}
\newcommand{\gfp}{\operatorname{GFP}}
\newcommand{\opt}{\operatorname{Opt}}
\newcommand{\RR}{\mathbb{R}}
\newcommand{\sD}{\mathscr{D}}
\newcommand{\sF}{\mathscr{F}}
\newtheorem*{theorem*}{Theorem}
\begin{document}

\title{\bf A Strongly Polynomial-Time Algorithm for
Weighted General Factors with Three Feasible Degrees\thanks{
An extended abstract of this paper appeared in the Proceedings of the 34th
International Symposium on Algorithms and Computation (ISAAC
2023)~\cite{sz23:isaac}.
The research leading to these results has received funding from the European Research Council (ERC) under the European Union's Horizon 2020 research and innovation programme (grant agreement No 714532). This work was also supported by UKRI EP/X024431/1. For the purpose of Open Access, the authors have applied a CC BY public copyright licence to any Author Accepted Manuscript version arising from this submission. All data is provided in full in the results section of this paper. Part of the work was done while the first author was a postdoctoral research associate at the University of Oxford.}
}

\author{Shuai Shao\\University of Science and Technology of China\\\texttt{shao10@ustc.edu.cn}
\and Stanislav \v{Z}ivn\'y\\University of Oxford\\\texttt{standa.zivny@cs.ox.ac.uk}}

\maketitle

\begin{abstract}
General factors are a generalization of matchings. Given a graph $G$ with a set $\pi(v)$ of feasible degrees, called a degree constraint, for each vertex $v$ of $G$, the general factor problem is to find a (spanning) subgraph $F$ of $G$ such that $\deg_F(v) \in \pi(v)$ for every $v$ of $G$. When all degree constraints are symmetric $\Delta$-matroids, the problem is solvable in polynomial time. The weighted general factor problem is  to find a general factor of the maximum total weight in an edge-weighted graph. 
In this paper, we present a strongly polynomial-time algorithm for a type of weighted general factor problems with real-valued edge weights that is provably not reducible to the weighted matching problem by gadget constructions.
\end{abstract}

\section{Introduction}

A matching in an undirected graph is a subset of the edges that have no vertices
in common, and it is perfect if its edges cover all vertices of the graph. Graph matching is one of the most studied problems both in graph theory and combinatorial
optimization, with beautiful structural results and efficient algorithms
described, e.g., in the monograph of Lov\'asz and
Plummer~\cite{lovasz2009matching} and in relevant chapters of standard
textbooks~\cite{schrijver2003combinatorial,korte2018combinatorial}. In particular,  the
\emph{weighted (perfect) matching problem} 
is to find a (perfect) matching of the maximum total weight for a given graph of which each edge is assigned a weight. 
This problem can be solved in polynomial time by the celebrated Edmonds' blossom
algorithm~\cite{edmonds1965maximum, edmonds1965paths}.
Since then, a number of more  efficient algorithms have been
developed~\cite{gabow1974implementation, lawler2001combinatorial, karzanov1976efficient, cunningham1978primal, gabow1985scaling, galil1986ev, gabow1989efficient, gabow1990data, gabow1991faster, huang2012efficient}.
Table \Rmnum{3} of~\cite{duan2014linear}  gives a detailed review of these algorithms.

The \emph{$f$-factor problem} is a generalization of the perfect matching problem in
which one is given a non-negative integer $f(v)$ for each vertex $v\in V$ of
$G=(V,E)$. 
The task is to find a (spanning) subgraph $F=(V_F, E_F)$ of $G$ 
such that $\deg_F(v)=f(v) $ for every $v\in V$.\footnote{In graph theory, a graph factor is usually a spanning subgraph. 
Here, without causing ambiguity, we allow $F$ to be an arbitrary subgraph including the empty graph and 
we adapt the convention that $\deg_F(v)=0$ if $v\in
V\setminus V_F$.}
The case $f(v)=1$ for every $v\in V$
is the perfect matching problem.
This problem, as well as the weighted
version, can be solved efficiently by a gadget reduction to the perfect matching
problem~\cite{edmonds70}.
In addition,
Tutte gave a characterization of graphs having an $f$-factor~\cite{tutte1952factors}, which generalizes his characterization theorem for perfect matchings~\cite{tutte1950factorization}. 
Subsequently, the study of graph factors 
has attracted much attention with
many variants of graph factors, e.g., $b$-matchings, $[a,b]$-factors, $(g, f)$-factors, parity $(g, f)$-factors, and anti-factors introduced, and various types of characterization theorems proved for the existence of such factors.
We refer the reader to the book~\cite{akiyama2011factors} and the survey~\cite{plummer2007graph}
for a comprehensive treatment of the developments on the topic of graph factors. 

In the early 1970s, Lov\'asz introduced  a generalization of the above factor
problems~\cite{lovasz70,lovasz1972factorization}, for which we will need a few
definitions. For any nonnegative integer $n$, let $[n]$ denote $\{0, 1, \ldots,
n\}$. A degree constraint $D$ of arity $n$ is a subset of $[n]$.\footnote{We
always associate a degree constraint with an arity. Two degree constraints are
different if they have different arities although they may be the same set of
integers.} We say that a degree constraint $D$ has a gap of length $k$ if there
exists $p\in D$ such that $p+1, \ldots, p+k\notin D$ and $p+k+1\in
D$.
An instance of the \emph{general factor
problem} (GFP)~\cite{lovasz70,lovasz1972factorization} 
 is given by a graph $G=(V,E)$
and a mapping $\pi$ that maps every vertex $v\in V$ to a degree constraint
$\pi(v)\subseteq[\deg_G(v)]$ of arity $\deg_G(v)$. The task is to find a
subgraph, if one exists, $F$ of $G$ such that $\deg_F(v)\in \pi(v)$ for every
$v\in V$.
The case $\pi(v)=\{0, 1\}$ for every $v\in V$ is the matching problem, and the case $\pi(v)=\{1\}$ for every $v\in V$ is the perfect matching problem.
Cornu{\'e}jols showed that the GFP
is
solvable in polynomial time if each degree constraint has gaps of length at most 1~\cite{cornuejols1988general}.
When a degree constraint having a gap of length at least $2$ occurs, the GFP is
NP-complete~\cite{lovasz1972factorization,cornuejols1988general} except for the
case when all constraints are either \emph{$0$-valid} or \emph{$1$-valid}. 
A degree constraint $D$ of arity $k$ is $0$-valid if $0\in D$,  and $1$-valid if $k\in D$.
When all constraints are \emph{$0$-valid}, the empty graph is a factor.
When all constraints are \emph{$1$-valid}, the underlying graph is a factor of
itself. In both cases, the GFP is trivially tractable. 

In this paper, we consider the \emph{weighted  general factor problem} (WGFP) where each edge is assigned a real-valued weight and the task is to find a general factor of the maximum total weight. 
We suppose that each degree constraint has  gaps of length at most $1$ for which the unweighted GFP is known to be polynomial-time solvable.
Some cases of the WGFP are reducible to the weighted matching or perfect matching  problem 
by gadget constructions, and hence are polynomial-time solvable.
In these cases, the degree constraints are called \emph{matching-realizable} (see Definition~\ref{def:matchgate}).
For instance, the degree constraint $D=[b]$  where $b>0$, for $b$-matchings is matching realizable~\cite{tutte1954short}.
The weighted $b$-matching problem is interesting in its own right in combinatorial optimization and has been well studied with many elaborate algorithms developed~\cite{pulleyblank1973faces, marsh1979matching, gabow1983efficient, anstee1987polynomial, gabow2013algebraic}. 
Besides $b$-matchings, Cornu{\'e}jols showed that
the \emph{parity interval} constraint $D=\{g, g+2, \ldots, f\}$ where $f\geq g\geq 0$ and $f\equiv g \mod 2$, 
is matching realizable~\cite{cornuejols1988general}, 
and Szab{\'o} showed that the \emph{interval} constraint $D=\{g, g+1, \ldots, f\}$  where $f\geq g\geq 0$, for  $(g, f)$-factors is matching realizable~\cite{Szabo2009}.
Thus, the WGFP
where each degree constraint is an interval or a parity interval is reducible to the weighted matching problem (with some vertices  required to have degree exactly $1$) and hence solvable in polynomial-time by Edmonds' algorithm, although Szab{\'o} gave a different algorithm for this problem~\cite{Szabo2009}.
By reducing the WGFP with interval and parity interval constraints to the weighted $(g, f)$-factor problem, 
a faster algorithm  was obtained in~\cite{Dudycz2018} based on Gabow's algorithm~\cite{gabow1983efficient}.

In~\cite{Szabo2009}, Szab{\'o} further conjectured that the WGFP is solvable in polynomial time without 
requiring each degree constraint should be an interval or a parity interval, as long as each degree constraint has gaps of length at most $1$.
To prove the conjecture, 
a natural question is then the following: \emph{Are there other WGFPs that are polynomial-time solvable by a gadget reduction to weighted matchings?}
In other words, \emph{are there other degree constraints that are matching realizable?}
In this paper, we show that the answer is \emph{no}.
\begin{theorem}
A degree constraint with  gaps of length at most $1$  is matching realizable if and only if it is an interval or a parity interval.
\end{theorem}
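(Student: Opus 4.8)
The ``if'' direction is exactly what the references cited above give (Tutte for $b$-matchings, Cornu{\'e}jols for parity intervals, Szab{\'o} for intervals), so the plan concerns the ``only if'' direction, which I would prove in contrapositive form: a degree constraint $D$ with gaps of length at most $1$ that is neither an interval nor a parity interval is not matching realizable. Write $D=\{d_1<\cdots<d_m\}$. ``Gaps of length at most $1$'' says every $d_{i+1}-d_i$ is $1$ or $2$; ``not an interval'' says some such difference is $2$; ``not a parity interval'' says some such difference is $1$. Hence the difference sequence is non-constant over $\{1,2\}$, so two consecutive entries differ, yielding an integer $a$ with either $\{a,a+1,a+3\}\subseteq D$ and $a+2\notin D$, or $\{a,a+2,a+3\}\subseteq D$ and $a+1\notin D$. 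It therefore suffices to (i) reduce $D$ to one of the two arity-$3$ base constraints $\{0,1,3\}$ and $\{0,2,3\}$, and (ii) show neither base constraint is matching realizable.

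For (i) I would use two elementary surgeries on a gadget realizing $D$ of arity $n$ --- a weighted graph with ports $p_1,\dots,p_n$, every other vertex carrying the constraint $\{1\}$ or $\{0,1\}$ (by the ``if'' direction, an interval or parity-interval constraint on an internal vertex may first be expanded into its own gadget, so this normal form loses nothing). First, reclassifying $p_n$ as an internal $\{1\}$-vertex produces an arity-$(n-1)$ gadget realizing $D\cap[n-1]$: forcing $p_n$ to be matched internally merely discards the degree value $n$. Second, deleting $p_n$ together with its incident edges produces an arity-$(n-1)$ gadget realizing $\{d-1:d\in D,\ d\ge 1\}$: a matching of the smaller graph is exactly a matching of the original graph leaving $p_n$ unsaturated, which shifts every feasible degree down by one. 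Applying the first surgery $n-a-3$ times and then the second surgery $a$ times turns a gadget for $D$ into a gadget for $\{\,d-a:d\in D,\ a\le d\le a+3\,\}$, which is $\{0,1,3\}$ in the first pattern above and $\{0,2,3\}$ in the second.

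For (ii) the tool is a \emph{simultaneous exchange} property of gadget signatures. Given a gadget, let $\mathcal F$ be the family of port-sets $S$ such that some valid matching leaves exactly the ports in $S$ unsaturated; realizing $D$ means $S\in\mathcal F$ if and only if $|S|\in D$. I claim that for all $A,B\in\mathcal F$ and every $i\in A\triangle B$, either $A\triangle\{i\}$ and $B\triangle\{i\}$ are both in $\mathcal F$, or there is $j\in(A\triangle B)\setminus\{i\}$ with $A\triangle\{i,j\}$ and $B\triangle\{i,j\}$ both in $\mathcal F$. This is the standard symmetric-difference argument: take valid matchings $M_A,M_B$ witnessing $A,B$; since $i\in A\triangle B$, the port $p_i$ has degree $1$ in $M_A\triangle M_B$ and is therefore an endpoint of a path $P$ in $M_A\triangle M_B$, whose other endpoint $j$ is either another port (necessarily in $A\triangle B$) or an internal $\{0,1\}$-vertex; switching both $M_A$ and $M_B$ along $P$ yields valid matchings witnessing $A\triangle\{i,j\}$ and $B\triangle\{i,j\}$ (respectively $A\triangle\{i\}$ and $B\triangle\{i\}$ when $j$ is internal), and every forced $\{1\}$-vertex stays matched because it has degree $2$ in $M_A\triangle M_B$ and so is never an endpoint of $P$. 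Both base constraints violate this: take $A=\emptyset$ and $B=\{1,2,3\}$ (sizes $0$ and $3$, feasible for both) and $i=1$; a one-line check (for $\{0,1,3\}$ every candidate involves a size-$2$ set such as $\{2,3\}$ or $\{1,j\}$; for $\{0,2,3\}$ every candidate involves a size-$1$ set such as $\{1\}$ or $\{1,2,3\}\triangle\{1,j\}$) shows no admissible $j$ exists. Hence $\{0,1,3\}$ and $\{0,2,3\}$ are not matching realizable, and by (i) neither is $D$.

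I expect the only genuine subtlety to lie in (ii): stating the simultaneous-exchange property cleanly for the mixed matching model (two kinds of internal vertices, plus the constant weight contribution), and in particular checking that forced $\{1\}$-vertices never appear as endpoints of the alternating paths used in the exchange and that passing from the vertex-indexed matching structure down to the port-indexed family $\mathcal F$ preserves the property. Step (i) and the base-case computation should then be routine once the gadget model of Definition~\ref{def:matchgate} is unwound.
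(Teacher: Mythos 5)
Your proof is correct, and it is built around the same engine as the paper's --- an alternating path in $M_A\,\Delta\,M_B$ yields a \emph{simultaneous} exchange that is feasible for both witnessing matchings --- but it is packaged differently. The paper's Lemma~\ref{lem-mixed-partition} is the global form of your local exchange: $V_1\Delta V_2$ decomposes into singletons and pairs any union of which is a simultaneous exchange for \emph{both} $V_1$ and $V_2$. Taking $|V_1|=p$ and $|V_2|=p+3$, the odd-sized symmetric difference forces a singleton part, and exchanging it alone hits the forbidden size $p+2$. You instead first reduce $D$ to an arity-$3$ base constraint by two port surgeries and then apply the local exchange there. This works, and is arguably more elementary (you only need the local statement, not the full partition), but the arity reduction is in fact superfluous: your local exchange already applies directly to $V_1\subsetneq V_2$ of sizes $p$ and $p+3$, since for any $i\in V_2\setminus V_1$ the internal-endpoint case produces $V_2\,\Delta\,\{i\}$ of size $p+2$ and the port-endpoint case produces $V_1\,\Delta\,\{i,j\}$ of size $p+2$, a contradiction either way. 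One presentation point to fix: you tacitly adopt the dual convention, taking $\mathcal F$ to be the family of \emph{unsaturated} port-sets, whereas Definition~\ref{def:matchgate} tracks the set $W\subseteq U$ of \emph{saturated} ports. With the paper's convention the effects of your two surgeries are swapped (relabelling a port as an internal $\{1\}$-vertex realizes $\{d-1:d\in D,\ d\geq 1\}$, while deleting a port realizes $D\cap[n-1]$). The final conclusion is unaffected, because being an interval, being a parity interval, and the gap lengths are all preserved by the reversal $D\mapsto\{n-d:d\in D\}$; but the convention should be declared so the surgery formulas do not look misattributed.
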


\smallskip\noindent\textbf{Previous results beyond matchings realizable degree constraints}\quad
With the answer to the above question being negative, new algorithms need to be devised for the WGFP with degree constraints that are not intervals or parity intervals. 
Unlike the weighted matching problem and the weighted $b$-matching problem for which various types of algorithms have been developed,
only a few algorithms have been presented for the more general and challenging WGFP:
For the cardinality version of WGFP, i.e., the WGFP where each edge is assigned weight $1$, 
Dudycz and Paluch introduced a polynomial-time algorithm for this problem with degree constrains having gaps of length at most $1$, which leads to a pseudo-polynomial-time algorithm for the WGFP with non-negative integral edge weights~\cite{Dudycz2018}. 

The algorithm in~\cite{Dudycz2018} is based on a structural result showing that if a factor is not optimal, then a factor of larger weight can be found  by a local search,
which can be done in polynomial time. 
However, it is not clear how much larger the weight of the new factor is. 
In order to get an optimal factor, the algorithm needs to repeat local searches iteratively until no better factors can be found, and the number of local searches is bounded by the total edge weight, which makes the algorithm pseudo-polynomial-time. 
Later, in an updated version~\cite{dudycz2017optimal},  by carefully assigning edge weights, the algorithm was improved to be weakly polynomial-time with a running
time $O(\log W mn^6)$, where $W$ is the largest edge weight, $m$ is the number of edges and $n$ is the number of vertices.
Later, Kobayash extended the algorithm to a more general setting called jump system intersections~\cite{kobayashi2023optimal}.

When degree constraints do not have gap integers of different
parities\footnote{An integer $p$ is a gap integer of a constraint $D$ if $p\notin D$ and
there exist some integers $p_1$ and $p_2$ such that $p_1<p<p_2$ and $p_1, p_2\in
D$.}, Szab\'{o} gives a strongly polynomial  time algorithm for the WGFP with rational weights~\cite{Szabo2009}.
The algorithm is based on a reduction to the weighted $H$-factor problem, which
can be solved by a primal-dual
algorithm~\cite{pap2004tdi}.\footnote{\cite{Szabo2009} claims the result for
real weights while Pap's result only works for integer weights but we believe it
can be extended to the rationals.}

\smallskip\noindent\textbf{Our main contribution}\quad 
In this paper, we give another \emph{strongly} polynomial-time algorithm for WGPF with real-valued edge weights. 
Let $p\geq 0$ be an arbitrary integer. Consider the following two types of degree
constraints $\{p, p+1, p+3\}$ and $\{p, p+2, p+3\}$ (of arbitrary arity). We will call them  \emph{type-1}
and  \emph{type-2} respectively.
These are the ``smallest'' degree constraints
that are provably not  matching realizable.

\begin{theorem}\label{thm:tract}
There is a strongly polynomial-time algorithm
for the $\wgfp$ with real-valued edge weights where each degree constraint is an interval, a parity interval, a
  type-1, or a type-2 (of arbitrary arities).
  The algorithm runs in time $O(n^6)$
  for a graph with $n$ vertices. 
\end{theorem}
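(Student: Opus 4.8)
\smallskip\noindent\textbf{Proof plan.}\quad The plan is to peel the instance down to a small ``core'' of constraint types and then solve the core by an Edmonds-style primal--dual algorithm whose number of steps is bounded independently of the magnitudes of the weights.

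\smallskip\noindent\textbf{Step 1: reduction to a core instance.}\quad Interval and parity-interval constraints are matching realizable (Definition~\ref{def:matchgate}; see \cite{tutte1954short,cornuejols1988general,Szabo2009}), so I would replace every such vertex by its weighted matchgate; this blows the vertex set up by only a polynomial factor and turns each such vertex into a plain matching vertex (constraint $\{0,1\}$ or $\{1\}$). For a type-1 vertex with constraint $\{p,p+1,p+3\}$, and symmetrically for type-2, I would strip the lower bound $p$ with the standard splitting gadget that forces $p$ auxiliary edges to be used, leaving the relation $\{0,1,3\}$ (respectively $\{0,2,3\}$); since the two types are mirror images of each other under $d\mapsto\deg_G(v)-d$, they can be treated by symmetric arguments. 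After this step every vertex is a matching vertex, a ``$\{0,1,3\}$-vertex'', or a ``$\{0,2,3\}$-vertex''.

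\smallskip\noindent\textbf{Step 2: a primal--dual algorithm for the core.}\quad I would use the linear relaxation given by the box $0\le x_e\le 1$, the degree bounds at the special vertices, the matching constraints at the matching vertices, and a family of Edmonds-type odd-set (blossom) inequalities. The first non-trivial point is to pin down exactly which blossom inequalities are needed so that the resulting polytope is integral for these constraint families: type-1 and type-2 are symmetric $\Delta$-matroids but, having feasible sets of mixed parity, are not even $\Delta$-matroids, so integrality is not automatic and must be argued by showing that every fractional vertex is separated. I would then run a primal--dual scheme: maintain a factor $F$ and a feasible dual, grow an alternating forest from the vertices at which $\deg_F$ violates its constraint, shrink blossoms as they form, augment along an alternating walk --- one that may revisit a vertex, in particular passing through a $\{0,1,3\}$- or $\{0,2,3\}$-vertex twice across its gap --- whenever such a walk joins two deficient vertices, and otherwise change the dual by the largest amount preserving feasibility and complementary slackness. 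Each dual change either extends the forest, creates a blossom, or exposes an augmenting walk, so only polynomially many changes separate consecutive augmentations, and the total deficiency (hence the number of augmentations) is $O(n)$; a careful amortized analysis of a forest-growth phase then yields the claimed $O(n^{6})$ running time. Because every arithmetic operation is an addition or a comparison of input weights, the algorithm is insensitive to whether the weights are rational or real, which is the source of the real-valued strengthening over \cite{Szabo2009,pap2004tdi}. (Equivalently, the core can be handled by a \emph{global} --- hence non-gadget --- reduction to a generalized matching problem on an $O(n^{2})$-vertex auxiliary graph, solved by Edmonds' algorithm in $O((n^{2})^{3})$ time; this does not contradict the characterization of matching realizability above, which concerns fixed local gadgets.)

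\smallskip\noindent\textbf{Main obstacle.}\quad I expect the crux to be the combinatorial heart of Step 2: the augmenting-walk lemma --- that a non-optimal core factor always admits a weight-improving alternating walk --- together with correctness of blossom shrinking at the gapped vertices, where a single vertex may lie twice on the walk and the resulting ``odd'' substructures are not ordinary matching blossoms. The second, and genuinely new, difficulty relative to the local-search algorithm of Dudycz and Paluch \cite{Dudycz2018,dudycz2017optimal} is making the procedure \emph{strongly} polynomial: one must show that each phase makes progress that can be charged without reference to the weights, rather than iterating local searches a number of times bounded by the total weight. The bulk of the work should lie in this primal--dual invariant and its running-time accounting.
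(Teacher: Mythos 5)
Your proposal and the paper follow genuinely different routes, and more importantly your proposal does not actually close the hard gaps it identifies.

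\textbf{What the paper actually does.} The paper does not set up an LP relaxation, blossom inequalities, or a primal--dual scheme at all. Its algorithm (Algorithm~\ref{alg-main}) is a \emph{recursive parity-splitting} procedure: pick a type-1 or type-2 vertex $u$, split its constraint into the two parity intervals $D_u^0$ and $D_u^1$, recursively solve the instance $\Omega_u^0$ (which has one fewer type-1/type-2 vertex), and then search only $O(n)$ further sub-instances $\Omega^F_W$ with $u\in W$, $|W|\le 2$, all of which lie in $\wgfp(\mathscr G)$ and are therefore solvable by a black-box call to an existing matching-realizable solver. The entire burden of strong polynomiality is carried by the structural result Theorem~\ref{Thm:local-global-optimal}: if $F$ is optimal in $\Omega_u^0$ but not in $\Omega$, then an optimal factor already lies in one of those $O(n)$ sub-instances. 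Proving that theorem is the technical heart of the paper, and it is done combinatorially via the $(F,F^\ast)$-basic subgraph machinery (Lemma~\ref{lem:basic-exist}) and a case analysis on subcubic key instances (Theorem~\ref{lem-basic-factor-star}) --- in particular the crucial extra property that a weight-improving basic subgraph can be chosen with $\deg_H(u)\equiv 0 \bmod 2$ when $F\in\Omega_u^0$. There is no LP, no integrality proof, no blossom shrinking anywhere.

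\textbf{Where your proposal falls short.} You flag the augmenting-walk lemma, correctness of blossom shrinking at gapped vertices, LP integrality, and the weight-independent progress measure as ``the crux'' and ``the bulk of the work,'' but you do not supply any of them. Each is a serious open step: (i) type-1/type-2 constraints are not even $\Delta$-matroids, so neither the integrality of your proposed blossom polytope nor even its correct inequality family follows from known results (Pap's H-factor LP is TDI for \emph{integer} weights and the paper itself notes it is not known to extend to the reals without care); (ii) the fact that a non-optimal factor admits a single improving alternating walk already requires, in the paper, the whole $(F,F^\ast)$-basic subgraph apparatus, and the version that is needed for strong polynomiality (the parity-restricted existence in Lemma~\ref{lem:basic-exist}, property~2) is strictly stronger and is false without the precise hypothesis $F\in\Omega_u^0$, as Figure~\ref{fig:1.cc} shows --- a subtlety your scheme has no analogue of; (iii) the claim of a weight-independent bound on the number of dual changes between augmentations and $O(n)$ augmentations is asserted, not argued. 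In addition, your Step~1 is under-specified: ``strip the lower bound $p$ with the standard splitting gadget'' does not obviously produce a $\{0,1,3\}$-vertex of arity~3 from a type-1 vertex of arbitrary arity using only interval/parity-interval auxiliaries, and the parenthetical ``global, non-gadget reduction to a generalized matching problem on an $O(n^2)$-vertex graph'' is left entirely undefined; Theorem~\ref{thm:real} rules out a \emph{local} matching gadget and you acknowledge that, but you give no indication of what the asserted global reduction would be. As it stands, the proposal correctly names the difficulties but resolves none of them, and the route it sketches is not the one the paper takes.

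\textbf{What each approach would buy.} A primal--dual algorithm for this class, if one could actually establish the needed polyhedral integrality and a weight-independent progress charge, would be an interesting alternative and might expose more structure (dual certificates, min-max theorems). The paper's approach deliberately sidesteps polyhedral questions entirely: it reduces to two existing black boxes (Cornu\'ejols' decision algorithm and a $\wgfp(\mathscr G)$ solver) and concentrates the novelty into a purely combinatorial theorem about when a locally optimal factor can be promoted to a global optimum by a bounded search. That is why it gets real-valued weights for free and why the strong polynomiality is transparent: the recursion depth is $\le n$ and each level makes $O(n)$ black-box calls, independent of the weights.
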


We remark that for type-1 and type-2 constraints, there are no gap integers of different parities. Thus, the WGPF with type-1 and type-2 constraints and rational edge weights are known to be strongly-polynomial time solvable by a reduction to the weighted $H$-factor problem. 
However, our result gives a direct  algorithm for WGFP based on combinatorial structures of graph factors, which works for all real-valued edge weights. 

The requirement  of  degree constraints in our result may look overly specific. 
However,  the scope of our algorithm is not narrow. 
First, our result implies a complexity dichotomy for the WGFP on all subcubic graphs (see the following Theorem~\ref{thm:dich}), which for many is a large and interesting class of graphs. 
More importantly, there are interesting problems arising from applied areas that are encompassed by the WGPF with constraints considered in this paper. 

For instance, the \emph{terminal backup problem} from network design is the following problem. Given a graph consisting of terminal nodes, non-terminal nodes, and edges with non-negative costs. 
The goal is to find a subgraph with the minimum total cost such that each
terminal node is connected to at least one other terminal node (for the purpose
of backup in applications). 
It is known that an optimal solution of the terminal backup problem consists of edge-disjoint  paths containing 2 terminals and stars containing 3 terminals~\cite{xu2008survivable}. 
In other words, in an optimal subgraph of the terminal backup problem, each terminal node has degree $1$ and each non-terminal node has degree $0$, $2$ or $3$. 
Thus, the terminal backup problem can be expressed as a WGFP with degree constraints $\{1\}$ and $\{0, 2, 3\}$ (both of arbitrary arities). 
A weakly polynomial-time algorithm was given for the terminal backup problem in~\cite{anshelevich2011terminal}. Our result gives a  strongly polynomial-time algorithm for this problem.

Our algorithm is a recursive algorithm, reducing the problem to a smaller sub-problem of itself by fixing the parity of degree constraints on vertices.
Its correctness is based on a delicate structural result, which is stronger than that of~\cite{dudycz2017optimal}.\footnote{The result in~\cite{dudycz2017optimal} holds for the more general WGFP with  all degree constraints having gaps of length at most $1$, while our result only works for the WGFP with interval, parity interval, type-1 and type-2 degree constraints.}
 Equipped with this result, 
 our algorithm can directly find an \emph{optimal} factor (not just a better one) of an instance of a larger size by performing only one local search from an optimal factor of a smaller instance. 
 Here, the important part is not how to find a better factor by local search (the
 main result of \cite{dudycz2017optimal}) but rather how to ensure that the
 better factor obtained by only one local search is actually optimal under certain assumptions. 
 This is the key to making our algorithm strongly polynomial.
In addition, as a by-product, we give a simple proof of the  result of~\cite{dudycz2017optimal} for the special case of WGFP with interval, parity interval, type-1 and type-2 degree constraints by reducing the problem to WGFP on subcubic graphs and utilizing the equivalence between 2-vertex connectivity and 2-edge connectivity of subcubic graphs.


%

\smallskip\noindent\textbf{Relation with (edge) constraint satisfaction problems}\quad
The graph factor problem is encompassed by a special case of the Boolean
constraint satisfaction problem (CSP), called edge-CSP, in which
every variable appears in exactly two constraints~\cite{istrate1997looking, Feder01:tcs}.
When every constraint is symmetric (i.e, the value of the constraint only depends on the Hamming weight of its input), the Boolean edge-CSP is a graph factor problem. 

For general Boolean edge-CSPs, 
Feder showed that the problem is NP-complete if a constraint that is not a
$\Delta$-matroid occurs, except for those that are tractable by Schaefer's
dichotomy theorem for Boolean CSPs~\cite{schaefer1978complexity}. 
In a subsequent
line of work~\cite{Dalmau03:mfcs,geelen2003linear,Feder06:sidma,Dvorak15:icalp},
tractability of Boolean edge-CSPs has been established for special classes of
$\Delta$-matroids, most recently for even $\Delta$-matroids~\cite{kkr2018even}.
A complete complexity classification of Boolean edge-CSPs is still open with the conjecture that all $\Delta$-matroids are tractable. 
A degree constraint (i.e., a symmetric constraint) is a $\Delta$-matroid if and only if it has  gaps of length at most $1$.
Thus, the above conjecture holds for symmetric Boolean edge-CSPs by Cornu{\'e}jols' result on the general factor problem~\cite{cornuejols1988general}.
A complexity classification for weighted Boolean edge-CSPs is certainly a more challenging goal: 
The complexity of weighted Boolean edge-CSPs with
even $\Delta$-matroids as constraints is still open.
%
Our result in Theorem~\ref{thm:tract} gives a tractability result for weighted
Boolean edge-CSPs with certain
symmetric $\Delta$-matroids as constraints.
Combining our main result with known results on Boolean valued CSPs~\cite{cohen2006complexity}, 
we obtain a complexity dichotomy for weighted Boolean edge-CSPs with symmetric constraints of arity no more than $3$, i.e., the WGFP on subcubic graphs. 

Let $D$ be a degree constraint of arity at most $3$. If $D\neq\{0, 3\}$, then $D$ is an interval, a parity interval, a type-1, or a type-2. Thus, if the constraint $\{0, 3\}$ (of arity 3) does not occur, then the WGFP is strongly polynomial-time solvable by our main theorem.
Otherwise, the constraint $\{0, 3\}$ occurs. 
In this case, for a vertex $v$ labeled by $\{0, 3\}$,  the three edges incident
to it must take the same assignments in a feasible factor (i.e., the three edges
are all either present or absent in any factor).
Thus, the vertex $v$ can be viewed as a Boolean variable  and it appears in three other degree constraints connected to it. 
By viewing all vertices with $\{0, 3\}$ as variables appearing three times and the other edges as variables appearing twice, 
the WGFP becomes a special case of valued CSPs where some variables appear three times and the other variables appear twice.
It is known that once variables are allowed to appear three times in a CSP, then
they can appear arbitrarily many times \cite{Dalmau03:mfcs}. 
Thus, the WGPF with $\{0, 3\}$ occurring is equivalent to a standard (non-edge) CSP~\cite{Feder98:monotone}.
By the dichotomy theorem for valued CSPs~\cite{cohen2006complexity}, one can check that the problem is tractable if and only if  for every degree constraint $D$ of arity $k\leq 3$, $D\subseteq \{0, k\}$. Thus, we have the following complexity dichotomy.

\begin{theorem}\label{thm:dich} 
The $\wgfp$ on subcubic graphs is strongly polynomial-time solvable if \begin{enumerate}
    \item the degree constraint $\{0,3\}$ of arity $3$ does not occur, 
    \item or for every degree constraint $D$ of arity $k\leq 3$, $D\subseteq \{0, k\}$.
\end{enumerate} Otherwise, the problem is NP-hard.
\end{theorem}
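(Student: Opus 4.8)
The plan is to separate the three regimes of the statement and, in each, to invoke a result already in hand: Theorem~\ref{thm:tract} when $\{0,3\}$ does not occur, an elementary structural argument (or, equivalently, the valued CSP toolkit) when $\{0,3\}$ occurs but every constraint lies inside $\{0,k\}$, and the dichotomy for valued CSPs of~\cite{cohen2006complexity} for the NP-hardness. One preliminary observation is used throughout: if some $\pi(v)=\emptyset$ the instance is infeasible (and this is detected in linear time), so we may assume every degree constraint is nonempty.

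Suppose first that the degree constraint $\{0,3\}$ of arity $3$ does not occur. I would show that on a subcubic graph every remaining degree constraint is an interval, a parity interval, a type-1, or a type-2, so that Theorem~\ref{thm:tract} applies verbatim and gives a strongly polynomial, $O(n^6)$-time algorithm. This is a finite check over the nonempty subsets of $\{0,1,2,3\}$: such a set can have a gap of length $\ge 2$ only if it equals $\{0,3\}$ (of arity $3$), which is excluded; for every other set one reads off directly that each singleton $\{j\}$, each pair $\{j,j{+}1\}$, and each of $\{0,1,2\}$, $\{1,2,3\}$, $\{0,1,2,3\}$ is an interval, that $\{0,2\}$ and $\{1,3\}$ are parity intervals, that $\{0,1,3\}=\{p,p{+}1,p{+}3\}$ with $p=0$ is a type-1, and that $\{0,2,3\}=\{p,p{+}2,p{+}3\}$ with $p=0$ is a type-2; constraints of arity $<3$ are special cases of these.

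Now suppose $\{0,3\}$ occurs. As explained in the discussion preceding the statement, a vertex $v$ carrying $\{0,3\}$ forces its three incident edges to take a common Boolean value in any factor, so $v$ behaves as a Boolean variable of occurrence three, while every edge behaves as a Boolean variable of occurrence two. Using the gadget reduction of~\cite{Dalmau03:mfcs} (occurrence three simulates arbitrary occurrence) together with~\cite{Feder98:monotone} (an edge-CSP with unbounded occurrence coincides with a standard CSP), I would establish that the $\wgfp$ on subcubic graphs in which $\{0,3\}$ occurs is polynomial-time equivalent to the Boolean valued CSP whose valued constraint language consists of all real-valued unary cost functions (carrying the edge weights) together with, for every degree constraint $D$ of arity $k$ that occurs non-trivially, the crisp symmetric relation $R_D=\{a\in\{0,1\}^k:\text{the Hamming weight of }a\text{ lies in }D\}$; since the copying gadgets carry weight zero, this correspondence is weight-exact. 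Feeding this language into the dichotomy of~\cite{cohen2006complexity} — where the presence of all unary cost functions places us in the conservative setting — one checks that tractability holds exactly when every such $R_D$ admits one of the multimorphisms that survive in the conservative Boolean regime (in particular the submodular pair $\langle\min,\max\rangle$), which for symmetric relations of arity at most $3$ forces $R_D\subseteq\{(0,\dots,0),(1,\dots,1)\}$, i.e. $D\subseteq\{0,k\}$. In this tractable case the machinery is not even needed: if every $\pi(v)\subseteq\{0,\deg_G(v)\}$ then a factor is simply a union of connected components of $G$ subject to the forcing imposed by the singletons $\{0\}$ and $\{\deg_G(v)\}$, and an optimal one is produced in strongly polynomial time directly, after deleting any vacuous full constraint $[\deg_G(v)]$. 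If instead some occurring $D$ of arity $k\le 3$ has $D\not\subseteq\{0,k\}$ (and $D$ is not the vacuous full set, which we delete), then $R_D$ is neither submodular nor preserved by the majority/minority pair, so the dichotomy yields NP-hardness; concretely, $R_D$ together with the unary terms and the variable identifications coming from $\{0,3\}$ expresses maximum-weight independent set.

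The three regimes are exhaustive and consistent: $\{0,3\}$ absent is the first case; $\{0,3\}$ present with every constraint inside $\{0,\deg_G(v)\}$ (to which $\{0,3\}$ itself belongs) is the second; and $\{0,3\}$ present together with some constraint violating $D\subseteq\{0,k\}$ is the NP-hard case. I expect the main obstacle to be the third paragraph: making the passage to a standard valued CSP rigorous — in particular checking that the classical copying/occurrence gadgets of~\cite{Dalmau03:mfcs,Feder98:monotone}, designed for crisp problems, transport optimal weight correctly once the edge weights sit on the unary terms — and then matching the abstract tractability criterion coming out of~\cite{cohen2006complexity} with the concrete condition $D\subseteq\{0,k\}$, which amounts to verifying that no symmetric relation of arity at most $3$ other than a subset of $\{(0,\dots,0),(1,\dots,1)\}$ survives as a tractable case once all unary cost functions are available. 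The enumeration in the second paragraph is, by contrast, entirely routine.
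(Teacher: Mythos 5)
Your argument follows the paper's sketch closely: the same enumeration of subsets of $\{0,1,2,3\}$, the same contraction of $\{0,3\}$-vertices into Boolean variables with three occurrences, the same appeal to~\cite{Dalmau03:mfcs,Feder98:monotone}, and the Boolean VCSP dichotomy of~\cite{cohen2006complexity}. The observation that when every $\pi(v)\subseteq\{0,\deg_G(v)\}$ a factor is simply a union of connected components (subject to the forcing of $\{0\}$ and $\{\deg_G(v)\}$) is a nice direct argument the paper does not spell out.

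The hardness half, though, contains a gap that the paper's ``one can check'' conceals and your elaboration exposes. You claim that if a symmetric relation of arity at most $3$ admits one of the conservative Boolean multimorphisms (pointing to $\langle\min,\max\rangle$), then it is contained in $\{(0,\dots,0),(1,\dots,1)\}$. That implication is false: for $D=\{1\}$ of arity $2$ (so $D\not\subseteq\{0,2\}$), the relation is binary disequality $\{(0,1),(1,0)\}$, and both it and every unary cost function $c$ are preserved by the ternary multimorphism $\langle\mathrm{Mj},\mathrm{Mj},\mathrm{Mn}\rangle$ (majority, majority, minority): for unary $c$ one has $2c(\mathrm{Mj}(a,b,c))+c(\mathrm{Mn}(a,b,c))=c(a)+c(b)+c(c)$ for all $a,b,c\in\{0,1\}$, and disequality is closed under both majority and minority componentwise. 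Concretely, with $\{0,3\}$ and the arity-$2$ $\{1\}$ the instance collapses (after the $\{0,3\}$-identification) to weighted $2$-colouring, which is solvable in linear time; your independent-set reduction does not fire because disequality, equality and unary costs cannot express NAND. So the NP-hardness direction is not established as written: one needs to check each small symmetric relation against the full multimorphism list of~\cite{cohen2006complexity} rather than $\langle\min,\max\rangle$ alone, and the $\{1\}$-of-arity-$2$ case (like the vacuous full constraint you already set aside) must be handled separately or conceded. The same concern applies to the paper's own one-line justification of this theorem.
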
 


\smallskip\noindent\textbf{Organization}\quad 
In Section~\ref{sec:pre}, we present basic definitions and notation. 
In Section~\ref{sec:alg}, we describe our algorithm and give a structural result
for the WGFP that ensures the correctness and the strongly polynomial-time running time of our algorithm. 
In Section~\ref{sec:thm}, we introduce basic augmenting subgraphs as an analogy of augmenting paths for weighed matchings and give a proof of the structural result. The proof is based on a result regarding the existence of certain basic factors for subcubic graphs, for which we give a proof sketch in Section~\ref{sec:thm1}.
Finally, we discuss matching realizability and its relation with $\Delta$-matroids in Appendix~\ref{sec:matroid}.

\section{Preliminaries}\label{sec:pre}

Let $\sD$ be a (possibly infinite) set of degree constraints. 

\begin{definition}
The \emph{weighted general factor problem} parameterized by
  $\sD$, denoted by $\wgfp(\sD)$, is the following computational problem. An
  instance is a triple $\Omega=(G, \pi, \omega)$, where $G=(V, E)$ is a graph, 
 $\pi: V\to\sD$ assigns to every $v\in V$ a degree constraint
  $D_v\in\sD$ of arity $\deg_G(V)$, 
  and $\omega:E\to\RR$ assigns to every $e\in E$ a real-valued weight $w(e)\in\RR$. The
  task is to find, if one exists, a general factor $F$ of $G$  such that the
  total weight of edges in $F$ is maximized. 
  
  The \emph{general factor problem}  $\gfp(\sD)$ is 
 the decision version of $\wgfp(\sD)$; i.e., deciding whether a general factor
  exists or not.
\end{definition}

Suppose that $\Omega=(G, \pi, \omega)$ is a WGFP instance.
If $F$ is a general factor of $G$ under $\pi$, then we say that $F$ is a factor of $\Omega$, denoted by $F\in \Omega$.
In terms of this inclusion relation, $\Omega$ can be viewed as a set of subgraphs of $G$.
We extend the edge weight function $\omega$ to subgraphs of $G$.
For a subgraph $H$ of $G$, its weight $\omega(H)$ is $\sum_{e\in E(H)} \omega(e)$ ($\omega(H)=0$ if $H$ is the empty graph). 
If $H$ contains an isolated vertex $v$, then $\omega(H)=\omega(H')$, where $H'$ is the graph obtained from $H$ by removing $v$.
Moreover, $H\in \Omega$ if and only if $H'\in \Omega$. 
In the following, without other specification, we always assume that a factor does not contain any isolated vertices. 
The optimal value of $\Omega$, denoted by $\opt(\Omega)$, is $\max_{F\in \Omega} \omega(F)$.
We define $\opt(\Omega)=-\infty$ if $\Omega$ has no factor. 
A factor $F$ of $\Omega$ is \emph{optimal} in $\Omega$ if $\omega(F)=\opt(\Omega)$.
For a  WGFP instance $\Omega'=(G', \pi', \omega')$, 
where $G'\subseteq G$\footnote{We use the term
``subgraph'' and notation $G'\subseteq G$ throughout for the standard meaning of a ``normal'' subgraph i.e., if $G=(V', E')$ and $G=(V, E)$ then $G'\subseteq G$ means $V'\subseteq V$ and $E'\subseteq E$.} and $\omega'$ is the restriction of $\omega$ on the edges of $G'$,
we say $\Omega'$ is a \emph{sub-instance} of $\Omega$, 
denoted by $\Omega'\subseteq \Omega$, if $F\in \Omega$ for every  $F\in \Omega'$.
In particular, $\Omega'$ is a subset of $\Omega$ by viewing them as two sets of subgraphs of $G$.
If $\Omega'\subseteq \Omega$, then $\opt(\Omega')\leq \opt(\Omega)$. 

For two WGFP instances $\Omega_1=(G, \pi_1, \omega)$ and $\Omega_2=(G, \pi_2, \omega)$, we use $\Omega_1\cup \Omega_2$ to denote the union of factors of these two instances, i.e.,  $\Omega_1\cup \Omega_2=\{F\subseteq G \mid F\in \Omega_1 \text{ or }  F\in \Omega_2\}$, and $\Omega_1\cap \Omega_2$ to denote the intersection, i.e.,  $\Omega_1\cap \Omega_2=\{F\subseteq G \mid F\in \Omega_1 \text{ and }  F\in \Omega_2\}$.
Note that  $\Omega_1\cup \Omega_2$ and  $\Omega_1\cap \Omega_2$  are just sets
of subgraphs of $G$ and may not define  WGFP instances on $G$.

We use $\mathscr{G}_1$ and $\mathscr{G}_2$ to denote the set of  degree constraints that are intervals
and parity intervals, respectively, and
 $\mathscr{T}_1$ and $\mathscr{T}_2$ to denote the set of degree constraints
 that are  type-1 and type-2, respectively.
Let $\mathscr{G}=\mathscr{G}_1\cup \mathscr{G}_2$ and  $\mathscr{T}= \mathscr{T}_1\cup  \mathscr{T}_2$.
In this paper, we study the problem  $\wgfp(\mathscr{G}\cup\mathscr{T})$.

Let $H_1=(V_1, E_1)$ and $H_2=(V_2, E_2)$ be two subgraphs of $G$.
The symmetric difference graph $H_1\Delta H_2$ is the induced subgraph of $G$ induced
by the edge set $E_1\Delta E_2$. Note that there are no isolated vertices in a symmetric difference graph.
When $E_1\cap E_2=\emptyset$, we may write  $H_1\Delta H_2$ as $H_1\cup H_2$. 
When $E_2\subseteq E_1$, we may write  $H_1\Delta H_2$ as $H_1\backslash H_2$.

A \emph{subcubic} graph
is defined to be a graph where every vertex has degree $1, 2$ or $3$.
Unless stated otherwise,
we use $V_G$  and $E_G$ to denote the vertex set and the edge set of a graph
$G$, respectively.

\begin{definition}[2-vertex-connectivity]
A connected graph $G$ is 2-vertex-connected (or $2$-connected) if it has more than $2$ vertices and remains connected by removing any vertex.  
\end{definition}

Menger's Theorem gives an equivalent definition of 2-connectivity, cf.~\cite{diestel10:graph} for a proof.

\begin{theorem}[Menger's Theorem]
A connected graph $G$ is $2$-connected if and only if for any two vertices of $G$, there exists two vertex disjoint paths connecting them (i.e., there is a cycle containing these two vertices). 
\end{theorem}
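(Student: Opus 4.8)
The plan is to prove both directions of the equivalence, treating the ``only if'' direction as the substantive part. For the ``if'' direction, suppose every pair of vertices is joined by two internally vertex-disjoint paths. Then $G$ certainly has more than two vertices (two internally disjoint paths between distinct vertices force at least one internal vertex, hence a third vertex). To see $G$ remains connected after deleting any vertex $w$, take any two vertices $u, v \neq w$: by hypothesis there are two internally vertex-disjoint $u$--$v$ paths, and $w$ can lie on at most one of them, so the other survives in $G - w$. Hence $G - w$ is connected, and $G$ is $2$-connected.

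For the ``only if'' direction, assume $G$ is $2$-connected and fix two vertices $s, t$; the goal is to produce two internally vertex-disjoint $s$--$t$ paths, equivalently a cycle through $s$ and $t$. I would argue by induction on the distance $d(s,t)$ in $G$. In the base case $d(s,t) = 1$, the edge $st$ is one $s$--$t$ path; since $G - s$ is connected and has a vertex other than $t$ (as $|V_G| > 2$), and in fact one shows $G - st$ is still connected (a $2$-connected graph on at least three vertices has no bridge — deleting a bridge endpoint would disconnect it), there is an $s$--$t$ path in $G$ avoiding the edge $st$, and its internal vertices are disjoint from the trivial path $st$. For the inductive step with $d(s,t) = k \geq 2$, pick a neighbor $w$ of $t$ on a shortest $s$--$t$ path, so $d(s,w) = k-1$. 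By induction there is a cycle $C$ through $s$ and $w$. If $t \in C$ we are done; otherwise, since $G$ is $2$-connected, $G - w$ is connected, so there is an $s$--$t$ path $P$ in $G - w$. Let $x$ be the last vertex of $P$ (traversing from $t$ toward $s$) that lies on $C$; such $x$ exists because $s \in C$. Then the segment of $P$ from $t$ to $x$ meets $C$ only at $x$, and combining it with the two arcs of $C$ from $x$ to $w$ extended by the edge $wt$ yields a cycle through $s$ and $t$ (one goes $x \to \cdots \to s \to \cdots \to w \to t$, the other $x \to \cdots \to w \to t$, picking the arcs of $C$ so that $s$ lies on one of them and then closing via $P$). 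Assembling the arcs carefully gives the desired two internally disjoint $s$--$t$ paths.

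The main obstacle is the bookkeeping in the inductive step: one must choose the arcs of the cycle $C$ and splice in the path $P$ so that the resulting closed walk is genuinely a cycle through both $s$ and $t$, with no unintended repeated vertices. The clean way to handle this is the standard ``$w$-fan'' or ear-type argument — formally, invoke Menger's theorem in its max-flow-min-cut form (no $s$--$t$ separator of size $1$ implies two internally disjoint $s$--$t$ paths) — but since the excerpt presents this as a cited fact (``cf.~\cite{diestel10:graph} for a proof''), I would keep the write-up at the level of the inductive sketch above and refer to the textbook for the detailed vertex-disjointness verification. I do not expect any genuinely hard mathematics here; the content is entirely classical, and the only care needed is in the combinatorial splicing of paths and cycles.
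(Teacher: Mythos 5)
The paper itself gives no proof of this statement: it is quoted as a classical fact with a pointer to Diestel's textbook, so there is no in-paper argument to compare against, and citing the textbook (as you ultimately propose) is exactly what the paper does. Your sketch is the standard induction-on-distance proof of the $2$-connectivity characterization and is essentially correct: the ``if'' direction is fine, the base case via the no-bridge observation is fine, and the inductive step is the classical splicing of an $s$--$t$ path $P$ in $G-w$ into the cycle $C$ through $s$ and $w$. One wording slip is worth fixing: you should take $x$ to be the \emph{first} vertex of $P$ lying on $C$ when traversing $P$ from $t$ toward $s$ (equivalently, the last such vertex when walking from $s$ to $t$); as written, the ``last vertex of $P$ (traversing from $t$ toward $s$) that lies on $C$'' is simply $s$ itself, and then your claim that the $t$--$x$ segment meets $C$ only at $x$ fails. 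With the corrected choice, the closed walk formed by the $t$--$x$ segment of $P$, the arc of $C$ from $x$ to $w$ containing $s$, and the edge $wt$ is genuinely a cycle through $s$ and $t$ (the segment avoids $w$ because $P\subseteq G-w$, and it avoids $C$ except at $x$), so the argument goes through; this is the same fan/splicing argument found in the cited reference.
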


\begin{definition}[Bridge and 2-edge-connectivity]
A bridge of a connected graph is an edge whose deletion makes the graph disconnected.
A connected graph is 2-edge-connected  if it has no bridge. 
\end{definition}

The following theorem is the edge version of Menger's Theorem.

\begin{theorem}
A connected graph $G$ is $2$-edge-connected if and only if for any two vertices of $G$, there exists two edge disjoint paths connecting them. 
\end{theorem}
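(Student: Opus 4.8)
The plan is to prove the two implications separately, reusing the already-stated (vertex) Menger's theorem for the nontrivial direction.

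For the ``only if'' direction, suppose $G$ is $2$-edge-connected and fix two distinct vertices $u,v$. I would argue via the block decomposition of $G$. Since $G$ has no bridge, every block (a maximal connected subgraph with no cut vertex) has at least three vertices and is therefore $2$-vertex-connected. Following the unique path in the block-cut tree from a block containing $u$ to a block containing $v$, one obtains a sequence of distinct blocks $B_1,\dots,B_m$ together with vertices $c_0=u,\,c_1,\dots,c_{m-1},\,c_m=v$ such that $c_{i-1}$ and $c_i$ are distinct vertices of $B_i$, two consecutive blocks meet in exactly the single cut vertex $c_i$, and non-consecutive blocks are vertex-disjoint. Applying Menger's theorem inside each $2$-connected block $B_i$ yields two internally vertex-disjoint $c_{i-1}$--$c_i$ paths in $B_i$; concatenating the first path from each block gives a $u$--$v$ path $P$, and concatenating the second gives a $u$--$v$ path $P'$. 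Since distinct blocks share no edge and the within-block pairs are edge-disjoint, $P$ and $P'$ are edge-disjoint; and they are genuine simple paths because two consecutive block-paths share only the junction $c_i$, which is an endpoint of both.

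For the ``if'' direction, suppose towards a contradiction that some edge $e=xy$ is a bridge, so that $G-e$ is disconnected with $x$ and $y$ in different components. By hypothesis there are two edge-disjoint $x$--$y$ paths; at most one of them uses $e$, so at least one lies entirely within $G-e$, contradicting the fact that $x$ and $y$ lie in different components of $G-e$. Hence $G$ has no bridge, i.e., $G$ is $2$-edge-connected. The degenerate cases ($G$ a single vertex, or $u=v$) are vacuous under the convention that ``two vertices'' means two distinct vertices.

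The only real work is in the ``only if'' direction, and specifically in the structural facts about the block-cut tree: that blocks pairwise intersect in at most a cut vertex, that the block-cut tree is indeed a tree, and that blocks along a block-cut path are pairwise disjoint except at consecutive junctions. These are standard (see, e.g., \cite{diestel10:graph}), so I would invoke them with a one-line justification rather than reprove them; the main point to spell out explicitly is that the concatenations $P$ and $P'$ are simple paths and not merely walks, which is immediate from the disjointness structure of the blocks. One could instead avoid block decomposition by extracting the two edge-disjoint paths directly, via a local rerouting argument along a shortest $u$--$v$ path using the fact that every edge of a bridgeless graph lies on a cycle, but the block-decomposition route is shorter since it reuses the $2$-vertex version of Menger's theorem verbatim.
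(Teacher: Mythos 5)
The paper states this result without proof: it is presented immediately after the vertex form of Menger's theorem as ``the edge version of Menger's Theorem,'' treated as standard background (the citation for Menger's theorem is to a graph theory textbook). So there is no in-paper argument to compare against; the relevant question is simply whether your proof is correct.

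It is. The ``if'' direction via a bridge $e=xy$ and the disconnection of $G-e$ is the standard one-liner. For the ``only if'' direction, your reduction through the block--cut tree to the vertex version of Menger's theorem is a clean and standard route: bridgeless and connected implies every block is $2$-connected, the unique block-path from $u$ to $v$ gives junctions $c_0=u,c_1,\ldots,c_m=v$ with consecutive junctions distinct (this is forced by simplicity of the path in the tree), vertex-Menger gives two internally vertex-disjoint $c_{i-1}$--$c_i$ paths inside each $B_i$, and the disjointness structure of blocks makes the two concatenations edge-disjoint simple paths. The one point you should state a little more carefully is the choice of $B_1$ (and symmetrically $B_m$): if $u$ is itself a cut vertex it lies in several blocks, and you must take $B_1$ to be the block adjacent to the node $u$ on the tree path toward $v$, so that $c_0=u$ and $c_1$ are indeed distinct vertices of $B_1$. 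With that minor clarification the argument is complete. An alternative that avoids the block machinery entirely is to observe directly that in a bridgeless connected graph one can route around any edge of a chosen $u$--$v$ path, but as you note, reusing vertex-Menger via blocks is shorter given that the paper has already stated the vertex version.
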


If two paths connecting a pair of vertices are vertex-disjoint, then they are also edge-disjoint.
Thus, 2-vertex-connectivity implies 2-edge-connectivity.
For subcubic graphs, one can check that 
two edge-disjoint paths are also vertex-disjoint.
Thus, for subcubic graphs, 2-vertex-connectivity is equivalent to 2-edge-connectivity.
In particular, we have the following result. 

\begin{lemma}\label{lem:bridge}
If a connected subcubic graph is not 2-connected, then it contains a bridge. 
\end{lemma}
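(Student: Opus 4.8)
The plan is to prove the contrapositive: if a connected subcubic graph $G$ has no bridge, then $G$ is 2-connected. Since the lemma as stated says ``not 2-connected $\Rightarrow$ contains a bridge,'' we establish ``2-edge-connected $\Rightarrow$ 2-vertex-connected'' for subcubic graphs, after first disposing of the trivial small cases. We may assume $G$ has at least $3$ vertices; if $G$ has at most $2$ vertices then a connected subcubic $G$ is either a single vertex (excluded, since every vertex must have degree at least $1$) or two vertices joined by one edge, which is a bridge, so the hypothesis ``not 2-connected'' forces the bridge conclusion directly. So assume $|V_G| \geq 3$ and, for contradiction, that $G$ is connected, not 2-connected, and bridgeless.

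First I would invoke the equivalence between 2-edge-connectivity and 2-vertex-connectivity for subcubic graphs that the excerpt has already flagged: two edge-disjoint paths in a subcubic graph are automatically vertex-disjoint. Concretely, suppose $G$ is not 2-connected, so there is a cut vertex $w$; let $G - w$ have components $C_1, \ldots, C_r$ with $r \geq 2$. Pick vertices $u \in C_1$ and $v \in C_2$. Every $u$--$v$ path in $G$ passes through $w$, hence any two $u$--$v$ paths share the vertex $w$, so there cannot be two vertex-disjoint $u$--$v$ paths. By the subcubic equivalence (edge-disjoint $\Rightarrow$ vertex-disjoint for paths in a subcubic graph), there also cannot be two edge-disjoint $u$--$v$ paths, for otherwise they would be vertex-disjoint, contradicting the previous sentence. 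By the edge version of Menger's Theorem stated above, $G$ is then not 2-edge-connected, i.e., $G$ contains a bridge.

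The one point that deserves care — and the main obstacle — is justifying the claim that in a subcubic graph two edge-disjoint paths between the same pair of endpoints are vertex-disjoint. The argument is: if two edge-disjoint $u$--$v$ paths $P$ and $Q$ shared an internal vertex $x$, then $x$ would be incident to at least two edges of $P$ and at least two edges of $Q$, and by edge-disjointness these are four distinct edges at $x$, contradicting $\deg_G(x) \leq 3$. (If $x \in \{u,v\}$ one gets at least $1+1+1+1$ edges when the paths are nontrivial on both sides, again exceeding $3$; the degenerate cases where a path is a single edge are handled by noting $u \neq v$ and checking directly.) Thus the only thing to be genuinely careful about is the endpoint bookkeeping, which is why it is cleanest to choose $u, v$ in distinct components of $G - w$ so they are truly ``internal'' to the obstruction and far apart. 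With that established, the chain ``not 2-connected $\Rightarrow$ no two edge-disjoint $u$--$v$ paths $\Rightarrow$ not 2-edge-connected $\Rightarrow$ has a bridge'' closes the proof.

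As an alternative route that avoids Menger entirely, I could argue directly: if $w$ is a cut vertex of the connected bridgeless subcubic graph $G$, then $\deg_G(w) \in \{2,3\}$. If $\deg_G(w) = 2$ with incident edges $e_1 = ww_1$ and $e_2 = ww_2$, then $G - w$ has exactly two components (one containing $w_1$, one containing $w_2$), and $w_1$ lies in a component reachable from the rest of $G$ only through $e_1$, so $e_1$ is a bridge — contradiction. If $\deg_G(w) = 3$, removing $w$ leaves at most three components, and by the pigeonhole some single edge at $w$ is the unique connection to one component, making that edge a bridge — contradiction. Either way we contradict bridgelessness, proving the lemma. I expect the Menger-based version is the more transparent one to write up, but I would keep this direct degree-counting argument in reserve in case the endpoint bookkeeping in the Menger version becomes awkward.
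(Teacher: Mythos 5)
Your Menger-based argument is exactly the route the paper takes: the paper states the lemma immediately after observing that, for subcubic graphs, two edge-disjoint paths between the same endpoints are automatically vertex-disjoint (so 2-edge-connectivity and 2-connectivity coincide), and you supply precisely the degree-$4$-at-a-shared-internal-vertex counting that justifies that observation. Your alternative direct argument via the degree of the cut vertex is also correct, and arguably more elementary since it bypasses Menger, but the Menger version is what the paper has in mind and is the natural one to write down given the surrounding text.
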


The following fact regarding 2-connected  graphs will also be used. 
\begin{lemma}\label{lem:extra-path}
Let $G=(V_G, E_G)$ be a 2-connected  graph, $H=(V_H, E_H)\subseteq G$,
and $u\in V_H$. If $\deg_H(u)=2<\deg_G(u)=3$, then there exists 
a path $p_{uw}=(V_{p_{uw}}, E_{p_{uw}})\subseteq G$ with endpoints $u$ and $w$ for some $w\in V_H$   such that $E_{p_{uw}}\cap E_H=\emptyset$.
\end{lemma}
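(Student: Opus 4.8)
The plan is to exhibit the desired path by exploiting the gap between $\deg_H(u) = 2$ and $\deg_G(u) = 3$: let $e = uv$ be the unique edge of $G$ incident to $u$ that does not lie in $H$, and let $u'$ be the other endpoint of $e$. If $u' \in V_H$, we may simply take $p_{uw} = e$ with $w = u'$, and $E_{p_{uw}} \cap E_H = \emptyset$ trivially. So the interesting case is $u' \notin V_H$. Here I would start building a path out of $u$ along $e$, then continue greedily through vertices outside $V_H$, always using edges not in $E_H$ (which is automatic, since a vertex outside $V_H$ is incident to no edge of $H$), until the walk is forced to either re-enter $V_H$ or close up. The point is to argue it must re-enter $V_H$ at some vertex $w$, and that we can extract a genuine path to $w$ whose edges avoid $E_H$.

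First I would set $W = V_G \setminus V_H$ and consider the connected component $C$ of $u'$ in the induced subgraph $G[W]$. Since $G$ is $2$-connected, deleting $u$ does not disconnect $G$; in particular the vertices of $C$ are still connected to the rest of $G - u$, so some vertex of $C$ has a neighbour in $V_G \setminus (\{u\} \cup W) = V_H \setminus \{u\}$. (If instead $C$'s only neighbours in $G$ were $u$ and vertices of $C$ itself, then $u$ would be a cut vertex separating $C$ from $V_H$, contradicting $2$-connectivity — here one must note $|V_G| > 2$ so $V_H \neq \emptyset$ and the separation is nontrivial.) Pick such a vertex $c \in C$ with a neighbour $w \in V_H$, let $f = cw$ be the corresponding edge (note $f \notin E_H$ since $c \notin V_H$), take any path $q$ from $u'$ to $c$ inside $G[C]$ (all of whose edges avoid $E_H$, as both endpoints of each such edge lie in $W$), and set $p_{uw} = u\,e\,u' \cdots (q) \cdots c\,f\,w$. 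Every edge of this walk lies outside $E_H$, and since $u' , \dots, c \in W$ are distinct from $u$ and from $w \in V_H$, and the portion $q$ is already a path, the whole thing is a path with the required endpoints.

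The main obstacle is the degenerate case handling: one must be careful that $w \neq u$ (guaranteed because $w \in V_H$ but the argument only places $w$ via a neighbour relation from $C$, and $u$'s only edge to $W$ is $e$, which we have already used — so if the chosen $c$ were adjacent to $u$ we would just reroute, but cleaner is to observe $w$ is found in $V_H \setminus \{u\}$ by the cut-vertex argument), and that $p_{uw}$ is a genuine path and not merely a walk (handled by taking $q$ to be a path in the component $C$ and noting the vertex sets $\{u\}$, $V(q) \subseteq W$, and $\{w\}$ are pairwise disjoint). A secondary subtlety is ensuring $V_H \neq \emptyset$ and more specifically $V_H \setminus \{u\} \neq \emptyset$, so that $H$ actually has a vertex for the path to return to; this follows from $\deg_H(u) = 2$, which forces $H$ to contain at least two neighbours of $u$. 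Once these boundary cases are dispatched, the construction above is immediate.
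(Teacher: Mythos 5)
Your proof is correct. It is a close cousin of the paper's argument but not identical: both of you start from the unique non-$H$ edge $e=(u,v)$ out of $u$, dispose of the case $v\in V_H$ immediately, and then route a path from $v$ through $V_G\setminus V_H$ back into $V_H$. The paper does this via the path form of $2$-connectivity: it takes a $v$--$u$ path avoiding $e$ and truncates it at the first $V_H$-vertex $w$, then must do a little extra work to rule out $w=u$ (this is where $\deg_G(u)=3$ is used: the last edge of that path into $u$ would have to be an $H$-edge, so its other endpoint would already lie in $V_H$). You instead use the cut-vertex form: take the component $C$ of $v$ in $G[V_G\setminus V_H]$ and observe that, since $G-u$ is connected and $C$ has no edges to $(V_G\setminus V_H)\setminus C$, some vertex of $C$ must have a neighbour $w$ in $V_H\setminus\{u\}$. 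Your version gets $w\neq u$ for free from the component argument and avoids the paper's degree-counting step, at the cost of having to set up the component $C$ and verify it is a proper, nonempty subset of $V(G-u)$ (which you do correctly, using $V_H\setminus\{u\}\neq\emptyset$). Both routes are sound; yours is marginally more self-contained in that it never touches the degree-$3$ hypothesis beyond producing $e$.
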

\begin{proof}
 Since $\deg_H(u)=2<\deg_G(u)=3$, there is an edge $e_{vu}=(v, u)\in E_G$ incident to $u$ such that $e_{vu}\notin E_H$.
 If $v\in V_H$, then the edge $e_{vu}$ is the desired path. 
 Thus, we may assume that $v\notin V_H$.
Since $G$ is 2-connected, there is a path $p_{vu}$ with endpoints $v$ and $u$ such
that $e_{vu}\notin E_{p_{vu}}$.
Since $u\in V_H$, $V_{p_{vu}}\cap V_H\neq \emptyset$.
Let $w$ be the first vertex in the path $p_{vu}$ (within the order of traversing the path from $v$ to $u$) belonging to $V_H$.
Then, $w\neq u$ since $e_{vu}\notin E_{p_{vu}}$ and $\deg_G(u)=3$.
Also, $w\neq v$ since $v\notin V_H$.
Let $p_{vw}\subsetneq p_{vu}$ be the segment with endpoints $v$ and $w$.
Then, $E_{p_{vw}}\cap E_H=\emptyset$.
Let $p_{uw}$ be the path consisting of $e_{vu}$ and $p_{vw}$.
It has endpoints  $u, w\in V_H$, and $E_{p_{uw}}\cap E_H=\emptyset$.
\end{proof}

\section{Algorithm}\label{sec:alg}

We give a recursive algorithm for the problem $\wgfp(\mathscr{G}\cup\mathscr{T})$, using the problems $\wgfp(\mathscr{G})$ and the decision problem $\gfp(\mathscr{G}\cup\mathscr{T})$ as oracles.
Given an instance $\Omega=(G, \pi, \omega)$ of $\wgfp(\mathscr{G}\cup\mathscr{T})$, 
we define the following sub-instances of $\Omega=(G, \pi, \omega)$ that will be used in the recursion.
Recall that $V_G$ denotes the vertex set of the underlying graph $G$.
Let $T_{\Omega}$  denote the set $\{v\in V_G\mid \pi(v)\in \mathscr{T}\}$.
(We may omit the subscript ${\Omega}$ of $T_{\Omega}$ when it is clear from the context.)

For every vertex $v\in T_{\Omega}$, we split the instance $\Omega$ in two by
splitting the degree constraint $\pi(v)$ in two parity intervals. More
precisely, we define 
\begin{align*}
    &D^0_v=\{p_v+1, p_v+3\}  \text{ ~and~ }  D^1_v=\{p_v\}   &&\text{if~~~ } \pi(v)=\{p_v, p_v+1, p_v+3\}\in \mathscr{T}_1;\\
   &D^0_v=\{p_v, p_v+2\}  \text{ ~and~ }  D^1_v=\{p_v+3\}   &&\text{if~~~ } \pi(v)=\{p_v, p_v+2, p_v+3\}\in \mathscr{T}_2.
\end{align*}
We have $D_{v}^0, D_v^1\in \mathscr G_2$.
For $i\in \{0, 1\}$  and $v\in T_{\Omega}$, we define $\Omega_v^i=(G, \pi_v^i, \omega)$ to be the sub-instance of $\Omega$ where $\pi_v^i(x)=\pi(x)$ for every $x\in V_G\backslash \{v\}$ and  $\pi_v^i(v)=D_v^i$.
Then, for every $v\in T_{\Omega}$, we have $\Omega_v^0\cap \Omega_v^1=\emptyset$ and $\Omega_v^0\cup \Omega_v^1=\Omega$.
Moreover, $T_{\Omega_v^0}=T_{\Omega_v^1}=T_{\Omega}\backslash\{v\}$.

Let $F$ be a factor of $\Omega$.
Similarly to above, one can partition $\Omega$  into $2^{|T_\Omega|}$ many
sub-instances according to $F$ such that  each one is an instance of
$\wgfp(\mathscr{G})$ -- for each $v\in T_\Omega$, we choose one of the two splits of
$\pi(v)$ as above. (We note that the algorithm will not consider all exponentially many sub-instances.)
In detail, for  every vertex $v\in T_\Omega$, we define $D^F_v=D^i_v$ where
$\deg_F(v)\in D^i_v$ as follows:
\begin{align*}
    &D^F_v=\{p_v\}  &&\text{if~~~ } \pi(v)=\{p_v, p_v+1, p_v+3\}\in \mathscr{T}_1 \text{ ~and~ } \deg_F(v)=p_v, \\
    &D^F_v=\{p_v+1, p_v+3\}  &&\text{if~~~ } \pi(v)=\{p_v, p_v+1, p_v+3\}\in \mathscr{T}_1 \text{ ~and~ } \deg_F(v)\neq p_v; \\
    & D^F_v=\{p_v+3\}  &&\text{if~~~ } \pi(v)=\{p_v, p_v+2, p_v+3\}\in \mathscr{T}_2 \text{ ~and~ } \deg_F(v)=p_v+3, \\
   &D^F_v=\{p_v, p_v+2\}  &&\text{if~~~ } \pi(v)=\{p_v, p_v+2, p_v+3\}\in \mathscr{T}_2 \text{ ~and~ } \deg_F(v)\neq p_v+3.
\end{align*}
By definition, 
$\deg_F(v)\in D_v^F\subseteq \pi(v)$ and  $D_v^F\in \mathscr{G}_2$.
In fact, $D^F_v$ is the maximal set such that $\deg_F(v)\in D_v^F\subseteq \pi(v)$ and  $D_v^F\in \mathscr{G}_2$.
One can also check that for every $v\in T$,  $\pi(v)\backslash D_v^F\in \mathscr{G}_2$, and moreover for every $p\in D^F_v$ and $q\in \pi(v)\backslash D_v^F$, $p \not\equiv q \mod 2$. 

For every $W\subseteq T_{\Omega}$, we define $\Omega^F_W=(G, \pi^F_W, \omega)$ to be the sub-instance of $\Omega$ where 
\begin{equation}\label{equ:omega-f-w}
\begin{aligned}
&\pi^F_W(v)=\pi(v)\backslash D_v^F && \hspace{3ex}\text{for~~~} v\in W,\\
&\pi^F_W(v)= D_v^F&& \hspace{3ex}\text{for~~~} v\in T_{\Omega}\backslash W,\\
&\pi^F_W(v)=\pi(v) && \hspace{3ex}\text{for~~~} v\in V\backslash T_{\Omega}.
\end{aligned}
\end{equation}
By definition, for every $W$, $\Omega_W^F$ is an instance of $\wgfp(\mathscr{G})$. 
Moreover, we have $\cup_{W\subseteq T}\Omega_{W}^F=\Omega$ and $\Omega_{W_1}^F\cap \Omega_{W_2}^F=\emptyset$
for every $W_1 \neq W_2$. 
Thus, $\{\Omega^F_W\}_{W\subseteq T_\Omega}$ is a partition of $\Omega$ (viewed as a set of subgraphs of $G$).
When $W=\emptyset$, we write $\Omega_W^F$ as $\Omega^F$, and when $W=\{s\}$ or $W=\{s, t\}$, 
we write  $\Omega_W^F$ as $\Omega^F_s$ or $\Omega^F_{s,t}$ respectively for simplicity. 

Our algorithm is given in Algorithm~\ref{alg-main}.
\begin{algorithm}[!h]\label{alg-main}
\caption{Finding an optimal factor for an instance of $\wgfp(\mathscr{G}\cup\mathscr{T})$}
\SetKwInOut{Input}{Input}\SetKwInOut{Output}{Output}\SetKwInOut{Return}{Return}
\SetKwFunction{DEC}{Decision}\SetKwFunction{OPT}{Optimization}
\SetKwFunction{Main}{Main}
\SetKwProg{Fn}{Function}{:}{}
\Fn{\DEC}{
\BlankLine
\Input {An instance $\Omega=(G, \pi, \omega)$ of $\wgfp(\mathscr{G}\cup\mathscr{T})$.}
\Output {A factor of $\Omega$, or ``No'' if $\Omega$ has no factor.}
}

\BlankLine

\Fn{\OPT}{
\BlankLine
\Input {An instance $\Omega=(G, \pi, \omega)$ of $\wgfp(\mathscr{G})$.}
\Output {An optimal factor of $\Omega$, or ``No'' if $\Omega$ has no factor.}
}

\BlankLine

\Fn{\Main}{
\BlankLine
\Input {An instance $\Omega=(G, \pi, \omega)$ of $\wgfp(\mathscr{G}\cup\mathscr{T})$.}
\Output {An optimal factor $F\in \Omega$, or ``No'' if $\Omega$ has no factor.}
\BlankLine
$T \leftarrow \{v\in V\mid \pi(v)\in \mathscr{T}\}$\;
\eIf{$T$ {\rm is the empty set}}
{\KwRet \OPT($\Omega$)\;}
{Arbitrarily pick $u\in T$\;
\eIf{\DEC$(\Omega^0_u)$ {\rm returns ``No''}}
{\KwRet \Main($\Omega^1_u$)\;}
{$F^{\rm opt} \leftarrow$ \Main($\Omega^0_u$)\;
\ForEach{$v\in T$}{\tcp*[h]{Elements of $T$ can be traversed in an arbitrary order.}\\
$W\leftarrow \{u\}\cup\{v\}$\;
\lIf{$\OPT(\Omega^{F^{\rm opt}}_W) \neq $  {\rm``No''}}{$F'\leftarrow \OPT(\Omega^{F^{\rm opt}}_W)$}
\lIf{$\omega(F')>\omega(F^{\rm opt})$}
{$F^{\rm opt}\leftarrow F'$}
}
\KwRet $F^{\rm opt}$;} 
}}
\end{algorithm}

The key that makes our algorithm running in strongly polynomial time is the  following structural result (Theorem~\ref{Thm:local-global-optimal}) for the problem WGFP$(\mathscr{G}\cup \mathscr{T})$.
It says that given an optimal factor $F$ of $\Omega_u^0$ for some $u\in T_\Omega$, if $F$ is not optimal in $\Omega$, 
then we can directly find an optimal factor of $\Omega$ by searching at most $n$ sub-instances of $\Omega$ which are in $\wgfp(\mathscr{G})$.
Note that the number of searches is independent of the edge weights. 
Thus, the problem of finding an optimal factor in $\Omega$  can be reduced to finding an optimal factor in $\Omega_u^0$, where there is one fewer vertex $u$ with constraints in $\mathscr{T}$. 
By recursively reducing an instance to another with fewer vertices  with constraints in $\mathscr{T}$, we eventually get an instance of $\wgfp(\mathscr{G})$ 
which can be solved in polynomial-time.  
This leads to a strongly polynomial time algorithm for finding an optimal factor.

\begin{theorem}\label{Thm:local-global-optimal}
Suppose that  $\Omega=(G, \pi, \omega)$ is an instance of $\wgfp(\mathscr{G}\cup\mathscr{T})$, $F$ is a factor of $\Omega$ and  $F$ is optimal in $\Omega_u^0$ for some $u\in T_{\Omega}$.
Then a factor $F'$ is optimal in $\Omega$ if and only if  $\omega(F')\geq
  \omega(F)$ and  $\omega(F')\geq \opt(\Omega^F_W)$ for every $W$ where $u\in
  W\subseteq T_{\Omega}$ and $|W|=1$ or $|W|=2$.

In other words, if $F$ is not optimal in $\Omega$, then there is an optimal
  factor of $\Omega$ which belongs to $\Omega^F_W$ for some $W$ where $u\in
  W\subseteq T_{\Omega}$ and $|W|=1$ or $|W|=2$.

\end{theorem}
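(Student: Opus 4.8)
The ``only if'' part of the equivalence is immediate: if $F'$ is optimal in $\Omega$ then, since $\Omega_u^0\subseteq\Omega$ and $\Omega_W^F\subseteq\Omega$ for every $W$, monotonicity of $\opt(\cdot)$ under $\subseteq$ gives $\omega(F')=\opt(\Omega)\ge\opt(\Omega_u^0)=\omega(F)$ and $\omega(F')\ge\opt(\Omega_W^F)$. For the converse, and for the ``in other words'' reformulation on which it rests, I would first record two reductions. Since $\{\Omega_W^F\}_{W\subseteq T_\Omega}$ is a partition of $\Omega$, we have $\opt(\Omega)=\max_{W\subseteq T_\Omega}\opt(\Omega_W^F)$. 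Next, because $F$ is a factor of $\Omega_u^0$ we have $\deg_F(u)\in D_u^0$, and $D_u^F$ is by definition the unique one of $D_u^0,D_u^1$ containing $\deg_F(u)$; hence $D_u^F=D_u^0$. Consequently, for any $W$ with $u\notin W$ we have $\pi_W^F(u)=D_u^F=D_u^0=\pi_u^0(u)$, so $\Omega_W^F\subseteq\Omega_u^0$ and $\opt(\Omega_W^F)\le\opt(\Omega_u^0)=\omega(F)$. Thus, if $F$ is not optimal in $\Omega$ (so $\opt(\Omega)>\omega(F)$), the maximum above is attained by some $W^*$ with $u\in W^*$, and it remains only to show it is also attained by some $W$ with $u\in W$ and $|W|\le 2$; the optimal case of the equivalence then follows immediately.

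To this end I would fix an optimal factor $F^*$ of $\Omega$ minimising $|E(F\Delta F^*)|$, say $F^*\in\Omega_{W^*}^F$, so that $\omega(F^*)=\opt(\Omega)>\omega(F)$ and $F^*\ne F$. First I would check $u\in W^*$: otherwise $\deg_{F^*}(u)\in D_u^F=D_u^0$, so $F^*\in\Omega_u^0$, contradicting $\omega(F^*)>\omega(F)=\opt(\Omega_u^0)$. This is precisely the ``fixing the parity at $u$'' step: since $\pi(u)$, being a type-1 or type-2 constraint, uses only two parities and $D_u^0,D_u^1$ are its two parity classes, escaping $\Omega_u^0$ while staying in $\Omega$ forces $\deg(u)$ into the class $D_u^1$. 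It then remains to show $|W^*|\le 2$; suppose for contradiction $|W^*|\ge 3$. Note that, because for $v\in T_\Omega$ the two parts $D_v^F$ and $\pi(v)\setminus D_v^F$ have different parities, $W^*=\{v\in T_\Omega:\deg_{F\Delta F^*}(v)\text{ is odd}\}$.

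Here the argument must use the analysis of basic augmenting subgraphs developed in Section~\ref{sec:thm}. The plan is to invoke a decomposition of $F\Delta F^*$ into edge-disjoint basic augmenting subgraphs $C_1,\dots,C_k$ of $F$ with the ``independence'' property that $F\Delta\bigcup_{i\in S}C_i\in\Omega$ for every $S\subseteq\{1,\dots,k\}$; since the gains $\omega(F\Delta C_i)-\omega(F)$ are additive over edge-disjoint pieces, they sum to $\omega(F^*)-\omega(F)>0$. If some piece $C_j$ leaves the parity class of $\deg(u)$ unchanged, i.e.\ $\deg_{C_j}(u)$ is even, then $\deg_{F\Delta C_j}(u)\in D_u^0$, so $F\Delta C_j\in\Omega_u^0$; optimality of $F$ in $\Omega_u^0$ then forces the gain of $C_j$ to be $\le 0$, whence $F^*\Delta C_j=F\Delta\bigcup_{i\ne j}C_i$ is a factor of $\Omega$ of weight $\ge\omega(F^*)$, hence optimal, but with strictly fewer edges in its symmetric difference from $F$ — contradicting minimality. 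Ruling this out should force the decomposition to consist of a single basic augmenting subgraph $F\Delta F^*$, and then the crux is the structural fact — this is where the reduction to subcubic graphs in Section~\ref{sec:thm1}, together with Lemma~\ref{lem:bridge} and Lemma~\ref{lem:extra-path}, enters — that a single basic augmenting subgraph of $F$ has odd degree at most at two vertices of $T_\Omega$. Combined with the identification of $W^*$ above, this yields $|W^*|\le 2$, the desired contradiction.

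The step I expect to be the main obstacle is this last one: bounding the number of type-1/type-2 vertices whose parity class is flipped by a single basic augmenting subgraph. This is the ``delicate'' core of the result, and seems to genuinely require passing to subcubic graphs (where $2$-edge- and $2$-vertex-connectivity coincide, Lemma~\ref{lem:bridge}) and the existence theorem for basic factors proved there. A secondary subtlety is making the decomposition of $F\Delta F^*$ into basic augmenting subgraphs and its ``independence'' property precise, and carrying out the parity bookkeeping at $u$ carefully — in particular guaranteeing that when $|W^*|\ge 3$ there really is a peelable piece $C_j$ with $\deg_{C_j}(u)$ even, rather than all pieces simultaneously disturbing the parity at $u$.
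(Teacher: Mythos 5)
Your reductions up to identifying $u\in W^*$ are correct and match the paper's bookkeeping ($D_u^F=D_u^0$, $\opt(\Omega_W^F)\le\omega(F)$ when $u\notin W$, and $F^*\notin\Omega_u^0$). The gap is in the core step bounding $|W^*|$. You invoke a decomposition of $F\Delta F^*$ into edge-disjoint basic augmenting subgraphs $C_1,\dots,C_k$ with the independence property that $F\Delta\bigcup_{i\in S}C_i\in\Omega$ for every $S$. No such decomposition lemma appears in the paper, and in fact it is false as stated: if $v$ is a type-1 vertex with $\pi(v)=\{p,p+1,p+3\}$, $\deg_F(v)=p$, and $\deg_{F^*}(v)=p+3$, then a piece contributing exactly two of the three edges of $F\Delta F^*$ at $v$ would, taken alone, give $\deg_{F\Delta C_j}(v)=p+2\notin\pi(v)$, violating the independence you need. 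Even granting the decomposition, the inference ``ruling out a piece with $\deg_{C_j}(u)$ even forces $k=1$'' does not follow (e.g.\ when $\deg_{F\Delta F^*}(u)=3$ one could have three pieces each meeting $u$ in a single edge), and the ``structural fact'' you then cite — that a basic augmenting subgraph has odd degree at at most two vertices — is part of the \emph{definition} (Definition~\ref{def-basic-argumenting}), not a theorem; it cannot be applied to $F\Delta F^*$ until you have already shown $F\Delta F^*$ \emph{is} one, which is exactly what you were trying to prove.

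The paper's actual route is more direct and avoids any decomposition. Given $F^*$ optimal with $\omega(F^*)>\omega(F')\ge\opt(\Omega_W^F)$ for all small $W$ and $\omega(F^*)>\omega(F)=\opt(\Omega_u^0)$ (so $u\in T_\Omega^{F\Delta F^*}$ with $F\in\Omega_u^0$), it invokes the second part of Lemma~\ref{lem:basic-exist}: under precisely these hypotheses there exists a \emph{single} $(F,F^*)$-basic subgraph $H$ with $\deg_H(u)\equiv 0\bmod 2$. Then $F\Delta H$ is a factor of $\Omega_u^0$ of weight strictly greater than $\omega(F)$, contradicting optimality of $F$ in $\Omega_u^0$ — no minimization of $|E(F\Delta F^*)|$ and no bound on $|W^*|$ is needed. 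The real burden is the existence statement in Lemma~\ref{lem:basic-exist} (proved via the normalization to a key instance on a subcubic graph and Theorem~\ref{lem-basic-factor-star}), and your proposal does not reach it; it instead substitutes an unproved decomposition that does not hold. To fix your argument, replace the decomposition step by a direct appeal to Lemma~\ref{lem:basic-exist}(2), with the contradiction taken against $F$'s optimality in $\Omega_u^0$.
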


\begin{remark}
This result is \emph{stronger} than the main result (Theorem 2) of \cite{dudycz2017optimal}, and it is \emph{not} simply implied by \cite{dudycz2017optimal}.
To clarify this, we give a simple proof outline of Theorem~\ref{Thm:local-global-optimal} here.

In order to prove Theorem~\ref{Thm:local-global-optimal}, it suffices to prove the direction that  if  $\omega(F')\geq
  \omega(F)$ and  $\omega(F')\geq \opt(\Omega^F_W)$ for every $W$ where $u\in
  W\subseteq T_{\Omega}$ and $|W|=1$ or $2$, then $F'$ is optimal in $\Omega$.  We prove this by contradiction. 
  Suppose that $F'$ is not optimal in $\Omega$, and 
$F^\ast$ is an optimal factor of $\Omega$. 
Then, $\omega(F^\ast)>\omega(F')\geq\opt(\Omega^F_W)$ for every  $W\subseteq T_\Omega$ where $|W|\leq 2$.
Also, $\omega(F^\ast)\notin \Omega_u^0$ since $\omega(F^\ast)>\omega(F) =\opt(\Omega_u^0)$. Thus, $\deg_{F^\ast}(u)\not\equiv \deg_F(u) \mod 2$.

By \cite{dudycz2017optimal}, a \emph{canonical path} $M\subseteq F\Delta F^\ast$ with positive weight\footnote{See definition 3 of \cite{dudycz2017optimal}. They are defined as basic augmenting subgraphs (Definition~\ref{def-basic-argumenting}) in this paper.} can be found, and 
then $F\Delta M$ is a factor of $\Omega$ with larger weight than $F$ and $F\Delta M\in \Omega^F_W$ for some $W\subseteq T_\Omega$ where $|W|\leq 2$. 
However, this does not lead to a contradiction. 
To get a contradiction, we need  to show that
the positive weighted canonical path  $M$ (a basic augmenting subgraph) further satisfies
$\deg_M(u)\equiv 0 \mod 2$.
Then, $\deg_{F\Delta M}(u)\equiv \deg_F(u) \mod 2$.
Thus, $F\Delta M$ is a factor with larger weight than $F$ and $F\Delta M \in \Omega_u^0$, which contradicts with $F$ being optimal in $\Omega_u^0$.

The existence of a basic augmenting subgraph  $M$  satisfying $\deg_M(u)\equiv 0 \mod 2$ is formally stated in the second property of Lemma~\ref{lem:basic-exist}. The main technical part of the paper (Section 5.2 of the full paper) is devoted to prove it. 
In Section~\ref{sec:thm1} of this short version, we give an example to illustrate the proof ideas. 
The existence of such a basic augmenting subgraph is highly non-trivial.
In fact, it does \emph{not} hold anymore after a subtle change of the condition ``$F$ is optimal in  $\Omega_u^{\bf 0}$''  to ``$F$ is optimal in $\Omega_u^{\bf 1}$'' for some $u\in T_{\Omega}$. 
We give the following example (see Figure~\ref{fig:1.cc}) to show this.

\begin{figure}[!htbp]
    \centering
    \includegraphics[height=1.6cm]{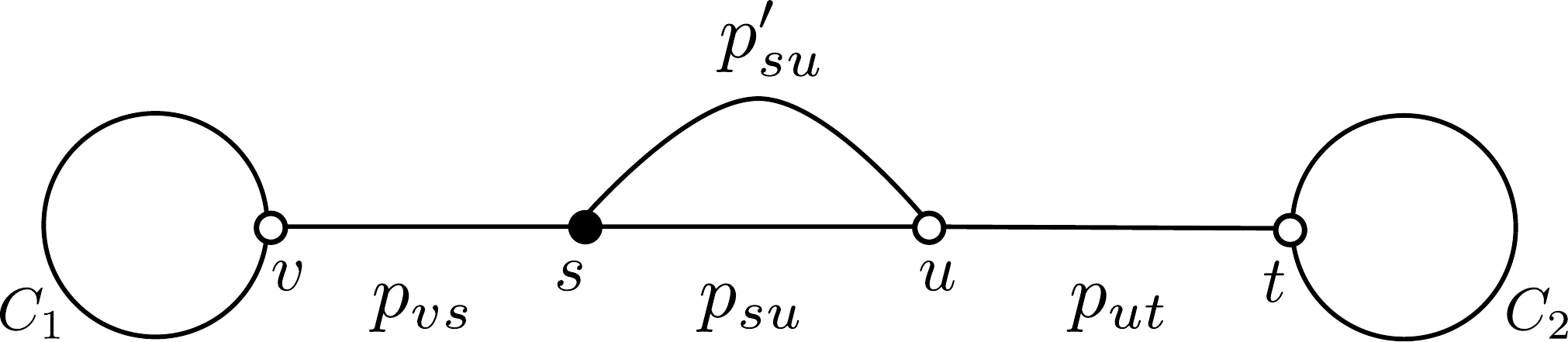}
    \caption{An example that violates Theorem~\ref{Thm:local-global-optimal} when $F$ is optimal in $\Omega_u^1$ instead of $\Omega_u^0$}
    \label{fig:1.cc}
\end{figure}

\vspace{-1ex}

In this instance, $\pi(u)=\pi(v)=\pi(t)=\{0, 1, 3\}$ (denoted by hollow nodes) and $\pi(s)=\{0, 2, 3\}$ (denoted by the solid node), and $\omega(C_1)=\omega(p_{vs})=\omega(p_{su})=\omega(p'_{su})=\omega(p_{ut})=\omega(C_2)=1.$
Inside the cycles $C_1$ and $C_2$, and the paths $p_{vs}$, $p_{su}$, $p_{ut}$, and $p'_{su}$, 
there are other vertices of degree $2$ with the degree constraint  $\{0, 2\}$ so that the graph $G$ is simple. 
We omit these vertices of degree $2$ in Figure~\ref{fig:1.cc}.
In this case, $T_\Omega=\{u,v,s,t\}$.
Consider the sub-instance $\Omega_u^1=(G, \pi_u^1, \omega)$.
We have $\pi_u^1(u)=D_u^1=\{0\}$ since $\pi(u)=\{0, 1, 3\}$.
Then, the only factor $F$ of $\Omega_u^1$ is the empty graph (assuming there are no isolated vertices in factors), and $F$ is not optimal in $\Omega$.
Also, the only optimal factor of $\Omega$ is the graph $G$ and $G\in \Omega_{T_{\Omega}}^F$ where $|T_{\Omega}|=4$. 
Clearly, $\deg_G(u)\not\equiv \deg_F(u) \mod 2$.
One can check that for any factor $F'$ of $\Omega$ with larger weight than $F$, $\deg_{F'}(u)\not\equiv \deg_F(u) \mod 2$.
In other words, there is no basic augmenting subgraph $M$ such that $\deg_M(u)\equiv 0 \mod 2$.
Moreover, one can check that in this case, 
Theorem~\ref{Thm:local-global-optimal} also does not hold.
In other words, the existence of a basic augmenting subgraph satisfying $\deg_M(u)\equiv 0 \mod 2$ is crucial for the correctness of Theorem~\ref{Thm:local-global-optimal}.

\end{remark}

Using Theorem~\ref{Thm:local-global-optimal}, we now prove that Algorithm~\ref{alg-main} is correct.

\begin{lemma}
Given an instance $\Omega=(G, \pi, \omega)$ of $\wgfp(\mathscr G, \mathscr{T})$,
Algorithm~\ref{alg-main} 
returns either an optimal factor of $\Omega$, or ``No'' if $\Omega$ has no factor. 
 \end{lemma}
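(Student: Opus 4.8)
The plan is to prove correctness by induction on $|T_\Omega|$, the number of vertices carrying type-1 or type-2 constraints, tracking both the ``\textbf{No}'' case and the optimality of the returned factor, and relying on Theorem~\ref{Thm:local-global-optimal} for the inductive step.

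\textbf{Base case.} When $T_\Omega=\emptyset$, the instance $\Omega$ lies in $\wgfp(\mathscr{G})$ and the algorithm returns $\OPT(\Omega)$ directly; correctness is immediate from the specification of the \OPT oracle. I would also observe that the decision oracle \DEC is only ever invoked, and the inductive hypothesis only ever applied, to sub-instances with strictly fewer type vertices, since $T_{\Omega_u^i}=T_\Omega\setminus\{u\}$, so the recursion is well-founded.

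\textbf{Inductive step.} Fix $u\in T_\Omega$ as chosen by the algorithm. Recall $\Omega=\Omega_u^0\cup\Omega_u^1$ and $\Omega_u^0\cap\Omega_u^1=\emptyset$. There are two branches. If $\DEC(\Omega_u^0)$ returns ``\textbf{No}'', then $\Omega_u^0$ has no factor, so $\Omega=\Omega_u^1$ as sets of subgraphs, and the algorithm returns $\Main(\Omega_u^1)$, which by the inductive hypothesis is either an optimal factor of $\Omega_u^1=\Omega$ or ``\textbf{No}'' exactly when $\Omega_u^1=\Omega$ has no factor. (Here I should note that \DEC returning a factor, rather than just a yes/no bit, is consistent with its stated output; only the bit is used.) Otherwise $\Omega_u^0$ has a factor, hence so does $\Omega$, so the algorithm will not return ``\textbf{No}'' in this branch and we must show the returned $F^{\mathrm{opt}}$ is optimal in $\Omega$. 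The algorithm sets $F^{\mathrm{opt}}\leftarrow\Main(\Omega_u^0)$, which by the inductive hypothesis is an optimal factor of $\Omega_u^0$ (it is not ``\textbf{No}'' since $\Omega_u^0$ has a factor). Call this initial value $F$; it is a factor of $\Omega$ that is optimal in $\Omega_u^0$ with $u\in T_\Omega$, so Theorem~\ref{Thm:local-global-optimal} applies with this $F$.

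\textbf{Applying the structural theorem.} The loop then, for each $v\in T_\Omega$, forms $W=\{u\}\cup\{v\}$ (so $|W|=1$ when $v=u$ and $|W|=2$ otherwise, and as $v$ ranges over $T_\Omega$ these $W$'s are exactly all sets with $u\in W\subseteq T_\Omega$, $|W|\le 2$), runs $\OPT(\Omega_W^F)$ — valid since $\Omega_W^F\in\wgfp(\mathscr{G})$ by \eqref{equ:omega-f-w} — and updates $F^{\mathrm{opt}}$ to the better of the two whenever $\OPT(\Omega_W^F)$ returns a factor $F'$ with $\omega(F')>\omega(F^{\mathrm{opt}})$. A small technical point I would address carefully: $F^{\mathrm{opt}}$ is reassigned during the loop, but $\Omega_W^F$ is always formed relative to the \emph{fixed} $F$ (the original output of $\Main(\Omega_u^0)$), as the pseudocode superscript indicates, so Theorem~\ref{Thm:local-global-optimal} is being used with a single fixed $F$ throughout. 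After the loop, $\omega(F^{\mathrm{opt}})=\max\bigl(\omega(F),\max_{W}\opt(\Omega_W^F)\bigr)$ over the relevant $W$'s, with $F^{\mathrm{opt}}$ an actual factor achieving this value (it starts as the factor $F$ and is only ever replaced by genuine factors of larger weight). Every factor $F'$ of $\Omega$ with $\omega(F')\ge\omega(F)$ and $\omega(F')\ge\opt(\Omega_W^F)$ for all these $W$ is optimal in $\Omega$ by Theorem~\ref{Thm:local-global-optimal}; $F^{\mathrm{opt}}$ satisfies these inequalities, hence is optimal, completing the induction.

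\textbf{Main obstacle.} The only genuine subtlety — and where I would be most careful — is the bookkeeping around the fixed-versus-updated factor $F$: the theorem must be invoked with the \emph{first} output of $\Main(\Omega_u^0)$, and one must check that $F^{\mathrm{opt}}$ at loop's end indeed meets both hypotheses of the ``if'' direction of Theorem~\ref{Thm:local-global-optimal} simultaneously (it does, because $\omega(F^{\mathrm{opt}})\ge\omega(F)$ trivially from the initialization and a monotone sequence of updates, and $\omega(F^{\mathrm{opt}})\ge\opt(\Omega_W^F)$ because the $W$-iteration compares against $\OPT(\Omega_W^F)$, noting $\OPT(\Omega_W^F)=$ ``\textbf{No}'' forces $\Omega_W^F=\emptyset$ and $\opt(\Omega_W^F)=-\infty$). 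Everything else is a routine unwinding of the recursion and the definitions of the sub-instances $\Omega_u^i$ and $\Omega_W^F$.
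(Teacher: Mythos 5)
Your proof is correct and follows essentially the same route as the paper's: induction on $|T_\Omega|$, with the ``No'' branch handled via $\Omega=\Omega_u^0\cup\Omega_u^1$ and the other branch concluded by applying Theorem~\ref{Thm:local-global-optimal} to the fixed optimal factor $F$ of $\Omega_u^0$ and the post-loop $F^{\rm opt}$. The extra bookkeeping you flag (fixed versus updated factor, and $\OPT(\Omega_W^F)=$ ``No'' meaning $\opt(\Omega_W^F)=-\infty$) is consistent with, and slightly more explicit than, the paper's argument.
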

 
 \begin{proof}
 Recall that for an instance $\Omega=(G, \pi, \omega)$, we define $T_\Omega=\{v\in V_G\mid \pi(v)\in \mathscr{T}\}$ where $V_G$ is the vertex set of $G$ .
 We prove the correctness by induction on the $|T_\Omega|$. 
 
 If $|T_\Omega|=0$, $\Omega$ is an instance of  $\wgfp(\mathscr G)$.
 Algorithm~\ref{alg-main} simply  returns \OPT$(\Omega)$.
 By the definition of the function \OPT, the output is correct. 
 
 Suppose that Algorithm~\ref{alg-main} returns correct results for all instances $\Omega'$ of $\wgfp(\mathscr G, \mathscr T)$ where $|T_{\Omega'}|=k$. 
 We consider an instance $\Omega$ of $\wgfp(\mathscr G, \mathscr T)$ where $|T_\Omega|=k+1$.
 Algorithm~\ref{alg-main} first calls the function \DEC$(\Omega_u^0)$ for some arbitrary $u\in T$. 
 
 We first consider the case that \DEC$(\Omega_u^0)$ returns ``No''. By the definition, $\Omega_u^0$ has no factor. 
 Moreover, since $\Omega=\Omega_u^0\cup \Omega_u^1$, we have $F\in \Omega$ if and only if $F\in \Omega_u^1$.
 Then, a factor $F\in \Omega_u^1$ is optimal in $\Omega$ if and only if it is optimal in $\Omega_u^1$. 
 Note that $\Omega_u^1$ is an instance of $\wgfp(\mathscr G, \mathscr T)$ where $|T_{\Omega_u^1}|=k$.
 By the induction hypothesis, Algorithm~\ref{alg-main} returns a correct result \Main$(\Omega_u^1)$ for the instance  $\Omega_u^1$, which is also a correct result for the instance $\Omega$. 
 
 Now, we consider the case that \DEC$(\Omega_u^0)$ returns  a factor  of $\Omega_u^0$.
 Then, \Main$(\Omega_u^0)$ returns an optimal factor $F$ of $\Omega_u^0$. 
 After the loop (lines 13 to 17)  in Algorithm~\ref{alg-main}, we get a factor
   $F^{\rm opt}$ of $\Omega$ such that $\omega(F^{\rm opt})\geq
   \opt(\Omega_W^F)$ for every $u\in W\subseteq T_{\Omega}$ where $|W|=1$ (when
   $u=v$) or $|W|=2$ (when $u\neq v$) and $\omega(F^{\rm opt})\geq \omega(F)$. 
 By Theorem~\ref{Thm:local-global-optimal}, $F^{\rm opt}$ is an optimal factor of $\Omega$.
 Thus,  Algorithm~\ref{alg-main} returns a correct result. 
 \end{proof}
 
 Now, we consider the time complexity of Algorithm~\ref{alg-main}.
  The size of an instance is defined to be the number of vertices of the underlying graph of the instance. 
 \begin{lemma}
 Run Algorithm~\ref{alg-main} on an instance $\Omega=(G, \pi, \omega)$ of size $n$. Then,
 \begin{itemize}
     \item the algorithm will stop the recursion after at most $n$ recursive steps;
     \item the algorithm will call  \DEC at most $n$ many times, call  \OPT at most $\frac{n(n+1)}{2}+1$ many times, and perform at most $\frac{n(n+1)}{2}$ many comparisons;
     \item the algorithm runs in time $O(n^6)$.
 \end{itemize}
 \end{lemma}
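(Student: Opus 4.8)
The plan is to establish the three claims in order by elementary bookkeeping over the pseudocode of Algorithm~\ref{alg-main}; no further structural input is needed, since correctness has already been proved and the running-time bound is purely a matter of counting oracle calls and charging each a polynomial cost. For the recursion depth, observe that every recursive invocation of \Main\ is made either on $\Omega_u^1$ (when \DEC$(\Omega_u^0)$ returns ``No'') or on $\Omega_u^0$ (otherwise), and in both cases $T_{\Omega_u^i}=T_\Omega\setminus\{u\}$, since $D_u^0,D_u^1\in\mathscr{G}_2\subseteq\mathscr{G}$. Hence the parameter $|T_\Omega|$ decreases by exactly $1$ along each recursive call, the recursion halts as soon as this parameter reaches $0$, and because $|T_\Omega|\le|V_G|=n$ there are at most $n$ recursive steps (and at most $n+1$ invocations of \Main\ in total, the last of which enters the base case $T=\emptyset$ and returns \OPT$(\Omega)$).

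Next I would charge each oracle call to the \Main-invocation that issues it. Every non-base invocation calls \DEC\ exactly once and the base invocation calls it zero times, so \DEC\ is invoked at most $n$ times. For \OPT: the base invocation calls it once, while a non-base invocation with $|T|=j$ calls \OPT\ once for each $v\in T$ in the loop of lines~13--17, and only when the branch \DEC$(\Omega_u^0)\ne$``No'' is taken; in the worst case this contributes $j$ calls at that level. The values of $|T|$ over the at-most-$n$ recursive levels form a strictly decreasing sequence of positive integers bounded by $n$, hence their sum is at most $\sum_{j=1}^{n}j=\frac{n(n+1)}{2}$, giving a total of at most $\frac{n(n+1)}{2}+1$ calls to \OPT. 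The weight comparison $\omega(F')>\omega(F^{\rm opt})$ is performed once per iteration of that same loop, so the number of comparisons is at most $\sum_{j=1}^{n}j=\frac{n(n+1)}{2}$.

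It then remains to bound the cost of a single invocation of each oracle, together with the overhead of forming the sub-instances $\Omega_u^i$ and $\Omega_W^F$. Constructing $\Omega_u^i$ changes a single constraint, and constructing $\Omega_W^F$ only requires the degrees $\deg_F(v)$ for the $O(n)$ vertices of $T$ and the relabelling of their constraints, so each costs $O(n+m)=O(n^2)$. A call to \DEC\ solves an instance of $\gfp(\mathscr{G}\cup\mathscr{T})$, which is polynomial-time solvable because every constraint in $\mathscr{G}\cup\mathscr{T}$ has gaps of length at most $1$ (Cornu{\'e}jols); since \DEC\ is invoked only $n$ times, its total contribution is absorbed by the rest. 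A call to \OPT\ solves an instance of $\wgfp(\mathscr{G})$, which via the gadget reduction to weighted matching with forced-degree vertices and a strongly polynomial matching algorithm is solvable in $O(n^4)$ time on an instance whose underlying graph has $n$ vertices (the reduction increases the size only polynomially, a vertex of degree $d$ contributing an $O(d)$-vertex gadget). Multiplying the $O(n^2)$ bound on the number of \OPT\ calls by the $O(n^4)$ per-call cost, and absorbing the lower-order contributions of \DEC, the comparisons, and the sub-instance constructions, yields the stated $O(n^6)$ running time.

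The only mildly delicate point I anticipate is the per-call bound for \OPT: one must verify that the gadget reduction from $\wgfp(\mathscr{G})$ to weighted matching increases the size by at most a polynomial factor and that the resulting weighted matching problem is solved within $O(n^4)$ by a strongly polynomial algorithm, so that the product with the $O(n^2)$ calls lands exactly at $O(n^6)$. Everything else --- the recursion depth, the call counts, and the arithmetic sums --- is immediate from the structure of Algorithm~\ref{alg-main}.
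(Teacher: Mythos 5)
Your counting analysis (the first two bullets) is exactly the paper's argument: $|T_\Omega|$ drops by one per recursive call, giving at most $n$ recursive steps, at most $n$ calls to \DEC, at most $1+\sum_{j=1}^{n} j=\frac{n(n+1)}{2}+1$ calls to \OPT, and at most $\frac{n(n+1)}{2}$ comparisons. That part is correct and needs no further comment.

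The gap is in the third bullet, precisely at the point you yourself flag as ``mildly delicate''. The paper obtains $O(n^6)$ by charging $t_{\tt Opt}(n)=O(n^4)$ per call to \OPT, citing the algorithm of Dudycz and Paluch~\cite{Dudycz2018} (a reduction to weighted $(g,f)$-factors solved by Gabow's algorithm), and $t_{\tt Dec}(n)=O(n^4)$ per call to \DEC, citing Cornu\'ejols~\cite{cornuejols1988general}; then $n\cdot O(n^4)+O(n^2)\cdot O(n^4)=O(n^6)$. You instead justify the per-call cost of \OPT by the vertex-gadget reduction to weighted matching, asserting it runs in $O(n^4)$, but this is not established and is doubtful as stated: a vertex of degree $d$ is replaced by a gadget with $\Theta(d)$ vertices (and, for the standard interval gadgets, up to $\Theta(d^2)$ edges), so the reduced matching instance has $N=\Theta(\sum_v \deg_G(v))=\Theta(m)$ vertices, which is $\Theta(n^2)$ on dense graphs. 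Even a fast strongly polynomial weighted matching algorithm, at $O(N(M+N\log N))$ or $O(N^3)$, then costs strictly more than $O(n^4)$ in terms of the original $n$, and multiplying by the $\Theta(n^2)$ calls to \OPT overshoots $O(n^6)$ (you would get roughly $O(n^7)$ or worse). Your treatment of \DEC has the same looseness --- ``polynomial-time, absorbed by the rest'' is not enough by itself; one needs an explicit per-call bound of at most $O(n^5)$, which the paper supplies as $O(n^4)$. The fix is simply to instantiate the oracles as the paper does (Cornu\'ejols for \DEC, Dudycz--Paluch for \OPT, each $O(n^4)$ per call on an $n$-vertex instance), rather than routing \OPT through the matching-gadget reduction.
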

 \begin{proof}
 Let  $\Omega^k=\{G, \pi^k, \omega\}$ be the instance after $k$ many recursive steps. Here $\Omega^0=\Omega$.  
  Recall that $T_{\Omega^k}=\{v\in V\mid \pi^k(v)\in \mathscr T\}$.
    For an instance $\Omega^k$ with $|T_{\Omega^k}|>0$, the recursive step will then go to the instance $(\Omega^k)_u^0$ or $(\Omega^k)_u^1$ for some $u\in T_{\Omega^k}$. 
    Thus, $\Omega^{k+1}=(\Omega^k)_u^0$ or $(\Omega^k)_u^1$.
    In both cases, $T_{\Omega^{k+1}}=T_{\Omega^{k}}\backslash \{u\}$ and hence $|T_{\Omega^{k+1}}|=|T_{\Omega^{k}}|-1$.
      By design, the algorithm will stop the recursion and return \OPT$(\Omega^m)$ when it reaches an instance $\Omega^m$ with $|T_{\Omega^m}|=0$.
Thus, $\#{\rm recursive ~ steps}=m= |T_{\Omega}|-0\leq |V|=n.$
 
 To prove the second item, we consider the number of operations inside the recursive step for the instance  $\Omega^k=\{G, \pi^k, \omega\}$. 
 Note that $k\leq n$ and $|T_{\Omega^k}|=|T_{\Omega}|-k\leq n-k$. 
 If $|T_{\Omega^k}|=0$, then the algorithm will simply call \OPT once. 
 If $|T_{\Omega^k}|>0$, then inside the recursive step, the algorithm will call \DEC once, and call \OPT once or $|T_{\Omega^k}|$ many times depending on the answer of \DEC.
 Moreover, in the later case, the algorithm will also perform $|T_{\Omega^k}|$ many comparisons. 
 Thus, 
 \begin{equation*}
 \begin{aligned}
     \#{\rm calls ~ of ~ \DEC}&=\sum_{|T_{\Omega^k}|>0}1=\sum_{i=1}^{|T_{\Omega}|}1=|T_\Omega|\leq n.\\
     \#{\rm calls ~ of ~ \OPT}&\leq 1+\sum_{|T_{\Omega^k}|>0}|T_{\Omega^k}|=1+\sum_{i=1}^{|T_{\Omega}|}i\leq \frac{n(n+1)}{2}+1.\\
     \#{\rm comparisons}&\leq \sum_{|T_{\Omega^k}|>0}|T_{\Omega^k}|\leq \frac{n(n+1)}{2}
 \end{aligned}
 \end{equation*}
 
 Let $t_{\tt Main}(n)$  denote the running time of Algorithm~\ref{alg-main} on an instance of size $n$, and $t_{\tt Dec}(n)$ and $t_{\tt Opt}(n)$ denote the running time of algorithms for the functions \DEC and \OPT, respectively. 
 Then, $t_{\tt Dec}(n)=O(n^4)$ by the algorithm in~\cite{cornuejols1988general} and $t_{\tt Opt}(n)=O(n^4)$ by the algorithm in~\cite{Dudycz2018}.
Thus, 
$t_{\tt Main}(n)\leq nt_{\tt Dec}(n)+\frac{n(n+1)+2}{2}t_{\tt Opt}(n)+\frac{n(n+1)}{2}=O(n^6)$.
\qedhere
\end{proof}

\section{Proof of Theorem~\ref{Thm:local-global-optimal}}\label{sec:thm}

In this section, we give a proof  of Theorem~\ref{Thm:local-global-optimal}. 
The general strategy is that starting with a non-optimal factor $F$ of an instance $\Omega=(G, \omega, \pi)$, 
we want to find a subgraph $H$ of $G$ such that by taking the symmetric difference $F\Delta H$, 
we get another factor of $\Omega$ with larger weight. 
The existence of such subgraphs is trivial (Lemma~\ref{lem-trivial}).
However, the challenge is how to find one efficiently. 
As an analogy of  augmenting paths in the weighted matching problem, 
we introduce basic augmenting subgraphs (Definition~\ref{def-basic-argumenting}) for  the weighted graph factor problem, which can be found efficiently. 
We will show that given a non-optimal factor $F$,
a basic augmenting subgraph always exists (Lemma~\ref{lem:basic-exist}, property 1).
Then, we can efficiently improve the factor $F$ to another factor with larger weight.
As shown in~\cite{Dudycz2018}, this already gave a weakly-polynomial time algorithm. 
However, the existence of  basic augmenting subgraphs is not enough to get a strongly polynomial-time algorithm, which requires the number of improvement steps being independent of edge weights. 
Thus, in order to prove
Theorem~\ref{Thm:local-global-optimal}, which leads to a strongly polynomial-time algorithm, 
we further establish that there exists a  basic augmenting subgraph 
that satisfies certain stronger properties under suitable assumptions (Lemma~\ref{lem:basic-exist}, property 2).
This result will imply Theorem~\ref{Thm:local-global-optimal}.

\begin{definition}[$F$-augmenting subgraphs]
Suppose that $F$ is a factor of an instance $\Omega=(G, \pi, \omega)$. 
A subgraph $H$ of $G$ 
is $F$-augmenting if $F\Delta H\in \Omega$ and $\omega(F\Delta H)-\omega (F)>0$.
\end{definition}

\begin{lemma}\label{lem-trivial}
Suppose that  $F$ is a factor of an instance $\Omega$. 
If $F$ is not optimal in $\Omega$, then there exists an $F$-augmenting subgraph. 
\end{lemma}
\begin{proof}
Since $F$ is not optimal, there is some $F'\in \Omega$ such that $\omega(F')>\omega(F)$.
Let $H=F\Delta F'$.
We have $F\Delta H=F'\in \Omega$ and $\omega(H)=\omega(F')-\omega(F)>0$.
Thus, $H$ is $F$-augmenting.
\end{proof}

Recall that for an instance $\Omega=(G, \pi, \omega)$ of $\wgfp(\mathscr{G}\cup\mathscr{T})$,
$T_\Omega$ is the set $\{v\in V_G\mid \pi(v)\in \mathscr{T}\}$.
 For two factors $F, F^\ast\in \Omega$,
we define $T_\Omega^{F\Delta F^\ast}=\{v\in T_\Omega\mid \deg_{F\Delta F^\ast}(v)\equiv 1 \mod 2\}=\{v\in T_\Omega\mid \deg_F(v)\not\equiv \deg_{F^\ast}(v) \mod 2\}$.

\begin{definition}[Basic augmenting subgraphs]\label{def-basic-argumenting}
Suppose that $\Omega=(G, \pi, \omega)$ is an instance of $\wgfp(\mathscr G, \mathscr T)$, and $F$ and $F^\ast$  are  factors of $\Omega$ with $\omega(F)<\omega(F^\ast)$.
An $F$-augmenting subgraph $H=(V_H, E_H)$ is \emph{$(F, F^\ast)$-basic} if $H\subseteq F\Delta F^\ast$, $|V^{\rm odd}_H|\leq 2$, and $V^{\rm odd}_H\cap T_\Omega\subseteq T_\Omega^{F\Delta F^\ast}$ where $V^{\rm odd}_H=\{v\in V_H\mid \deg_H(v)\equiv 1 \mod 2\}$.
\end{definition}

\begin{lemma}\label{lem:basic-exist}
\label{claim-local-optimal}
Suppose that  $\Omega=(G, \pi, \omega)$ is an instance of $\wgfp(\mathscr{G}\cup\mathscr{T})$, and $F$ and $F^\ast$  are two factors of $\Omega$.

\begin{enumerate}
    \item If  $\omega(F^\ast)>\omega(F)$, then there exists an  $(F, F^\ast)$-basic subgraph.
    \item If  $\omega(F^\ast)>\opt(\Omega_W^F)$ for every
    $W\subseteq T_\Omega^{F\Delta F^\ast}$ with $|W|\leq 2$, and $T_\Omega^{F\Delta F^\ast}$ contains a vertex $u$ such that $F\in\Omega_u^0$ (i.e., $\deg_F(u)\in D_u^0$),
    then there exists an $(F, F^\ast)$-basic subgraph $H$ where $\deg_H(u)\equiv 0 \mod 2$.
\end{enumerate}
\end{lemma}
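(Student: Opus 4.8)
The plan is to prove both properties by analyzing the symmetric difference $J := F \Delta F^\ast$ as a subgraph of $G$, and decomposing it into "elementary" pieces — cycles and paths — along which we can perform local augmentations. First I would record the degree structure of $J$: for a vertex $v$ with $\pi(v)$ an interval or parity interval, the local jump system / $\Delta$-matroid property of $\pi(v)$ controls how $\deg_F(v)$ and $\deg_{F^\ast}(v)$ differ, and for $v \in T_\Omega$ the type-1 / type-2 structure tells us that $\deg_F(v)$ and $\deg_{F^\ast}(v)$ lie in the same parity interval $D_v^F$ exactly when $v \notin T_\Omega^{F\Delta F^\ast}$. The key point is that $J$ has even degree at every vertex \emph{not} in $T_\Omega^{F\Delta F^\ast}$ together with those ordinary (interval / parity-interval) vertices where $F$ and $F^\ast$ agree in parity; so the "odd" vertices of $J$ are confined to $T_\Omega^{F\Delta F^\ast}$ plus the interval-type vertices whose two degrees straddle a single missing value. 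I would then invoke (or reprove, via the subcubic reduction advertised in Section~\ref{sec:thm1} and Lemma~\ref{lem:bridge}, Lemma~\ref{lem:extra-path}) the decomposition of $J$ into a collection of edge-disjoint closed walks (cycles) and a small number of open walks whose endpoints are exactly the odd vertices of $J$, in such a way that each piece, when used to modify $F$ via symmetric difference, keeps us inside $\Omega$ provided its endpoints lie in $T_\Omega^{F\Delta F^\ast}$ (so that flipping the parity there stays within $\pi(v)$) and its other vertices keep their degree inside their constraint.

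For property~1: since $\omega(F^\ast) > \omega(F)$ we have $\omega(J) = \omega(F^\ast) - \omega(F) > 0$, so in the decomposition of $J$ into elementary pieces at least one piece $H$ has $\omega(F \Delta H) > \omega(F)$; among those choose one that is "minimal" in the sense that it cannot be further split while remaining $F$-augmenting. I would argue that such a minimal $H$ automatically has at most two odd vertices (a cycle has none; a path has two), that $H \subseteq J$, and that its odd vertices, being forced endpoints of the decomposition, lie in $T_\Omega^{F\Delta F^\ast}$ — giving exactly the definition of $(F,F^\ast)$-basic. The routine verification that $F \Delta H \in \Omega$ (i.e.\ every vertex degree stays feasible) uses the $\Delta$-matroid exchange axiom for the ordinary constraints and the explicit parity-interval split $D_v^F / (\pi(v)\setminus D_v^F)$ for $v \in T_\Omega$.

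For property~2, which is the crux: we are given the extra hypothesis $\omega(F^\ast) > \opt(\Omega_W^F)$ for all $W \subseteq T_\Omega^{F\Delta F^\ast}$ with $|W| \le 2$, and a distinguished $u \in T_\Omega^{F\Delta F^\ast}$ with $\deg_F(u) \in D_u^0$, and we must produce a basic $H$ with $\deg_H(u)$ \emph{even} — i.e.\ the augmentation does not flip the parity at $u$. The idea is: run the property-1 argument to get \emph{some} basic $H_0$; if $\deg_{H_0}(u)$ is already even we are done, so suppose $H_0$ is a path with $u$ as one endpoint. Then $F \Delta H_0 \in \Omega_W^F$ for some $W \ni u$ with $|W| \le 2$ (the other endpoint, if it is in $T_\Omega$, contributes the second element of $W$), and $\omega(F \Delta H_0) > \omega(F)$; but by hypothesis $\omega(F^\ast) > \opt(\Omega_W^F) \ge \omega(F \Delta H_0)$, so $F \Delta H_0$ is still strictly below $F^\ast$. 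Now iterate: replace $F$ by $F_1 := F \Delta H_0$ — note $J_1 := F_1 \Delta F^\ast = J \Delta H_0 \subsetneq J$ is strictly smaller in edges, and crucially $u \notin T_\Omega^{F_1 \Delta F^\ast}$ now — and repeat, peeling off augmenting pieces from a shrinking symmetric difference while always staying $<\omega(F^\ast)$ by the $|W|\le 2$ hypothesis, until we arrive at a stage where the piece we extract does not have $u$ as an endpoint; tracking how pieces stack up shows the resulting cumulative modification, viewed back against the \emph{original} $F$, yields a single basic subgraph $H \subseteq J$ that is either endpoint-free at $u$ or pairs $u$ with itself, hence $\deg_H(u)$ is even. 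The main obstacle — and where the real technical work of Section~5.2 lives — is controlling this peeling process so that (a) it actually terminates with $u$ no longer an endpoint rather than looping, and (b) the accumulated pieces recombine into one connected basic subgraph contained in $J$ with $\le 2$ odd vertices; this is exactly the point where the restriction to interval, parity-interval, type-1 and type-2 constraints (as opposed to arbitrary gap-$1$ $\Delta$-matroids) and the "$D_u^0$ not $D_u^1$" asymmetry illustrated in Figure~\ref{fig:1.cc} are used, and I expect it to require the careful case analysis on the local structure of $F \Delta F^\ast$ around each $T_\Omega$-vertex that the paper defers to the full version.
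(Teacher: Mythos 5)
Your treatment of property 1 is broadly in the spirit of the paper: both reduce to the symmetric difference $J=F\Delta F^\ast$, normalize to a subcubic setting (the paper's ``key instance''), and extract an augmenting piece of a restricted shape. You should note, though, that in a subcubic graph the relevant pieces are not only paths and cycles but also tadpoles, dumbbells and theta graphs (Definition~\ref{def-basic-factor}); all of these have at most two odd-degree vertices, so the conclusion survives, but a decomposition purely into ``cycles and a few open walks'' glosses over this.

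The real problem is in property 2, and the iterative peeling you propose does not close the gap. After the first peel you set $F_1=F\Delta H_0$ and correctly observe that $u\notin T_\Omega^{F_1\Delta F^\ast}$, so the next extracted piece $H_1\subseteq F_1\Delta F^\ast=J\setminus H_0$ has $\deg_{H_1}(u)$ even. But $H_1$ is $(F_1,F^\ast)$-basic, not $(F,F^\ast)$-basic: $F\Delta H_1$ need not be a factor of $\Omega$ at all, because at any vertex $v\in V_{H_0}\cap V_{H_1}$ the degree $\deg_{F\Delta H_1}(v)$ differs from the feasible $\deg_{F_1\Delta H_1}(v)$ by exactly the degree change $H_0$ caused at $v$, and there is no reason this shifted value stays inside $\pi(v)$. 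So the last extracted piece cannot simply be returned. Nor can you return the ``cumulative modification'' $H_0\cup H_1$: the odd vertex set of an edge-disjoint union is $V_{H_0}^{\rm odd}\Delta V_{H_1}^{\rm odd}$, and since $u$ is odd in $H_0$ but (as you arranged) even or absent in $H_1$, the vertex $u$ stays odd in the union; moreover the union can carry up to four odd vertices in total, violating $|V_H^{\rm odd}|\le 2$. So the peeling, as stated, produces neither a single basic subgraph with $\deg_H(u)$ even nor a recombination rule that does.

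The paper avoids this by a different argument (Lemma~\ref{lem-2}): inside the subcubic key instance it fixes a \emph{maximum-weight} basic factor $F$ once, forms the residual key instance on $G\setminus F$ (reassigning vertex constraints there), extracts one basic factor $F'$ of the residual, and then runs a finite case analysis on the intersection pattern of $F$ and $F'$; the maximality of $\omega(F)$ is what lets each case be resolved by a local exchange that produces a single connected basic factor with even degree at $u$. This is a one-shot residual argument, not an iteration, and the maximality of $F$ plays the role your hypothesis ``$\omega(F^\ast)>\opt(\Omega_W^F)$'' would need to play in each step of your loop. If you want to salvage a peeling-style proof you would need to (i) show that all the $H_i$ you peel off are pairwise vertex-disjoint or have controllable intersections, and (ii) give an explicit surgery that turns $H_0\cup H_1$ into one of the five basic shapes; neither is addressed, and both look at least as hard as the paper's case analysis.
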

\begin{remark}
The first property  of Lemma~\ref{lem:basic-exist}
implies the following: a factor $F\in\Omega$ is optimal if and only if  $\omega(F)\geq\opt(\Omega_W^F)$ for every $W\subseteq T_\Omega$ with $|W|\leq 2$.
This is a special case of the main result (Theorem 2) of~\cite{Dudycz2018} where the authors consider the  $\wgfp$ for all constraints with gaps of length at most 1. 
The second property  of Lemma~\ref{lem:basic-exist} is more refined than the first property  and it implies our main result (Theorem~\ref{Thm:local-global-optimal}).
In this paper, as a by-product of the proof of property 2,
we give a simple proof of Theorem 2 of~\cite{Dudycz2018} for the special case $\wgfp(\mathscr{G}\cup\mathscr{T})$ based on certain properties of cubic graphs. 
\end{remark}

Using the second property of Lemma~\ref{lem:basic-exist}, we can prove Theorem~\ref{Thm:local-global-optimal}.

\begin{theorem*}[Theorem~\ref{Thm:local-global-optimal}]
Suppose that  $\Omega=(G, \pi, \omega)$ is an instance of $\wgfp(\mathscr{G}\cup\mathscr{T})$, $F$ is a factor of $\Omega$ and  $F$ is optimal in $\Omega_u^0$ for some $u\in T_\Omega$.
Then a factor $F'$ is optimal in $\Omega$ if and only if  $\omega(F')\geq
  \omega(F)$ and  $\omega(F')\geq \opt(\Omega^F_W)$ for every $W$ where $u\in
  W\subseteq T_\Omega$ and $|W|=1$ or $2$.
\end{theorem*}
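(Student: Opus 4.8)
The plan is to dispatch the easy implication directly and prove the substantive one by contradiction, with the entire weight of the argument resting on property~2 of Lemma~\ref{lem:basic-exist}. For the ``only if'' direction there is nothing to do: if $F'$ is optimal in $\Omega$ then $\omega(F')=\opt(\Omega)$, and since $F\in\Omega$ and every $\Omega^F_W$ is a sub-instance of $\Omega$, both $\omega(F')\ge\omega(F)$ and $\omega(F')\ge\opt(\Omega^F_W)$ are immediate. So suppose $\omega(F')\ge\omega(F)$ and $\omega(F')\ge\opt(\Omega^F_W)$ for every $W$ with $u\in W\subseteq T_\Omega$ and $|W|\in\{1,2\}$, but $F'$ is \emph{not} optimal; pick an optimal $F^\ast\in\Omega$, so $\omega(F^\ast)>\omega(F')\ge\omega(F)$. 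First I would locate $u$ inside $T_\Omega^{F\Delta F^\ast}$: since $\omega(F^\ast)>\omega(F)=\opt(\Omega_u^0)$, the factor $F^\ast$ is not in $\Omega_u^0$, hence $\deg_{F^\ast}(u)\notin D_u^0$; because $\deg_F(u)\in D_u^0$ and (by the parity splitting of $\pi(u)$ recorded before the statement) every integer in $\pi(u)\setminus D_u^0=D_u^1$ has parity opposite to that of the integers in $D_u^0$, this forces $\deg_{F^\ast}(u)\not\equiv\deg_F(u)\pmod 2$, i.e. $u\in T_\Omega^{F\Delta F^\ast}$. Note also $D_u^F=D_u^0$ and $F\in\Omega_u^0$.

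Next I would check the hypothesis of Lemma~\ref{lem:basic-exist}(2), namely $\omega(F^\ast)>\opt(\Omega^F_W)$ for \emph{every} $W\subseteq T_\Omega^{F\Delta F^\ast}$ with $|W|\le2$. If $u\in W$, then $|W|\in\{1,2\}$ and the bound follows from the assumed $\omega(F')\ge\opt(\Omega^F_W)$ together with $\omega(F^\ast)>\omega(F')$. If $u\notin W$, then by the definition of $\Omega^F_W$ one has $\pi^F_W(u)=D_u^F=D_u^0$ while every other constraint of $\Omega^F_W$ is contained in the corresponding constraint of $\Omega_u^0$; hence $\Omega^F_W\subseteq\Omega_u^0$ and $\opt(\Omega^F_W)\le\opt(\Omega_u^0)=\omega(F)\le\omega(F')<\omega(F^\ast)$. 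Lemma~\ref{lem:basic-exist}(2) now yields an $(F,F^\ast)$-basic subgraph $H$ with $\deg_H(u)\equiv0\pmod2$. By the definition of an $F$-augmenting subgraph, $F\Delta H\in\Omega$ and $\omega(F\Delta H)>\omega(F)$; and since $\deg_{F\Delta H}(u)\equiv\deg_F(u)+\deg_H(u)\equiv\deg_F(u)\pmod2$, the integer $\deg_{F\Delta H}(u)\in\pi(u)$ has the same parity as $\deg_F(u)\in D_u^0$, so (as $\pi(u)\setminus D_u^0$ consists only of integers of the opposite parity) $\deg_{F\Delta H}(u)\in D_u^0$, i.e. $F\Delta H\in\Omega_u^0$. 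Then $\omega(F\Delta H)>\omega(F)=\opt(\Omega_u^0)$ contradicts optimality of $F$ in $\Omega_u^0$, completing the proof.

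The genuinely hard part is not in this deduction but upstream, in Lemma~\ref{lem:basic-exist}(2): the existence of a basic augmenting subgraph \emph{with even degree at $u$}. The weaker property~1 (that some basic augmenting subgraph exists) is in essence the structure theorem behind the algorithm of~\cite{Dudycz2018,dudycz2017optimal}; it is the extra evenness at $u$ that lets a single local search land inside $\Omega_u^0$ and thereby makes the recursion depth, and hence the running time, independent of the edge weights. As the Remark's example shows, this refinement genuinely fails if $\Omega_u^0$ is replaced by $\Omega_u^1$, so the choice of the ``$D^0$-side'' split in the recursion is essential. Within the argument above, the only mildly subtle point is the case $W\not\ni u$ when verifying the hypothesis of Lemma~\ref{lem:basic-exist}(2): one must observe that fixing $\pi^F_W(u)=D_u^0$ forces $\Omega^F_W\subseteq\Omega_u^0$, so optimality of $F$ in $\Omega_u^0$ supplies the required bound even though the theorem's hypotheses only bound $\opt(\Omega^F_W)$ for $W$ containing $u$.
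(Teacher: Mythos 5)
Your proof is correct and follows essentially the same route as the paper's: dispatch the easy direction, then assume non-optimality of $F'$, locate $u$ in $T_\Omega^{F\Delta F^\ast}$ via parity, verify the hypothesis of Lemma~\ref{lem:basic-exist}(2) (splitting on $u\in W$ vs.\ $u\notin W$), and derive a contradiction from the resulting basic subgraph with $\deg_H(u)$ even. The only difference is that you make explicit the observation that $u\notin W$ forces $\pi^F_W(u)=D_u^F=D_u^0$ and hence $\Omega^F_W\subseteq\Omega_u^0$, which the paper states without justification; this is a genuine clarification of the same argument, not a different one.
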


\begin{proof}
If $F'$ is optimal in $\Omega$, then clearly $\omega(F')\geq \omega(F)$ and
  $\omega(F')\geq \opt(\Omega^F_W)$ for every $W$ where $u\in W\subseteq
  T_\Omega$ and $|W|=1$ or $2$.
Thus, to prove the theorem, it  suffices to prove the other direction.
 Since $\omega(F')\geq\omega(F)$ and $F$ is optimal in $\Omega_u^0$, we have $\omega(F')\geq\opt(\Omega^F_W)$ for every  $W\subseteq T_\Omega$  where $u\notin W$ and $|W|\leq 2$.
 Also, since $\omega(F')\geq \opt(\Omega^F_W)$ for every $W$ where $u\in W
  \subseteq T_\Omega$ and $|W|=1$ or $2$,
  we have $\omega(F')\geq\opt(\Omega^F_W)$ for every  $W\subseteq T_\Omega$ where $|W|\leq 2$.
  
 For a contradiction, suppose that $F'$ is not optimal in $\Omega$.
Let $F^\ast$ be an optimal factor of $\Omega$. 
Then, $\omega(F^\ast)>\omega(F')$.
Thus, $\omega(F^\ast)>\omega(F')\geq\opt(\Omega^F_W)$ for every  $W\subseteq T_\Omega$ where $|W|\leq 2$.
Also, $\omega(F^\ast)\notin \Omega_u^0$ since $\omega(F^\ast)>\omega(F)$ and $F$ is optimal in $\Omega_u^0$.
Thus, $\deg_{F^\ast}(u)\not\equiv \deg_F(u) \mod 2$. 
Then, $T_\Omega^{F\Delta F^\ast}$ contains the vertex $u$ such that $F\in \Omega_u^0$.
By Lemma~\ref{claim-local-optimal}, there exists an $(F, F^\ast)$-basic subgraph $H$ where $\deg_H(u)\equiv 0 \mod 2$.
Let $F''=F\Delta H$.
Then $F''\in \Omega$ and $\omega(F'')>\omega(F)$.
Also, $F''\in \Omega_u^0$ since $\deg_{F''}(u)\equiv \deg_F(u) \mod 2$.
This is a contradiction with $F$ being optimal in $\Omega_u^0$.
\end{proof}

Now it suffices to prove Lemma~\ref{lem:basic-exist}.
By a type of normalization  maneuver, we can transfer any instance of  $\wgfp(\mathscr G, \mathscr T)$ to
an instance of  $\wgfp(\mathscr G, \mathscr T)$ defined on subcubic graphs, called a key instance (Definition~\ref{def:key-ins}).
Recall that a subcubic graph is a graph where every vertex has degree $1$, $2$ or $3$.
For key instances, there are five possible forms of basic augmenting subgraphs, called basic factors (Definition~\ref{def-basic-factor}). 
Then,
the crux of the proof of Lemma~\ref{lem:basic-exist}
is to establish 
the existence of certain basic factors of key instances (Theorem~\ref{lem-basic-factor-star}).

\begin{definition}[Key instance]\label{def:key-ins}
A \emph{key instance} $\Omega=(G, \pi, \omega)$ is an instance of $\wgfp(\mathscr G, \mathscr T)$ where $G$ is a subcubic graph, and for every $v\in V_G$, $\pi(v)=\{0, 1\}$  if $\deg_G(v)=1$, $\pi(v)=\{0, 2\}$ if $\deg_G(v)=2$, and $\pi(v)=\{0, 1, 3\}$ (i.e., type-1) or $\{0, 2, 3\}$ (i.e., type-2) if $\deg_G(v)=3$.
We say a vertex $v\in V_G$ of degree $3$ is of type-1 or type-2 if $\pi(v)$ is type-1 or type-2 respectively.
We say a vertex $v\in V_G$ of any degree is 1-feasible or  2-feasible if $1\in \pi(v)$ or $2\in \pi(v)$ respectively.
\end{definition}

\begin{definition}[Basic factor]\label{def-basic-factor}
Let $\Omega$ be a key instance.
A factor  of $\Omega$ is a \emph{basic factor} if it is in one of the following five forms.
\begin{enumerate}
     \item A path, i.e., a tree with two vertices of degree 1 (called endpoints) and all other vertices, if there exists any, of degree 2.
     \item A cycle, i.e., a graph consisting of two vertex disjoint paths with the same two endpoints.
    \item 
    A tadpole graph, i.e., a graph consisting of a cycle and a path such that they intersect at one endpoint of the path.
    \item 
    A dumbbell graph, i.e.,  a graph consisting of  two vertex disjoint cycles and a path such that the path intersects with each cycle at one of its endpoints. 
    \item A theta graph (i.e., a graph consisting of three vertex disjoint paths with the same two endpoints) where one vertex of degree 3 is of type-1, and the other vertex of  degree 3 is of type-2.
\end{enumerate}
\end{definition}

\begin{theorem}\label{lem-basic-factor-star}
Suppose that $\Omega=(G, \pi, \omega)$ is a key instance.
\begin{enumerate}
    \item If $\omega(G)>0$,
then there is a basic factor $F$ of $\Omega$ such that $\omega(F)>0$.
\item If $\omega(G)>0$,  $\omega(G)>\omega(F)$ for every basic factor $F$ of $\Omega$, and
$G$ contains a vertex $u$ with $\deg_G(u)=1$ or $\deg_G(u)=3$ and  $\pi(u)=\{0, 2, 3\}$,
then
there is a basic factor $F^\ast$ of $\Omega$ such that $\omega(F^\ast)>0$ and $\deg_{F^\ast}(u)\equiv 0 \mod 2$. (Recall that  $\deg_{F^\ast}(u)=0$ if $u\notin V_{F^\ast}$).
\end{enumerate}
\end{theorem}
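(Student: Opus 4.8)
The plan is to prove both parts by working with a carefully chosen factor and repeatedly decomposing its symmetric difference with the empty factor---that is, decomposing $G$ itself viewed through the structure of an optimal (or near-optimal) factor of $\Omega$. Since $\Omega$ is a key instance, every vertex of $G$ has a degree constraint that is an interval or parity interval or one of the two type-$3$ constraints $\{0,1,3\}$, $\{0,2,3\}$, and a basic factor is a very restricted subgraph: a path, cycle, tadpole, dumbbell, or a theta graph whose two degree-$3$ vertices have opposite types. The key observation underpinning the whole argument is that if $F$ is any factor of $\Omega$ of positive weight, then in each connected component of $F$ every vertex has degree in $\{1,2,3\}$, and the set of degree-$3$ vertices of $F$ is precisely a subset of the type-vertices of $G$ on which $F$ takes degree $3$. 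A component is therefore a graph with maximum degree $3$, and such a graph decomposes (up to small surgery) into the five allowed shapes. So the strategy for part~1 is: take a factor $F_0$ of maximum weight; its weight is at least $\omega(G)>0$ only if we are careful---actually $F_0=G$ is itself a factor when all constraints are $1$-valid, but in general we want \emph{some} positive-weight factor, which exists because $\omega(G)>0$ means $G$ itself need not be a factor, so instead I would start from the factor guaranteed by Lemma~\ref{lem-trivial} applied with $F=\emptyset$ (the empty graph is a factor of any key instance since all constraints are $0$-valid, and $\omega(\emptyset)=0<\omega(G)$, so $\emptyset$ is not optimal, hence some positive-weight factor $F$ exists). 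Then I would argue that among all positive-weight factors, a \emph{minimal} one (minimal edge set, or minimal number of components, with ties broken by weight) must already be basic, by showing that any non-basic positive-weight factor can be pared down---if a component has more than two degree-$3$ vertices, or has two degree-$3$ vertices of the same type, or otherwise fails the basic classification, one can delete a cycle or a path from it (using that deleting an even-degree closed walk preserves all degree parities where the constraints are parity intervals, and preserves feasibility at type-vertices by the parity-splitting structure recorded in Section~\ref{sec:alg}) to obtain a strictly smaller factor, and at least one of the resulting pieces retains positive weight.

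For part~2 the additional hypotheses are that $G$ itself outperforms every basic factor ($\omega(G)>\omega(F)$ for all basic $F$), and that there is a distinguished vertex $u$ with $\deg_G(u)=1$, or $\deg_G(u)=3$ and $\pi(u)=\{0,2,3\}$. We must produce a positive-weight basic factor $F^\ast$ with $\deg_{F^\ast}(u)$ even, i.e.\ $\deg_{F^\ast}(u)\in\{0,2\}$. The idea is to run the decomposition argument of part~1 but tracking the parity at $u$. The reason the hypothesis $\omega(G)>\omega(F)$ for every basic $F$ matters: it forces the extremal positive-weight factor we construct to be ``large,'' so that the decomposition into basic pieces is genuinely nontrivial (more than one piece), and this gives us room to choose a piece avoiding odd degree at $u$. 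Concretely, start again from a positive-weight factor; decompose a minimal positive-weight factor into components, each of which, after further splitting, is basic; by part~1 at least one such basic piece $F'$ has $\omega(F')>0$. If $\deg_{F'}(u)$ is even we are done. Otherwise $\deg_{F'}(u)$ is odd, which at $u$ (with $\deg_G(u)\le 3$) means $\deg_{F'}(u)\in\{1,3\}$; the constraint at $u$ is either $\{0,1\}$ (degree-$1$ case) or $\{0,2,3\}$ (the type-$2$ case), and in the type-$2$ case $\deg_{F'}(u)=3$ is possible. Here I would use a local exchange at $u$: since $\omega(G)$ strictly exceeds the weight of this basic factor, there is weight ``left over'' in $G\setminus F'$; one can attach or swap an edge or short path at $u$ to change the parity there while keeping positive weight, or---more robustly---re-run the minimality argument with the added tie-break ``parity at $u$ is even, preferred,'' and show that violating it contradicts either minimality or the hypothesis $\omega(G)>\omega(F)$ for basic $F$. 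The crucial structural input that makes the parity-at-$u$ control possible is exactly the asymmetry between type-$1$ and type-$2$ vertices: for a type-$2$ vertex $\{0,2,3\}$, the ``even part'' $\{0,2\}$ is a genuine interval allowing degree $0$, so a piece passing through $u$ can always be truncated to leave $u$ with even degree, whereas a type-$1$ vertex $\{0,1,3\}$ has even part $\{0\}$ only---this is precisely why the theorem is stated for $\pi(u)=\{0,2,3\}$ (or $\deg_G(u)=1$, where $\{0,1\}$ has even part $\{0\}$ but degree-$1$ vertices are leaves and trivially truncatable).

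The main obstacle I anticipate is making the ``decompose a maximum-degree-$3$ graph into the five basic shapes'' step both correct and parity-respecting. A connected graph with maximum degree $3$ need not decompose cleanly into these shapes on the nose; one has to peel off pieces iteratively (extract an ear, a cycle through a degree-$3$ vertex, etc.), and at each peeling step verify (a) that the removed piece is a \emph{feasible} factor of $\Omega$ (degrees land in the prescribed constraint sets, using that all constraints here are $0$-valid so ``drop to zero'' is always allowed, and using parity for parity-intervals), (b) that what remains is still a feasible factor, and (c) that weights split so that some piece stays positive. The theta-graph case is the genuinely delicate one: it is the only basic shape with two degree-$3$ vertices, and it is allowed only when one is type-$1$ and the other type-$2$; so when a component has several degree-$3$ vertices I must show the peeling can be arranged to never leave a residual theta with two same-type vertices (which is not basic), and this will hinge on the parity constraints at type vertices forcing the two ``branchings'' in any irreducible leftover to have opposite types. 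I would isolate this as a self-contained combinatorial lemma about subcubic graphs with vertices colored type-$1$/type-$2$, and I expect that is where the real work of Section~\ref{sec:thm1} lies; the proof sketch there (the worked example) is meant to convey exactly this peeling-with-parity-bookkeeping procedure.
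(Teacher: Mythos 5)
Your proposal reproduces the overall shape of the paper's argument (peel off a positive-weight basic piece; exploit the asymmetry between type-1 and type-2 vertices; use the hypothesis $\omega(G)>\omega(F)$ to keep some weight ``left over''), but in both parts the decisive step is exactly the one you defer or wave at, so this is a gap rather than a proof. For part 1, the claim that a minimal positive-weight factor must be basic because any non-basic one ``can be pared down by deleting a cycle or a path'' is precisely where the work lies: deleting a path or cycle preserves feasibility only under quite specific conditions (compare Lemmas~\ref{lem:path} and~\ref{lem:cycle}: the removed path's endpoints must be exactly its type-2 vertices, its internal degree-3 vertices must all be type-1, and the weight signs must cooperate), and arranging for such a removable piece to exist is what the paper's induction on $|E_G|$ does, split into the disconnected case (Lemma~\ref{lem:not-connect}), the 2-connected case (Lemma~\ref{lem:2-connect}, which grows cycles and theta subgraphs via Lemma~\ref{lem:extra-path} and counts type-1/type-2 vertices on them), and the bridge case (Lemma~\ref{lem:bridge}), where one must split $G$ along a ``long bridge'', transfer the weight of one side onto a reattachment edge of the other, apply induction to the modified smaller key instance, and translate the resulting basic factor back through a four-way case analysis. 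None of this is routine peeling, and you explicitly flag it as the ``main obstacle'' without resolving it. (A small additional point: a positive-weight factor exists trivially because $G$ itself is a factor of every key instance, since $\deg_G(v)\in\pi(v)$ for all $v$; your detour through Lemma~\ref{lem-trivial} and $0$-validity is unnecessary, and as written it is circular, since declaring $\emptyset$ non-optimal already presupposes a factor of weight at least $\omega(G)$.)

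For part 2 the gap is more serious. Your two suggested mechanisms --- re-running the decomposition with a ``parity at $u$'' tie-break, or a ``local exchange'' attaching a short path at $u$ to flip its parity --- are not arguments, and the paper's own counterexample (Remark~\ref{remark:example}, Figure~\ref{fig:1.cc}) shows the statement is false for a type-1 vertex $u$, so any correct proof must use $\pi(u)=\{0,2,3\}$ (or $\deg_G(u)=1$) in a global, not local, way; a purely local parity fix at $u$ cannot distinguish the two situations. Your reading of the hypothesis $\omega(G)>\omega(F)$ for all basic $F$ (``it forces the extremal factor to be large, giving several pieces to choose from'') is also not how it functions: in the paper it guarantees that after fixing a maximum-weight positive basic factor $F$ --- which, when $\deg_F(u)$ is odd, necessarily has $\deg_F(u)=\deg_G(u)$ --- the complement $G\setminus F$, re-interpreted as a new key instance, still has positive weight, so part 1 produces a second basic factor $F'$ there; the desired $F^\ast$ with even degree at $u$ is then extracted from $H=F\cup F'$ by an exhaustive case analysis over the shapes of $F$ and $F'$ and their intersection pattern, with the maximality of $F$ among basic factors supplying the weight inequalities that force some even-at-$u$ piece to be positive. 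Nothing in your proposal substitutes for this combine-and-case-analyse step, so as it stands both parts remain strategy sketches with the essential content missing.
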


\begin{remark}\label{remark:example}
For the second property of Theorem~\ref{lem-basic-factor-star},
the requirement of $\pi(u)=\{0, 2, 3\}$ when $\deg_G(u)=3$ 
is crucial. 
Consider the instance $\Omega=(G, \pi, \omega)$ as shown in Figure~\ref{fig:1.cc}.
It is easy to that $\Omega$ is a key instance. 
In this case, it can be checked that $\omega(G)=6>0$ and  $\omega(G)>\omega(F)$ for every basic factor $F$ of $\Omega$.
However,
there is no basic factor $F^\ast$ of $\Omega$ such that $\omega(F^\ast)>0$ and $\deg_{F^\ast}(u)\equiv 0 \bmod 2$.
Thus, 
the second property does \emph{not} hold for a vertex $u$ where $\deg_G(u)=3$ and $\pi(u)=\{0, 1, 3\}$. 
\end{remark}

We will now describe the normalization maneuver, and use Theorem \ref{lem-basic-factor-star} to prove Lemma~\ref{lem:basic-exist}.

\begin{proof}[Proof of Lemma~\ref{lem:basic-exist}]
 Recall that $F$ and $F^\ast$ are two factors of the instance $\Omega=(G, \pi, \omega)$ (not necessarily a key instance).
Consider the subgraph $G_\Delta=F\Delta F^\ast$ of $G$. 
 $G_\Delta=(V_{G_\Delta}, E_{G_\Delta})$ is not necessarily a subcubic graph. 
In order to invoke Theorem~\ref{lem-basic-factor-star},
we modify $G_\Delta$ to a subcubic graph $G^s$, and  construct a key instance $\Omega^s=(G^s, \pi^s, \omega^s)$ on it.

For every $v\in V_{G_\Delta}$, 
we consider  the set of edges incident to $v$ in $G_\Delta$, denoted by $E_v$.
Since $G_\Delta=F\Delta F^\ast$, 
we have $E_v\subseteq E_{G_{\Delta}}=E_F\Delta E_{F^\ast}$, where $E_F$ and $E_{F^\ast}$ are the edge sets of the factors $F$ and $F^\ast$ respectively.
If there is  a pair of edges $e, e^\ast \in E_v$ such that 
$e\in E_F$ and $e^\ast\in E_{F^\ast}$,
then we perform the following \emph{separation} operation for this pair of edges. 
Suppose that $e=(v, u)$ and $e^\ast=(v, u^\ast)$; 
we add a new vertex $v^1$ to the graph, and replace the edges $e$ and $e^\ast$ by $(v^1, u)$ and $(v^1, u^\ast)$ respectively. 
We label the vertex $v^1$ (of degree 2) by $\pi^s(v^1)=\{0, 2\}$.
With a slight abuse of notation, we may still use $e$ and $e^\ast$ to denote these two new edges, and also use $E_{G_\Delta}$ to denote the set of all edges of the new graph. 

For each $E_v$, keep doing the separation operations for pairs of edges of which one is in $E_F$ and the other is in $E_{F^\ast}$ until all the remaining edges in $E_v$ are in $E_F$ or in $E_{F^\ast}$ 
We use $E_v^{\tt r}$ to denote the set  of  remaining edges. It is possible that $E_v^{\tt r}$ is empty. 
Let $P_v^1, \ldots, P_v^k$ be the pairs of edges that have been separated, 
and $v^1, \ldots, v^k$ be the added vertices ($k$ can be zero).
Note that all these new vertices are of degree $2$, and are labeled by $\{0, 2\}$.
Now, we have the partition $E_v=P_v^1\cup \cdots\cup  P_v^k\cup E_v^{\tt r}$. 
Let $r=|E_v^{\tt r}|$.
Then $r=|\deg_{F}(v)-\deg_{F^\ast}(v)|$.
Note that $r$ is even if $\pi(v) \in \mathscr{G}_2$, and 
$r\leq 3$ if $\pi(v) \in \mathscr{T}$.
We deal with edges in $E_v^{\tt r}$ according to $r$ and $\pi(v)$. 
\begin{itemize}
    \item If $r=0$, then $v$ is an isolated vertex in the current graph, and we simply remove it. 
    Consider an arbitrary subgraph  $H$   of the original $G_\Delta$ induced by a union of some pairs of edges in $P_v^1, \ldots, P_v^k$. 
    Then, for the subgraph $F\Delta H$ of $G_\Delta$, we have
    $$\deg_{F\Delta H}(v)= \deg_F(v)\in \pi(v).$$ 
   \item If $r\neq 0$ and $\pi(v)\in \mathscr{G}_1$, then we replace the vertex $v$ with $r$ many new vertices, and replace the $r$ many edges incident to $v$ by $r$ many edges incident to these new vertices such that each vertex has degree $1$.
    We label every new vertex by $\{0, 1\}$.
    Suppose that $L=\min\{\deg_F(v), \deg_{F^\ast}(v)\}$ and $U=\max \{\deg_F(v), \deg_{F^\ast}(v)\}$.
    Since $\pi(v)\in \mathscr{G}_1$, $\{L, L+1, \ldots, U\}\subseteq \pi(v)$. 
    Consider  an arbitrary subgraph $H\subseteq G_\Delta$ induced by a union of some pairs of edges in $P_v^1, \ldots, P_v^k$ and a subset  of $E_v^{\tt r}$.
     Then, for the subgraph $F\Delta H$ of $G_\Delta$, we have $$\deg_{F\Delta H}(v)\in \{L, L+1, \ldots, U\}\in \pi(v).$$ 
       \item If $r\neq 0$ and $\pi(v)\in \mathscr{G}_2\backslash \mathscr{G}_1$, then 
       we replace the vertex $v$ with ${r}/{2}$ many vertices, 
       and replace the $r$ many edges incident to $v$ by $r$ many edges incident to these new vertices such that each vertex has degree $2$.
       (We can partition these $r$ many edges into arbitrary pairs.)
    We label every new vertex by $\{0, 2\}$.
    Suppose that $L=\min\{\deg_F(v), \deg_{F^\ast}(v)\}$ and $U=\max \{\deg_F(v), \deg_{F^\ast}(v)\}$.
    Since $\pi(v)\in \mathscr{G}_2$, $\{L, L+2, \ldots, U\}\subseteq \pi(v)$. 
    Consider an arbitrary subgraph $H\subseteq G_\Delta$ induced by a union of some pairs of edges in $P_v^1, \ldots, P_v^k$ and an even-size subset of $E_v^{\tt r}$.
     Then, for the subgraph $F\Delta H$ of $G_\Delta$, we have $$\deg_{F\Delta H}(v)\in \{L, L+2, \ldots, U\}\in \pi(v).$$ 
    \item If $r\neq 0$ and  $\pi(v)\in \mathscr{T}$, then there are three subcases. 
  If $r=1$, then $v$ has degree $1$ in the current graph. We label it by $\pi^s(v)=\{0, 1\}$.
  If $r=2$, then $v$ has degree $2$ in the current graph. We label it by $\pi^s(v)=\{0, 2\}$.
    If $r=3$, then $v$ has degree $3$ in the current graph. We label it by $\pi^s(v)=\{0, 1, 3\}$ if 
    $\deg_F(v)\in D_v^1$,
    and $\pi^s(v)=\{0, 2, 3\}$ if $\deg_F(v)\in D_v^0$.
     Consider an arbitrary subgraph $H\subseteq G_\Delta$ induced by a union of some pairs of edges in $P_v^1, \ldots, P_v^k$ and a subset $I$ of $E_v^{\tt r}$ where $|I|\subseteq \pi^s(v)$. 
      Then, for the subgraph $F\Delta H$ of $G_\Delta$, we have $$\deg_{F\Delta H}(v)\in \pi(v).$$ 
     
\end{itemize}

Now, we get a subcubic graph $G^s=(V_{G^s}, E_{G^s})$ from $G_\Delta$.
Each vertex $v$ in $G_\Delta$ is replaced by a set of new vertices in $G^s$, denoted by $S(v)$. 
\begin{itemize}
    \item 
If $\pi(v)\in \mathscr{G}_1$, then  $S(v)$ consists of vertices of degree 2 or 1. 
\item If $\pi(v)\in \mathscr{G}_2$, then  $S(v)$ consists of vertices of degree 2.
\item If $\pi(v)\in \mathscr{T}$, then $S(v)$ consists of vertices of degree 2 and possibly a vertex of degree $r$ where $r=|\deg_{F}(v)-\deg_{F^\ast}(v)|\leq 3$ (there is no such a vertex if $r=0$).
In particular, if $\deg_{F}(v)-\deg_{F^\ast}(v)\equiv 0 \mod 2$, then   $S(v)$ consists of vertices of degree 2. 
\end{itemize}
In all cases, we have $\deg_{G_\Delta}(v)=\sum_{x\in S(v)}\deg_{G^s}(x)$.
Each edge $(u, v)$ in $G_\Delta$ is replaced by an edge $(u^s, v^s) \in G_\Delta$ where $u^s\in S(u)$ and $v^s\in S(v)$.
Once we get $G^s$ from $G_\Delta$, it is clear that there is a natural one-to-one correspondence between edges in $G^s$ and edges in $G_\Delta$. 
Without causing ambiguity, when we say an edge or an edge set in $G^s$, we may also refer it to the corresponding edge or edge set in $G_\Delta$.

As we constructed $G^s$, we have already defined the mapping $\pi^s$ which labels each vertex in $G^s$ with a degree constraint. 
For $x\in V_{G^s}$, we have $\pi^s(x)=\{0, 1\}$ if $\deg_{G^s}(x)=1$, $\pi^s(x)=\{0, 2\}$ if $\deg_{G^s}(x)=2$, and 
$\pi^s(x)=\{0, 1, 3\}$ or $\{0, 2, 3\}$ if $\deg_{G^s}(x)=3$.
Moreover, as we have discussed above,
for a vertex $v\in V_{G_\Delta}$ and  a subgraph $H\subseteq G_\Delta$ induced be a set $E$ of edges incident to $v$ in $G_\Delta$, we have $\deg_{F\Delta H}(v)\in \pi(v)$ if $\deg_{H^s}(x)\in \pi^s(x)$ for every $x\in S(v)$ where 
$H^s$ is the subgraph of $G^s$ induced by the edge set $E$ (viewed as edges in $G^s$).

Now, we define the function $\omega^s$ for edges in $G^s$ as follow.
Recall that for every edge in $G^s$, its corresponding edge in $G_\Delta$ is either in the factor $F$ or the factor $F^\ast$ but not in both since $G_{\Delta}=F\Delta F^\ast$. 
For $e\in E_{G^s}$, we define $\omega^s(e)=\omega({e})$ if ${e}\in E_{F^\ast}$ and $\omega^s(e)=-\omega({e})$ if ${e}\in E_F$.
We can extend $\omega^s$ to any subgraph of $G^s$ by defining its weight to be the total weight of all its edges. 
Then, for any subgraph $H^s\subseteq G^s$, 
$\omega^s(H^s)=\omega(F\Delta H)-\omega(F)$ where $H$ is the subgraph of $G_\Delta$ corresponding to $H^s$.
In particular, $\omega^s(G^s)=\omega(F^\ast)-\omega(F)>0$.
Thus,  we get a key instance $\Omega^s=(G^s, \pi^s, \omega^s)$ where $\omega^s(G^s)>0$.

Suppose that $F^s$ is a factor of $G^s$ with $\omega^s(F^s)>0$.
We consider the subgraph $H$ of $G_\Delta$ induced by the edge set ${E_{F^s}}$ (viewed as edges in $G_\Delta$).
We show that $H$ is an $(F, F^\ast)$-basic subgraph of $G$.
We have $H\subseteq G_\Delta=F\Delta F^\ast$.
As we have discussed above, for every vertex $v\in V_{F\Delta H}$, $\deg_{F\Delta H}(v)\in \pi(v)$. 
Thus, $F\Delta H\in \Omega$.
Also, $\omega^s(F^s)=\omega(F\Delta H)-\omega(F)>0.$
Then, $H$ is an $F$-augmenting subgraph.
For every $v\in V_H$, $\deg_{H}(v)=\sum_{x\in S(v)}\deg_{F^s}(x)$.
Then, $\deg_H(v)$ is odd only if there is a vertex $x\in S(v)$ such that $\deg_{F^s}(x)$ is odd.
Thus, the number of odd vertices in $H$ is no more than the number of odd vertices in $F^s$.
Since $F^s$ is a basic factor, it has at most $2$ vertices of odd degree. 
Thus, $H$ has at most  $2$ vertices of odd degree.
Moreover, for a vertex $v\in V_H\cap T_\Omega$, if $\deg_F(v)\equiv \deg_{F^\ast}(v) \mod 2$, then $S(v)$ consists of vertices of degree 2.
Thus, $\deg_{F^s}(x)\in \{0, 2\}$ for every $x\in S(v)$. 
Then, $\deg_H(v)=\sum_{x\in S(v)}\deg_{F^s}(x)$ is even.
Thus, for a vertex $v\in V_H\cap T_\Omega$, $\deg_H(v)$ is odd only if $\deg_F(v)\not\equiv \deg_{F^\ast}(v) \mod 2.$
Then, $V_H^{\rm odd}\cap T_{\Omega}\subseteq T^{F\Delta F^\ast}_{\Omega}$ where $V^{\rm odd}_H=\{v\in V_H\mid \deg_H(v)\equiv 1 \mod 2\}$.
Thus, $H$ is an $(F, F^\ast)$-basic subgraph of $G$.

By the first part of Theorem~\ref{lem-basic-factor-star}, there exists a basic factor $F^s\in \Omega^s$ with $\omega^s(F^s)>0$.
Thus,  there exists an $(F, F^\ast)$-basic subgraph $H\subseteq G$ induced by the edge set $E_{F^s}$. The first part is done.

Now, we prove the second part. 
Suppose that $\omega(F^\ast)>\opt(\Omega_W^F)$ for every
    $W\subseteq T_\Omega^{F\Delta F^\ast}$ where $|W|\leq 2$, and $T_\Omega^{F\Delta F^\ast}$ contains a vertex $u$ where $\deg_F(u)\in D^0_u$.
    Consider the instance $\Omega^s$.
    First, we prove that $\omega^s(G^s)>\omega(F^s)$ for every basic factor $F^s$ of $\Omega^s$.
    For a contradiction, suppose that there is some $F^s\in \Omega^s$ such that $\omega^s(G^s)\leq \omega (F^s)$.
    Still consider the subgraph $H$ of $G_\Delta$ inducted by $E_{F^s}$.    
    We know that $H$ is an $(F, F^\ast)$-basic subgraph of $G$ and $\omega^s(F^s)=\omega(F\Delta H)-\omega(F)$.
    Let $W=V^{\rm odd}_H\cap T_\Omega$.
    Then, $W\subseteq T_\Omega^{F\Delta F^\ast}$ and $|W|\leq 2$.
    For every $x\in W$, since $\deg_H(x)$ is odd,
    we have $\deg_{F\Delta H}(x)\not\equiv \deg_F(x) \mod 2$, and then $\deg_{F\Delta H}(x)\in \pi(x)\backslash D_v^F$.
    For  every $x\in T_\Omega\backslash W$, since $\deg_H(x)$ is even,
    we have $\deg_{F\Delta H}(x)\equiv \deg_F(x) \mod 2$ and then $\deg_{F\Delta H}(x)\in  D_v^F$. 
    Consider the sub-instance $\Omega^F_W=(G, \pi^F_W, \omega)$ of $\Omega$ (see Equation~(\ref{equ:omega-f-w}) for the definition of $\Omega^F_W$).
 Then, $F\Delta H\in \Omega^F_W$.
 Thus, $\omega(F\Delta H)\leq \opt(\Omega^F_W)$.
    Since $$\omega^s(G^s)=\omega(F^\ast)-\omega(F)\leq \omega^s(F^s)=\omega(F\Delta H)-\omega(F),$$ we have $\omega(F^\ast)\leq \omega(F\Delta H)$.
    Then, $\omega(F^\ast)\leq \opt(\Omega^F_W)$.
    A contradiction with the assumption that $\omega(F^\ast)>\opt(\Omega_W^F)$ for every
    $W\subseteq T_\Omega^{F\Delta F^\ast}$ where $|W|\leq 2$.
    Thus, $\omega^s(G^s)>\omega^s(F^s)$ for every basic factor $F^s$ of $\Omega^s$.
    
    Since $T_\Omega^{F\Delta F^\ast}$ contains a vertex $u$ where $\deg_F(u)\in D^0_u$.
    Consider the vertex set $S(u)$ in $G^s$ that corresponds to $u$.
    Since $u\in T_\Omega^{F\Delta F^\ast}$, $\deg_F(u)\not\equiv\deg_{F^\ast}(u) \mod 2$.
    Thus, $S(u)$ consists of vertices of degree $2$ and a vertex $u^s$ of degree $\deg_{G^s}(u^s)=|\deg_F(u)-\deg_{F^\ast}(u)|$ which is $1$ or $3$.
    If $|\deg_F(u)-\deg_{F^\ast}(u)|=3$, 
    then $\pi^s(u^s)=\{0, 2, 3\}$ since $\deg_F(u)\in D^0_u$.
    Thus, $G^s$ contains a vertex $u^s$ where $\deg_{G^s}(u^s)=1$ or $\deg_{G^s}(u^s)=3$ and $\pi^s(u^s)=\{0, 2, 3\}$.
    Then, by the second part of Theorem~\ref{lem-basic-factor-star}, there is a basic factor $F^s\in \Omega^s$ such that $\omega^s(F^s)>0$ and $\deg_{F^s}(u_s)\equiv 0 \mod 2$.
    Again, consider the subgraph $H$ of $G_\Delta$ inducted by $E_{F^s}$.   
    We have proved that $H$ is an $(F, F^\ast)$-basic subgraph of $G$.
    Also, $$\deg_H(u)=\sum_{x\in S(u)\backslash\{u^s\}}\deg_{F^s}(x)+\deg_{F^s}(u^s)\equiv 0 \mod 2$$ since 
    $\deg_{F^s}(x)\in \pi^s(x)=\{0, 2\}$ for every $x\in S(u)\backslash\{u^s\}$, and $\deg_{F^s}(u_s)\equiv 0 \mod 2$.
Thus, there is an  $(F, F^\ast)$-basic subgraph $H$ of $G$ such that $\deg_H(u)\equiv 0 \mod 2$.
\end{proof}

\section{Proof of Theorem~\ref{lem-basic-factor-star}}
\label{sec:thm1}
We first prove the first property (restated in Lemma~\ref{lem-MBFC}), 
and then prove the second property (restated in Lemma~\ref{lem-2}) 
using the first property. 
In this section, for two points $x$ and $y$, we use $p_{xy}$, $p'_{xy}$ or $p''_{xy}$ to denote a path with endpoints $x$ and $y$.
Recall that $V_{p_{xy}}$ and $E_{p_{xy}}$ denotes the vertex set and the edge set of $p_{xy}$ respectively. 

\subsection{Proof of the first property}\label{subsec1}

\begin{lemma}\label{lem:not-connect}
Suppose that $\Omega=(G, \pi, \omega)$ is a key instance with $\omega(G)>0$. 
If $G$ is not connected, 
then there is a factor $F\in \Omega$ such that $\omega(F)>0$ and $|E_F|<|E_G|$.
\end{lemma}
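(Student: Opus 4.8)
The plan is to reduce to a connected component directly. Let $G$ be disconnected with $\omega(G) > 0$. Write $G = G_1 \cup \cdots \cup G_k$ as the disjoint union of its connected components, where $k \geq 2$. Since weights add over disjoint parts, $\omega(G) = \sum_{i=1}^k \omega(G_i) > 0$, so there exists at least one component, say $G_1$, with $\omega(G_1) > 0$. The idea is to take $F = G_1$ itself (more precisely, the subgraph of $G$ whose edge set is $E_{G_1}$, with the isolated vertices coming from the other components discarded per the paper's convention).

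The key step is to check that $F = G_1$ is actually a factor of $\Omega$, i.e., $\deg_F(v) \in \pi(v)$ for every $v \in V_G$. For $v \in V_{G_1}$, we have $\deg_F(v) = \deg_{G_1}(v) = \deg_G(v)$, since $G_1$ is a union of connected components and no edges leave it; by the definition of a key instance, $\pi(v) = \{0,1\}$, $\{0,2\}$, $\{0,1,3\}$, or $\{0,2,3\}$ according to whether $\deg_G(v)$ is $1$, $2$, or $3$, and in each case the full degree $\deg_G(v)$ lies in $\pi(v)$ (it is always the largest element). For $v \notin V_{G_1}$, $v$ is isolated in $F$, so $\deg_F(v) = 0$, and $0 \in \pi(v)$ since every degree constraint in a key instance is $0$-valid. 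Hence $F \in \Omega$. We already have $\omega(F) = \omega(G_1) > 0$. Finally, $|E_F| = |E_{G_1}| < |E_G|$ because $k \geq 2$ and every component contributes at least one vertex; in fact since the graph has no isolated vertices (key instances are subcubic, so every vertex has degree $\geq 1$), each of the components $G_2, \dots, G_k$ contains at least one edge, so $|E_{G_1}| \leq |E_G| - (k-1) < |E_G|$.

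I do not anticipate a genuine obstacle here; the statement is essentially a bookkeeping observation exploiting that key instances have $1$-valid constraints (so a whole component is always a feasible factor) and $0$-valid constraints (so dropping the other components is harmless). The only point requiring a moment's care is confirming that $\deg_G(v) \in \pi(v)$ for every vertex $v$ of degree $3$ regardless of whether $\pi(v)$ is type-1 $\{0,1,3\}$ or type-2 $\{0,2,3\}$ — but $3$ belongs to both — and similarly that $\deg_G(v) \in \pi(v)$ for degrees $1$ and $2$, which is immediate. This lemma will presumably be used to justify, in the inductive proof of Lemma~\ref{lem-basic-factor-star}, that one may assume $G$ is connected (or even $2$-connected), since a positive-weight component that is a strict subgraph can be handled by recursion on the number of edges.
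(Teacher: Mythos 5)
Your proof is correct and takes essentially the same approach as the paper's: both split the disconnected graph by components, use additivity of $\omega$ to find a positive-weight part, observe that a union of whole components is automatically a factor of a key instance (since $\deg_G(v)\in\pi(v)$ and $0\in\pi(v)$ for every $v$), and note the strict drop in edge count. The paper phrases the split slightly more compactly as $G_1$ versus $G_2=G\Delta G_1$ rather than the full component decomposition, but the argument is the same.
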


\begin{proof}
Suppose that $G_1$ is  a connected component of $G$, and $G_2=G\Delta G_1$ is the rest of the graph. 
Note that $G_1$ and $G_2$ are both factors of $G$.
By the definition of subcubic graphs, there are no isolated vertices in $G$.
Thus, neither $G_1$ nor $G_2$ is a single vertex. 
Then, $|E_{G_1}|, |E_{G_2}|\geq 1$.
Since $E_G$ is the disjoint union of $E_{G_1}$ and $E_{G_2}$, $|E_{G_1}|, |E_{G_2}| < |E_{G}|$,
and 
$\omega(G)=\omega(G_1)+\omega(G_2)$.
Since $\omega(G)>0$, among $\omega(G_1)$ and $\omega(G_2)$, one is positive. 
Thus, 
we are done. 
\end{proof}

\begin{lemma}\label{lem:path}
Suppose that $\Omega=(G, \pi, \omega)$ is a key instance with $\omega(G)>0$. 
Then, there is a factor $F\in \Omega$ such that $\omega(F)>0$ and $|E_F|<|E_G|$ if one of the following conditions holds:
\begin{enumerate}
    \item There is a path $p_{uv}\subseteq G$ with endpoints $u$ and $v$ where  $u$ and $v$ are the only two vertices in $p_{uv}$ of type-2 (i.e., $\deg_G(u)=\deg_G(v)=3$ and $\pi(u)=\pi(v)=\{0, 2, 3\}$) and $\omega(p_{uv})\leq 0$.
    \item There is a cycle $C\subseteq G$ where no vertex is of type-2 and $\omega(C)\leq 0$.
\end{enumerate}
\end{lemma}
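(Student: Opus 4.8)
The plan is to dispose of both cases at once by simply deleting the offending subgraph. In case (1) put $X := p_{uv}$, in case (2) put $X := C$, and in either case set $F := G \Delta X$; since $E_X \subseteq E_G$ this means $E_F = E_G \setminus E_X$. The weight and size requirements are then immediate: $X$ contains at least one edge, so $|E_F| = |E_G| - |E_X| < |E_G|$, and $\omega(F) = \omega(G) - \omega(X) \ge \omega(G) > 0$ using $\omega(X) \le 0$ and $\omega(G) > 0$. So the only thing left to check is that $F$ is a factor of $\Omega$, i.e.\ $\deg_F(w) \in \pi(w)$ for every $w \in V_G$.

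For vertices $w \notin V_X$ nothing changes and $\deg_F(w) = \deg_G(w) \in \pi(w)$ because $G \in \Omega$, so I would only need to examine $w \in V_X$, counting how many edges of $X$ meet $w$. In case (2), every vertex of a cycle is incident to exactly two of its edges, so $\deg_F(w) = \deg_G(w) - 2$; a cycle avoids degree-1 vertices, so $\deg_G(w) \in \{2,3\}$, and we get $\deg_F(w) = 0 \in \{0,2\}$ when $\deg_G(w) = 2$, and $\deg_F(w) = 1$ when $\deg_G(w) = 3$, where $1 \in \pi(w) = \{0,1,3\}$ since no vertex of $C$ is of type-2. In case (1), the endpoints $u, v$ have degree $3$ and are incident to exactly one edge of $p_{uv}$, so $\deg_F(u) = \deg_F(v) = 2 \in \{0,2,3\}$; an interior vertex $w$ of $p_{uv}$ has $\deg_G(w) \ge 2$ and is incident to exactly two path-edges, so $\deg_F(w) = \deg_G(w) - 2$, giving $\deg_F(w) = 0 \in \{0,2\}$ when $\deg_G(w) = 2$ and $\deg_F(w) = 1 \in \{0,1,3\}$ when $\deg_G(w) = 3$ — here using that $u$ and $v$ are the only type-2 vertices on the path, so every degree-3 interior vertex is of type-1. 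This establishes $F \in \Omega$ and completes the argument.

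I do not anticipate a real obstacle; the proof is a routine case check. The single point that needs care — and which explains the form of the hypotheses — is that removing two edges at a degree-3 vertex leaves degree $1$, a feasible value for the type-1 constraint $\{0,1,3\}$ but not for the type-2 constraint $\{0,2,3\}$. That is exactly why type-2 vertices are barred from the interior of the path (where two edges disappear) yet allowed at its endpoints (where only one edge disappears, leaving the feasible degree $2$), and why type-2 vertices cannot lie on the cycle at all.
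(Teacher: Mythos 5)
Your proof is correct and takes essentially the same approach as the paper: in both cases you set $F = G \setminus X$ (where $X$ is the path or cycle), observe $|E_F| < |E_G|$ and $\omega(F) = \omega(G) - \omega(X) > 0$, and then verify $\deg_F(w) \in \pi(w)$ vertex by vertex, using exactly the observation that a degree-$3$ vertex losing two edges ends with degree $1$ (feasible only for type-1) while an endpoint losing one edge ends with degree $2$ (feasible for type-2). The only cosmetic difference is that the paper partitions $V_F$ into three named subsets and drops the isolated degree-$0$ vertices by convention, whereas you check each vertex's residual degree directly, noting $0 \in \{0,2\}$; this is the same argument.
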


\begin{proof}
Suppose that the first condition holds. 
Consider the subgraph $F=G\backslash p_{uv}$ of $F$.
Then, $|E_F|=|E_G|-|E_{p_{uv}}|<|E_G|$, and $\omega(F)=\omega(G)-\omega(p_{uv})\geq \omega(G)>0$.
Now we only need to show that $F$ is a factor of $\Omega$. 
The vertex set $V_F$ consists of three parts:
$$V_1=V_{G}\backslash V_{p_{uv}}, ~~~ V_2=\{x\in V_{p_{uv}}\backslash\{u, v\}\mid \deg_G(x)=3\}, \text{ ~~~and~~~ } V_3=\{u, v\}.$$
Since $u$ and $v$ are the only two vertices of type-2 in $p_{uv}$, for every $x\in V_2$, $x$ is of type-1 (i.e., $\pi(x)=\{0, 1, 3\}$).
Then, for every $x\in V_F$, we have $\deg_F(x)=\deg_G(x)\in \pi(x)$ if $x\in V_1$, $\deg_F(x)=1\in \pi(x)$ if $x\in V_2$, and $\deg_F(x)=2\in \pi(x)$ if $x\in V_3$. 
Thus, $F$ is a factor of $G$. We are done.

Suppose that the second condition holds.
Consider the subgraph $F=G\backslash C$. Then $|E_F|<|E_G|$ and $\omega(F)>0$.
Similar to the above proof, one can check that $F$ is a factor $\Omega$. We are done. 
\end{proof}

\begin{lemma}\label{lem:cycle}
Suppose that $\Omega=(G, \pi, \omega)$ is a key instance with $\omega(G)>0$, $G$ is not a basic factor of $\Omega$ and $C\subseteq G$ is a cycle.
Let $k$ be the number of type-1 vertices  and $\ell$ be the number of type-2 vertices in $C$.
If $k\neq 1$ and $\ell\neq 1$, then there is a factor $F\in \Omega$ such that $\omega(F)>0$ and $|E_F|<|E_G|$.
\end{lemma}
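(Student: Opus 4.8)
The plan is to branch on the sign of $\omega(C)$. When $\omega(C)\le 0$ the hypothesis $\ell\neq 1$ does the work, and when $\omega(C)>0$ the hypothesis $k\neq 1$ does; since we do not know the sign of $\omega(C)$ in advance, both hypotheses are needed. In each branch I will either reduce to Lemma~\ref{lem:path} or directly exhibit $F$ as an arc of $C$ (or as $C$ itself), the only real content being an averaging step that selects a good arc plus a routine verification that the chosen subgraph is a factor with fewer edges than $G$.

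\emph{Case $\omega(C)\le 0$.} If $\ell=0$, then $C$ is a cycle with no type-2 vertex and $\omega(C)\le 0$, so Lemma~\ref{lem:path}(2) yields the desired $F$ directly. If $\ell\ge 2$, list the type-2 vertices of $C$ in cyclic order as $u_1,\dots,u_\ell$; they cut $C$ into $\ell$ edge-disjoint arcs whose weights sum to $\omega(C)\le 0$, so some arc $p_{u_iu_{i+1}}$ (indices modulo $\ell$) has weight at most $0$. By the choice of consecutive $u_i,u_{i+1}$, no type-2 vertex lies in the interior of this arc, so $u_i$ and $u_{i+1}$ are its only type-2 vertices, and Lemma~\ref{lem:path}(1) applies. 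The excluded value $\ell=1$ is exactly the one for which a single marked vertex fails to split the cycle into a proper path.

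\emph{Case $\omega(C)>0$.} If $k=0$, then every vertex of $C$ has degree $2$ in $C$, which lies in its constraint (the vertex is either an ordinary degree-$2$ vertex labelled $\{0,2\}$ or a type-2 vertex labelled $\{0,2,3\}$), and every vertex off $C$ has degree $0$, which lies in every key-instance label; hence $C\in\Omega$, so $C$ is a basic factor. Since $G$ is not a basic factor, $C\neq G$; as the cycle $C$ and the subcubic graph $G$ both have no isolated vertices, this gives $|E_C|<|E_G|$, so $F:=C$ works. If $k\ge 2$, list the type-1 vertices of $C$ in cyclic order as $w_1,\dots,w_k$; they cut $C$ into $k$ edge-disjoint arcs whose weights sum to $\omega(C)>0$, so some arc $p_{w_iw_{i+1}}$ has positive weight. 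Set $F:=p_{w_iw_{i+1}}$. Its two endpoints are type-1 and receive degree $1\in\{0,1,3\}$; each interior vertex is a vertex of $C$ that, by consecutivity, is not type-1, hence is an ordinary degree-$2$ vertex or a type-2 vertex, and receives degree $2$, which lies in its label; every other vertex receives degree $0$. Thus $F\in\Omega$, $\omega(F)>0$, and since $F$ is a proper arc of $C$ we get $|E_F|<|E_C|\le|E_G|$. The excluded value $k=1$ is again the degenerate case where one marked vertex does not split $C$ into a proper path.

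The argument is essentially bookkeeping about which vertices of $C$ carry a third (non-$C$) edge and which degrees they may take. The one point that requires care is the feasibility check for $F=p_{w_iw_{i+1}}$ in the last subcase: an arc joining \emph{consecutive} type-1 vertices has no type-1 vertex in its interior, so the value $2$, forbidden by the label $\{0,1,3\}$, is never forced along it; the type-2 vertices that may appear in the interior are harmless since $2\in\{0,2,3\}$. I do not anticipate a genuine obstacle here---this lemma is a reduction step which, together with the companion lemmas treating cycles with $k=1$ or $\ell=1$ and the acyclic case, feeds the induction on $|E_G|$ that establishes the first property of Theorem~\ref{lem-basic-factor-star}.
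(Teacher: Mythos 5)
Your proposal is correct and matches the paper's proof essentially line for line: same split on the sign of $\omega(C)$, same subcases $k=0$ versus $k\geq 2$ and $\ell=0$ versus $\ell\geq 2$, same reduction to Lemma~\ref{lem:path} in the non-positive branch, and the same averaging step to pick a good arc. The only difference is that you spell out the routine feasibility check for the chosen arc a bit more explicitly than the paper does.
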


\begin{proof}
We prove this lemma in two cases depending on whether $\omega(C)>0$ or $\omega(C)\leq 0$. 

We first consider the case that $\omega(C)>0$.
If $k=0$, then all vertices in $C$ are 2-feasible (see Definition~\ref{def:key-ins}). 
Thus, $C$ is a factor of $\Omega$.
Since $G$ is not a basic factor of $\Omega$, we have $G\neq C$. 
Also, since $G$ has no isolated vertices, $C\subsetneq G$ implies that $|E_C|<|E_G|$. We are done.
Thus, we may assume that $k\geq 2$.
Suppose that $\{u_1, u_2, \ldots, u_k\}$ are the type-1 vertices in $C$.
We list them
in the order of traversing the cycle starting from $u_1$ in an arbitrary direction.
Then, these $k$ many vertices split the cycle into $k$ many paths $p_{u_1u_2}, \ldots, p_{u_ku_{k+1}}$ ($u_{k+1}=u_1$).
For each path, all its vertices are 2-feasible except for its two endpoints which are $1$-feasible. 
Thus, each path is a basic factor of $G$.
We have $|E_{p_{u_iu_{i+1}}}|<|E_G|$ for every $i \in [k]$.
Since $$\omega(C)=\sum^k_{i=1}\omega(p_{u_iu_{i+1}})>0,$$  there is a path $p_{u_iu_{i+1}}$ such that $\omega(p_{u_iu_{i+1}})>0$.
Thus, we are done.

Then we consider the case that $\omega(C)\leq 0$.
If $\ell=0$, then $C\subseteq G$ is cycle with no type-2 vertices.
By Lemma~\ref{lem:path}, we are done.
Thus, we may assume that $\ell\geq 2$.
Suppose that $\{v_1, v_2, \ldots, v_\ell\}$ are the type-2 vertices in $C$.
We list them
in the order of traversing the cycle starting from $v_1$ in an arbitrary direction.
Then, these $\ell$ many vertices split the cycle into $\ell$ many paths $p_{v_1v_2}, \ldots, \ldots, p_{v_\ell v_{\ell+1}}$ ($v_{\ell+1}=v_1$).
For each path, it has no vertex of type-2 except for its
two endpoints which are of type-2. 
Since $$\omega(C)=\sum^k_{i=1}\omega(p_{v_iv_{i+1}})\leq 0,$$  there is a path $p_{v_iv_{i+1}}$ such that $\omega(p_{v_iv_{i+1}})\leq 0$.
Thus, there is a path $p_{v_iv_{i+1}}\subseteq G$ where $v_i$ and $v_{i+1}$ are the only two vertices of type-2 in $p_{v_iv_{i+1}}$   and $\omega(p_{v_iv_{i+1}})\leq 0$.
Then, by Lemma~\ref{lem:path}, we are done. 
\end{proof}

\begin{lemma}\label{lem:2-connect}
Suppose that $\Omega=(G, \pi, \omega)$ is a key instance with $\omega(G)>0$, and $G$ is not a basic factor of $\Omega$.
If $G$ is $2$-connected, then there is a factor $F\in \Omega$ such that $\Omega(F)>0$ and $|E_F|<|E_G|$.
\end{lemma}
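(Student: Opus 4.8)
The plan is to exhibit inside $G$ a cycle $C$ with $k\neq 1$ type-$1$ vertices and $\ell\neq 1$ type-$2$ vertices and then invoke Lemma~\ref{lem:cycle}, falling back on Lemma~\ref{lem:path} when the most natural cycle has the wrong statistics. First I would record the elementary structural facts: a $2$-connected graph has more than two vertices and no vertex of degree~$1$ (deleting the neighbour of such a vertex would isolate it), so every vertex of $G$ has degree $2$ or $3$; and since $G$ is not a basic factor it is not a cycle, so $G$ has a vertex of degree~$3$ and hence, by the handshake lemma, at least two of them.

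\textbf{Case 1: all degree-$3$ vertices of $G$ have the same type.} If they are all type-$1$, pick two of them, $u$ and $v$; by Menger's Theorem there is a cycle $C$ through $u$ and $v$, and then $k(C)\geq 2$ and $\ell(C)=0$, so Lemma~\ref{lem:cycle} yields the desired factor. The case in which all degree-$3$ vertices are type-$2$ is symmetric: a cycle through two of them has $\ell\geq 2$ and $k=0$.

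\textbf{Case 2: $G$ has both a type-$1$ and a type-$2$ vertex.} If $G$ had exactly two vertices of degree~$3$, then $G$ would be a theta graph with those two as its branch vertices, and, their types being different, $G$ would itself be a basic factor by Definition~\ref{def-basic-factor}(5), contrary to hypothesis; so, the number of degree-$3$ vertices being even, $G$ has at least four of them. I would now start from a cycle $C$ through a type-$1$ vertex $a$ and a type-$2$ vertex $b$ (which exists by Menger's Theorem), so $k(C)\geq 1$ and $\ell(C)\geq 1$. If $k(C)\neq 1$ and $\ell(C)\neq 1$ we are done by Lemma~\ref{lem:cycle}; otherwise, say $k(C)=1$, so that $a$ is the only type-$1$ vertex of $C$. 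Using the third edge incident to $a$ together with $2$-connectivity — concretely via Lemma~\ref{lem:extra-path}, or by observing that $G-a$ is connected and, having at least $|V_G|-1$ edges (a handshake count, using that $G$ has at least four degree-$3$ vertices), contains a cycle avoiding $a$ — I would reroute $C$ into a cycle that either picks up a second type-$1$ vertex or drops $a$ entirely, while keeping the number of type-$2$ vertices under control, and iterate until both statistics are good.

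\textbf{The main obstacle} is precisely this rerouting step of Case~2: one must ensure that modifying the cycle never produces a new cycle containing exactly one type-$1$ (or exactly one type-$2$) vertex, which forces a careful accounting of how the degree-$3$ vertices of the two types are distributed along the cycles of $G$; this, rather than the short Case~1, is the technical core of the lemma. An alternative organisation, which I would adopt if the above bookkeeping becomes unwieldy, is via an open ear decomposition $G=C_0+P_1+\cdots+P_m$ (with $m\geq 1$ since $G$ is not a cycle): for the last ear $P_m$, whose endpoints $x,y$ have degree~$3$ and whose interior vertices have degree~$2$, let $G'$ be obtained from $G$ by deleting the interior vertices and all edges of $P_m$; then $G'$ is again $2$-connected, $\deg_{G'}(x)=\deg_{G'}(y)=2$, and $\omega(G)=\omega(G')+\omega(P_m)$. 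If $x$ and $y$ are both type-$1$ and $\omega(P_m)>0$, then $P_m$ is a basic path factor of positive weight with fewer edges than $G$; if $x$ and $y$ are both type-$2$ and $\omega(P_m)\leq 0$, then $G'$ is a factor (now $2\in\pi(x)\cap\pi(y)$) of positive weight with fewer edges; and the remaining sign/type combinations reduce to these by deleting from $G'$ a suitable $x$--$y$ path whose interior avoids type-$2$ vertices, the crux once again being to choose such a path of small enough weight.
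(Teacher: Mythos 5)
Your overall strategy matches the paper's: locate a cycle $C$ with $k(C)\neq 1$ and $\ell(C)\neq 1$ and apply Lemma~\ref{lem:cycle}, falling back on Lemma~\ref{lem:path} when needed. Your Case~1 (all degree-$3$ vertices of the same type) is fine — picking two same-type degree-$3$ vertices and running a cycle through them via Menger's Theorem works — and the observation that $G$ cannot be a theta graph with branch vertices of different types (else $G$ would itself be a basic factor) is correct and a nice shortcut. However, the proposal has a genuine gap exactly where you flag one. Your Case~2 amounts to ``start with a cycle through one type-$1$ and one type-$2$ vertex, then reroute until the statistics are right,'' but you do not exhibit a rerouting rule that terminates or that avoids re-creating a cycle with a lone vertex of one type. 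The paper's proof is precisely the resolution of that bookkeeping: it splits on whether $G$ has exactly one or at least two type-$1$ vertices, repeatedly applies Lemma~\ref{lem:extra-path} to extend the current cycle to a theta subgraph $H$, and then uses the $2$-connectivity of a theta graph to route a return path inside $H$ that either includes or avoids the problematic type-$2$ vertex as needed, so the new cycle provably has $k\neq 1$ and $\ell\neq 1$. None of this is present in your write-up.

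Your ear-decomposition alternative also does not close the gap. The two cases you treat explicitly ($x,y$ both type-$1$ with $\omega(P_m)>0$, and $x,y$ both type-$2$ with $\omega(P_m)\le 0$) are correct, but the remaining sign/type combinations are the hard ones. For instance, if $x$ is type-$1$ and $y$ is type-$2$, neither $P_m$ nor $G'=G\setminus P_m$ is a factor ($\deg_{P_m}(y)=1\notin\pi(y)$ and $\deg_{G'}(x)=2\notin\pi(x)$), and the suggested fix — deleting an $x$--$y$ path in $G'$ whose interior avoids type-$2$ vertices — both requires such a path to exist and still leaves $\deg(x)=2\notin\pi(x)$ after deletion, so the construction is not even well-formed as stated, let alone weight-controlled. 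If you want to make this route work you would need a different surgery (e.g.\ deleting a cycle through $x$ that avoids type-$2$ vertices, mirroring what Lemma~\ref{lem:path} licenses), and controlling its weight brings you back to essentially the paper's case analysis. As written, the proposal identifies the right target but does not prove it.
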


\begin{proof}
Since $G$ is 2-connected, it contains at least three vertices and it contains no vertex of degree $1$.
Consider the number of type-1 vertices in $G$.
There are three cases. 
\begin{itemize}
    \item $G$ has no type-1 vertex. 
    
Since $G$ is 2-connected, there is a cycle $C\subseteq G$. Clearly,
$C$ has no type-1 vertex.
If $C$ has exactly one type-2 vertex, denoted by $v$,
then $v$ is the only vertex in $C$ such that $\deg_G(v)=3$. 
Then, there is an edge $e\in E_{G}$ incident to $v$ such that $e\notin E_C$.
It is easy to see that $e$ is a bridge of $G$, a contradiction with $G$ being 2-connected.
Thus, $C$ has no type-2 vertex, or it has at least two type-2 vertices.
Then, by Lemma~\ref{lem:cycle}, we are done.

\item $G$ has exactly one type-1 vertex.

Let $u$ be the type-1 vertex of $G$.
Since $G$ is 2-connected, there is a cycle  $C\subseteq G$ containing the vertex $u$. 
Since $\deg_C(u)=2<\deg_G(u)=3$, by Lemma~\ref{lem:extra-path}, 
there is a path $p_{uw}\subseteq G$  with endpoints $u, w\in V_C$ such that $E_{p_{uw}}\cap E_C=\emptyset$.

Consider the subgraph $H=p_{uw}\cup C$ of $G$.
$H$ is a theta graph where $\deg_H(u)=\deg_H(w)=3$.
All vertices of $H$ are 2-feasible except for $u$ which is $1$-feasible.
Note that $H$ is a basic factor of $\Omega$.
Since $G$ is not a basic factor of $\Omega$, $H\neq G$. 
Also since $G$ is connected, 
there exists an edge $e_{ts}=(t, s)$ incident  to a vertex $s\in V_{H}$ such that $e_{ts}\notin E_H$. 
Clearly,  $s$ is a vertex of type-2, $\deg_G(s)=3$ and $\deg_H(s)=2$.
Then, by Lemma~\ref{lem:extra-path},
there is a path  $p_{sr}$ with endpoints $s, r\in V_H$ such that $E_{p_{sr}}\cap E_H=\emptyset$.
Clearly, $\deg_G(r)=3$ and $r$ is a  vertex of type-2.
Since $s, r\in V_H$ and $H$ is a theta graph which is 2-connected,  we can find a path $p'_{sr}\subseteq H$ with endpoints $s$ and $r$ such that the only type-1 vertex $u$ in $H$ is not in  $p'_{sr}$.
Consider the cycle $C'=p_{sr}\cup p'_{sr}$.
It has no vertex of type-1, and it has at least two vertices $s$ and $r$ of type-2.
By Lemma~\ref{lem:cycle}, we are done.

\item $G$ has at least two type-1 vertices. 

Since $G$ is 2-connected and it contains at least two type-1 vertices, we can find a cycle $C\subseteq G$ that contains at least two  type-1 vertices.
Consider the number of type-2 vertices in $C$.
If the number is not 1, then by Lemma~\ref{lem:cycle},
we are done. 
Thus, we may assume that $C$ contains exactly one vertex of type-2, denoted by $v$.
Since $G$ is 2-connected and $\deg_G(v)=3>\deg_C(v)=2$, 
we can find a path $p_{vu}$ for some $u\in V_C$ such that $E_{p_{vu}}\cap E_C=\emptyset$.
We have $\deg_G(u)=3$.
Since $v$ is the only vertex of type-2 in $C$, $u$ is a vertex of type-1.
Vertices $v$ and $u$ split $C$ into two paths $p'_{vu}$ and $p''_{vu}$. 
Since $C$ contains at least two type-1 vertices, 
there exists some $w\in V_C$ where $w\neq u$ such that $w$ is of type-1.
Also, $w\neq v$ since $v$ is of type-2.
Since $w\in V_C=V_{p'_{vu}}\cup V_{p''_{vu}}$ and $V_{p'_{vu}}\cap V_{p''_{vu}}=\{u, v\}$, without loss of generality, we may assume that $w\in V_{p'_{vu}}$.

Consider the path $p_{vu}$.
If $p_{vu}$ contains at least two vertices of type-2, 
then the cycle $C'=p_{vu}\cup p'_{vu}$ contains at least two vertices  of type-2 and at least two vertices $u$ and $w$ of type-1.
Then, by Lemma~\ref{lem:cycle}, we are done.
Thus, we may assume that 
$v$ is the only vertex of type-2 in $p_{vu}$.
Consider the theta graph $H=p_{vu}\cup C$.
Then  $v$ is the only vertex of type-2 in $H$.
Note that $w\in V_H$, $\deg_H(w)=2<\deg_G(w)=3$.
Since $G$ is 2-connected, by Lemma~\ref{lem:extra-path}, we can  find a path $p_{ws}$ for some $s\in V_H$ such that $E_{p_{ws}}\cap E_H=\emptyset$.
Clearly $s\neq v$. Then, $s$ is of type-1 since $v$ is the only vertex of type-2 in $H$.

Consider the number of type-2 vertices in $p_{ws}$.
 Suppose that there is no  vertex of type-2 in $p_{ws}$.
    Since $H$ is  2-connected and $H$ contains only one vertex $v$ of type 2,
    we can find a path $p'_{ws}
    \subseteq H$ such that  $p'_{ws}$ does not contain the vertex $v$ of type-2.
    Then, 
    the cycle $p_{ws}\cup p'_{ws}$ has no vertex of type-2 and at least two vertices $w$ and $s$ of type-1.
    By Lemma~\ref{lem:cycle}, we are done. 
    Otherwise, there is at least one vertex of type-2 in $p_{ws}$.
    Since $H$ is 2-connected, 
        we can find a path $p''_{ws}
    \subseteq H$ such that $p''_{ws}$ contains the vertex $v$ of type-2.
    Then, 
    the cycle $p_{ws}\cup p''_{ws}$ has at least two vertices of type-2 and at least two vertices $w$ and $s$ of type-1.
    By Lemma~\ref{lem:cycle}, we are done.  \qedhere
    \end{itemize}
    \end{proof}

\begin{definition}[Induced sub-instance]
    For a key instance $\Omega=(G, \pi, \omega)$, and a factor $F\in \Omega$, the sub-instance of $\Omega$ induced by $F$, denoted by $\Omega_F$, is a key instance $(F, \pi_F, \omega_F)$  defined on the subgraph $F$ of $G$
where $\pi_F(x)=\pi(x)\cap [\deg_F(x)]\subseteq \pi(x)$ for every $x\in V_F$ and $\omega_F$ is the restriction of $\omega$ on $E_F$ (we may write $\omega_F$ as $\omega$ for simplicity).
\end{definition}

We are now ready to prove the first property of Theorem~\ref{lem-basic-factor-star} as restated in the next lemma.

\begin{lemma}\label{lem-MBFC}
Suppose that $\Omega=(G, \pi, \omega)$ is a key instance.
If $\omega(G)>0$,
then there is a basic factor $F$ of $\Omega$ such that $\omega(F)>0$.
\end{lemma}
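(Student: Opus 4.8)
The plan is to argue by induction on the number of edges $|E_G|$, using the preceding lemmas to reduce the problem. The base case is trivial: if $G$ is so small that it is itself a basic factor (a single path, cycle, etc.), then $F=G$ works. For the inductive step, suppose $\omega(G)>0$. If $G$ is itself a basic factor of $\Omega$, we are done immediately. Otherwise, I would aim to show that there is always a factor $F'\in\Omega$ with $\omega(F')>0$ and $|E_{F'}|<|E_G|$; then the induction hypothesis applied to the induced sub-instance $\Omega_{F'}=(F',\pi_{F'},\omega_{F'})$ yields a basic factor $F$ with $\omega(F)>0$, and since $F\subseteq F'\subseteq G$ and $\pi_{F'}(x)\subseteq\pi(x)$, $F$ is also a basic factor of $\Omega$.

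So the core of the argument is: if $\Omega=(G,\pi,\omega)$ is a key instance with $\omega(G)>0$ and $G$ is \emph{not} a basic factor of $\Omega$, then there is a factor $F'\in\Omega$ with $\omega(F')>0$ and $|E_{F'}|<|E_G|$. First I would handle the disconnected case: if $G$ is not connected, Lemma~\ref{lem:not-connect} gives exactly such an $F'$. So assume $G$ is connected. If $G$ is $2$-connected, Lemma~\ref{lem:2-connect} finishes the job. The remaining case is that $G$ is connected but not $2$-connected; since $G$ is subcubic, Lemma~\ref{lem:bridge} tells us $G$ contains a bridge $e=(a,b)$. Removing $e$ splits $G$ into two components $G_a\ni a$ and $G_b\ni b$, with $\deg_{G_a}(a)=\deg_G(a)-1$ and similarly for $b$; I would use the structure of key-instance degree constraints (namely $\{0,1\}$, $\{0,2\}$, $\{0,1,3\}$, $\{0,2,3\}$) to check which of the subgraphs $G_a\cup\{e\}$-type pieces, or $G\setminus e$, or $G_a$, $G_b$ themselves are legal factors of $\Omega$, and then use $\omega(G)=\omega(G_a)+\omega(G_b)+\omega(e)$ together with a case analysis on the sign of $\omega(e)$ to locate a positive-weight proper sub-factor. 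The point is that a bridge gives a lot of freedom: a degree-$3$ endpoint of a bridge with constraint $\{0,1,3\}$ or $\{0,2,3\}$ can drop to degree $2$ by deleting the bridge (using $2\in\{0,2,3\}$) or, if its constraint is $\{0,1,3\}$, it can instead have its two non-bridge edges handled inside one component; a careful enumeration of the few constraint types should always produce a valid factor of strictly smaller edge set whose weight is positive.

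The main obstacle I anticipate is the bridge case: ensuring in every combination of the four possible degree constraints at the two bridge endpoints that one can peel off a positive-weight piece while keeping a legal factor. The delicate point is type-1 vertices of degree $3$ (constraint $\{0,1,3\}$), which cannot take degree $2$, so deleting a single incident edge is illegal there — one must instead recurse on a component containing that whole vertex, or delete the bridge together with one more incident edge, and then check the arithmetic $\omega(G)>0 \Rightarrow$ some surviving piece is positive. I would organize this as a short finite case check over the (at most) $\binom{4+1}{2}$-ish combinations of endpoint constraints, leaning on the fact established for key instances that $\pi(v)\backslash D_v^F\in\mathscr{G}_2$ and parity arguments to keep the bookkeeping manageable. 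Everything else (disconnected, $2$-connected) is already packaged in the cited lemmas, so the write-up should be short once the bridge case is dispatched.
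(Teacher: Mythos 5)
Your overall skeleton matches the paper: induction on $|E_G|$, reduce the inductive step to producing a factor $F'\in\Omega$ with $\omega(F')>0$ and $|E_{F'}|<|E_G|$, and dispatch the disconnected and $2$-connected cases via Lemma~\ref{lem:not-connect} and Lemma~\ref{lem:2-connect}, leaving the bridge case from Lemma~\ref{lem:bridge}. The gap is in the bridge case, which you dismiss as ``a short finite case check over endpoint constraints'' plus arithmetic on $\omega(G_a)+\omega(G_b)+\omega(e)>0$. That does not work. Take a bridge whose endpoint $u$ has $\deg_G(u)=3$ and $\pi(u)=\{0,1,3\}$, and suppose all the positive weight sits in the component $G_u$ on $u$'s side (so $\omega$ of the other side, bridge included, is $\le 0$). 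None of your candidate pieces is usable: $G\setminus e$ and $G_u$ leave $u$ at degree $2$, which is infeasible for type-1, and the other side has non-positive weight. Your suggested fixes also fail locally: ``recurse on the component containing the whole vertex'' is not one of your pieces (in $G_u$ the vertex $u$ has degree $2$, so $G_u$ is not a factor), and ``delete the bridge together with one more incident edge'' reduces a neighbour's degree and can make that neighbour infeasible (e.g.\ a $\{0,2\}$ vertex dropping to degree $1$), so the surgery cascades. What you actually need in this situation is a positive-weight factor \emph{inside} $G_u$ in which $u$ has degree $0$, $1$, or (after gluing the rest of $G$ back through the bridge) $3$ --- and the existence of such a factor is not a consequence of any finite check at the bridge endpoints; it is a statement of the same nature as the lemma itself for a modified instance.

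This is exactly why the paper's proof does something more elaborate there: it takes the ``long bridge'' path $p_{uv}$ (so the endpoints have degree $1$ or $3$), rules out that both sides are single vertices or cycles (else $G$ is a path, tadpole or dumbbell, i.e.\ a basic factor), and then \emph{splits} the degree-$3$ endpoint $u$ of $G_u$ into two degree-$1$ vertices $u_1,u_2$ with constraint $\{0,1\}$, transferring the total weight $\omega(G\backslash G_u)$ onto one of the two split edges. This produces a strictly smaller key instance $\Omega'$ with $\omega'(G')=\omega(G)>0$, to which the induction hypothesis is applied; the resulting positive basic factor is then converted back into a positive factor of $\Omega$ with fewer edges by a four-way case analysis on which of the split edges it uses (the case where it uses both is precisely the one where one must glue $G\backslash G_u$ back on to raise $u$ to degree $3$, and the weight transfer is what makes the resulting weight positive). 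Your proposal applies the induction hypothesis only after a smaller positive factor has been found, so it has no mechanism to carry positivity across the bridge into a component whose attachment vertex cannot take degree $2$; without the vertex-splitting/weight-transfer step (or an equivalent recursion on a modified instance), the bridge case of your argument does not go through.
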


\begin{proof}
We prove this lemma by induction on the number of edges in $G$.

If $|E_G|=1$, then $G$ is a single edge.
Thus, $G$ is a basic factor of $\Omega$, and $\omega(G)>0$.
We are done. 

We assume that the lemma holds for all key instances where the underlying graph has no more than $n$ many edges. 
We consider a key instance $\Omega=(G, \pi, \omega)$ where $|E_G|=n+1$.

If $G$ is a basic factor of $\Omega$, then clearly we are done.
Thus, we may assume that $G$ is not a basic factor of $\Omega$. 
Suppose that we can find a factor $F\in \Omega$ such that $\omega(F)>0$ and $|E_F|<|E_G|=n+1$.
Then, consider the  sub-instance $\Omega_F$
of $\Omega$ induced by $F$.
Since $|E_F|<n+1$ and $\omega(F)>0$, 
by the induction hypothesis, 
there is basic factor $F'\in \Omega_F$ such that $\omega(F')>0$.
Since $\Omega_F\subseteq \Omega$, $F'\in \Omega$.
Then, we are done.
Thus, in order to establish the inductive step, it suffices to prove that 
there is a factor $F\in \Omega$ such that $|E_F|<|E_G|$ and $\omega(F)>0$.

By Lemmas~\ref{lem:not-connect}
and~\ref{lem:2-connect}, if $G$ is not connected or $G$ is 2-connected, then we are done.
Thus, we may assume that $G$ is a connected graph but not 2-connected.
By Lemma~\ref{lem:bridge}, $G$ contains at least a bridge.
Fix such a bridge of $G$. 
Let $p_{uv}$ be the path containing the bridge such that for every vertex $x\in V_{p_{uv}}
\backslash\{u, v\}$, $\deg_G(x)=2$ and $\deg_G(u), \deg_G(v)\neq 2$;
observe that such a path exists and it is unique.
In fact, the whole path can be viewed as a ``long bridge'' of the graph $G$. 
Then, $G\backslash p_{uv}$ is not connected and it has two connected components. 
Let $G_u\subseteq G\backslash p_{uv}$ be the part that contains $u$ and $G_v\subseteq G\backslash p_{uv}$  be the part that contains $v$. 

If both $G_u$ and $G_v$ are single vertices, then the graph $G$ is a path.
If both $G_u$ and $G_v$ are cycles, then $G$ is a dumbbell graph.
If one of $G_u$ and $G_v$ is a single vertex and the other one is a cycle, then $G$ is a tadpole graph.
In all these cases, $G$ is a basic factor of $\Omega$. A contradiction with our assumption. 
Thus,  among $G_u$ and $G_v$, at least one is neither a cycle nor a single vertex. 
Without loss of generality, we may assume that $G_u$ is neither a cycle nor a single vertex.

Since $G_u$ is not a single vertex, $\deg_G(u)\neq 1$.
By assumption, $\deg_G(u)\neq 2$.
Then $\deg_G(u)=3$, and hence $\deg_{G_u}(u)=2$.
Let $e_1=(u, w_1)$ and $e_2=(u, w_2)$ be the two edges incident to $u$ in $G_u$.
We slightly modify $G_u$ to get a new graph.
We replace the vertex $u$ in $G_u$ by two vertices $u_1$ and $u_2$, and replace the edges $(u, w_1)$ and $(u, w_1)$ in $G_u$ by two new edges $(u_1, w_1)$ and $(u_2, w_2)$ respectively.
We denote the new graph by $G'$.
With a slight abuse of notations, we still use $e_1$ and $e_2$ to denote the edges $(u_1, w_1)$ and $(u_2, w_2)$ in $G'$ respectively, and we say $E_{G_u}=E_{G'}$.
Then, the edge weight function $\omega$ can be adapted to $E_{G'}$.
We define the following instance $\Omega'=(G', \pi', \omega')$ where $\pi'(u_1)=\pi'(u_2)=\{0, 1\}$ and $\pi'(x)=\pi(x)$ for every $x\in V_{G'}\backslash\{u_1, u_2\}$, and $\omega'(e_1)=\omega(e_1)+\omega(G\backslash G_u)$, 
and $\omega'(e)=\omega(e)$ for every $e\in E_{G'}\backslash\{e_1\}$.
In other words, we add the total weight of the subgraph $G\backslash G_u$ to the edge $e_1$. 
Then, $\omega'(G')=\omega(G)>0$ and $|E_{G'}|=|E_{G_u}|<|E_G|$.
By the induction hypothesis, 
there is a basic factor $F\in \Omega'$ such that $\omega'(F)>0$.
We will recover  a factor of $\Omega$ from $F$ such that it has positive weight and fewer edges than $G$.
This will finish the proof of the inductive step. 

There are four cases depending on the presence of $e_1$ and $e_2$ in $F$.

\begin{itemize}
    \item 
 $e_1, e_2\notin E_F$.  
 Then, $u_1, u_2\notin V_F$.
 For every $x\in V_F$, $\deg_F(x)\in \pi'(x)=\pi(x)$. 
Thus,  $F$ is a basic factor of $\Omega$.
Clearly, $\omega(F)=\omega'(F)>0$ and $|E_F|=|E_{F'}|<|E_G|$.
We are done. 

\item  $e_1\in E_F$ and $e_2\notin E_F$. 
We can view $F$ as a subgraph of $G_u$ by changing the edge $(u_1, w_1)$ in $G'$ back to the edge $(u, w_1)$ in $G_u$. 
Then, the edge $(u, w_2)\notin E_F$.
Consider the subgraph $H=F\cup (G\backslash G_u)$ of $G$.
Since $(u, w_2)\notin E_F$, we have $(u, w_2)\notin E_H$. Then, $|E_H|<|E_G|$.  
Also, we have $$\omega(H)=\omega(F)+\omega(G\backslash G_u)=\omega'(F)>0.$$
The vertex set $V_H$ consists of three parts $V_1=V_F\backslash\{u\}$, $V_2=\{u\}$, and $V_3=V_{G\backslash G_u}\backslash\{u\}.$
For every $x\in V_1$, $\deg_H(x)=\deg_F(x)\in \pi(x)$.
For every $x\in V_3$, 
$\deg_H(x)=\deg_{G\backslash G_u}(x)=\deg_G(x)\in \pi(x)$.
Now, we consider the vertex $u$.
\begin{itemize}
    \item 

If $u$ is 2-feasible, then $\deg_H(u)=2\in \pi(x)$.
Thus, $H$ is a factor of $\Omega$ where $\omega(H)>0$ and $|E_H|<|E_G|$.

\item If $u$ is 1-feasible, then $F$ and $G\backslash G_u$ both are factors of $\Omega$ since $\deg_F(u)=\deg_{G\backslash G_u}(u)=1\in \pi(u)$.
Since $\omega(H)=\omega(F)+\omega(G\backslash G_u)>0$,
among $\omega(F)$ and $\omega(G\backslash G_u)$, at least one is positive.
Also, $|E_F|, |E_{G\backslash G_u}|<|E_H|<|E_G|$. 
We are done.
\end{itemize}

\item $e_2\in E_F$ and $e_1\notin E_F$. 
Again, we can view $F$ as a subgraph of $G_u$ where $(u, w_2)\in E_F$ and $(u, w_1)\notin E_F$. Then, we have $|E_F|<|E_{G_u}|<|E_G|$, and $\omega(F)=\omega'(F)>0$.
\begin{itemize}
    \item 
If $u$ is 1-feasible, then 
$F$ is a factor of $G$ where $|E_F|<|E_G|$ and $\omega(F)>0$. We are done. 

\item If $u$ is 2-feasible, then $G_u$ is a factor of $\Omega$ since $\deg_{G_u}(u)=2$. 
If $\omega(G_u)>0$, then we are done. 
Thus, we may assume that $\omega(G_u)\leq 0$.
Then, $\omega(G\backslash G_u)=\omega(G)-\omega(G\backslash G_u)\geq \omega(G)>0$.
Still consider the subgraph $H=F\cup (G\backslash G_u)$.
Then, $H$ is a factor of $\Omega$ since $\deg_H(u)=2\in \pi(u)$.
Also, $\omega(H)=\omega(F)+\omega(G\backslash G_u)>0$ and $|E_H|<|E_G|$.
We are done. 
\end{itemize}

\item $e_1, e_2\in E_F$. 
Then, $F$ (as a subgraph of $G'$) contains two vertices $u_1$ and $u_2$ of degree $1$.
Since $F$ is a basic factor, it is a path. 
Still we can view $F$ as a subgraph of $G_u$ by changing edges $(u_1, w_1)$ and $(u_2, w_2)$ in $G'$ to  edges  $(u, w_1)$ and $(u, w_2)$ in $G$.
Then, $F$ is a cycle in $G_u$.
Since $G_u$ is not a cycle and it has no isolated vertices, $|E_F|<|E_{G_u}|$.
Consider the subgraph $H=F\cup (G\backslash G_u)$ of $G$.
We have $|E_H|<|E_G|$ and $\omega(H)=\omega(F)+\omega(G\backslash G_u)=\omega'(F)>0$.
Also, one can check that $H$ is a factor of $G$ no matter whether $u$ is 1-feasible or 2-feasible since $\deg_H(u)=3\in \pi(u)$.
We are done.  \qedhere
\end{itemize}
\end{proof}

\subsection{Proof of the second property}
\label{subsec2}
Now we prove the second property of Theorem~\ref{lem-basic-factor-star} using the first property (Lemma~\ref{lem-MBFC}).

\begin{lemma}\label{lem-2}
Suppose that $\Omega=(G, \pi, \omega)$ is a key instance, and
 $u$ is a vertex of $G$ where $\deg_G(u)=1$ or $\deg_G(u)=3$ and  $\pi(u)=\{0, 2, 3\}$.
If $\omega(G)>0$ and $\omega(G)>\omega(F)$ for every basic factor $F$ of $\Omega$,
then
there is a basic factor $F^\ast$ of $\Omega$ such that $\omega(F^\ast)>0$ and $\deg_{F^\ast}(u)\equiv 0 \mod 2$. (Recall that we agree  $\deg_{F^\ast}(u)=0$ if $u\notin V_{F^\ast}$.)
\end{lemma}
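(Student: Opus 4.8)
The plan is to prove the statement by induction on $|E_G|$. Since $\omega(G)>\omega(F)$ for every basic factor $F$ of $\Omega$, the graph $G$ is not itself a basic factor, so $|E_G|\geq 2$. The engine of the induction is the following \emph{reduction}: suppose we can produce a factor $F\in\Omega$ with $\omega(F)>0$, with $|E_F|<|E_G|$, and with $\deg_F(u)$ even (so $\deg_F(u)\in\{0,2\}$, as $\deg_G(u)\in\{1,3\}$). Then $\Omega_F=(F,\pi_F,\omega_F)$ is a key instance with $\omega_F(F)>0$ and strictly fewer edges, and $u$ is no longer a special vertex of $\Omega_F$: either $u\notin V_F$, or $\deg_F(u)=2$ and $\pi_F(u)=\pi(u)\cap[2]=\{0,2\}$. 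Hence every factor of $\Omega_F$ has even degree at $u$, and applying the first property of Theorem~\ref{lem-basic-factor-star} (Lemma~\ref{lem-MBFC}) to $\Omega_F$ yields a basic factor $F^\ast\in\Omega_F\subseteq\Omega$ with $\omega(F^\ast)>0$ and $\deg_{F^\ast}(u)\equiv 0\bmod 2$, as required. So it suffices either to find such an intermediate $F$, or to pass to a strictly smaller key instance on which $u$ is still special and on which the weight-maximality hypothesis still holds, and invoke the inductive hypothesis there.

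The disconnected case is immediate. Let $G_1$ be the connected component of $G$ containing $u$ and $G_2=G\Delta G_1$ the rest; both are factors and each has fewer edges than $G$. If $\omega(G_2)>0$, apply the reduction with $F=G_2$ (which misses $u$). If $\omega(G_2)\leq 0$, then $\omega(G_1)\geq\omega(G)>0$, every basic factor of $\Omega_{G_1}$ is a basic factor of $\Omega$ and so has weight below $\omega(G)\leq\omega(G_1)$, and $\deg_{G_1}(u)=\deg_G(u)$ keeps $u$ special, so the inductive hypothesis applied to $\Omega_{G_1}$ finishes this case directly. For the remaining two cases the idea is to replay the proof of Lemma~\ref{lem-MBFC} after upgrading the auxiliary Lemmas~\ref{lem:path}, \ref{lem:cycle}, \ref{lem:2-connect} to \emph{$u$-aware} versions: each should output a positive-weight factor with fewer edges whose degree at $u$ is even, or route the construction around $u$, using that $\deg_F(u)=0$ is always an allowed even degree and (when $\deg_G(u)=3$) that $\deg_F(u)=2$ is allowed because $u$ is of type-$2$.

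When $G$ is connected but not $2$-connected, Lemma~\ref{lem:bridge} gives a bridge; take the maximal path $p_{ab}$ through it whose interior vertices all have degree $2$, with $\deg_G(a),\deg_G(b)\neq 2$, and let $G_a\ni a$ and $G_b\ni b$ be the two components of $G\Delta p_{ab}$. Interior vertices of $p_{ab}$ have degree $2$ while $u$ has degree $1$ or $3$, so $u$ lies in $V_{G_a}$ or $V_{G_b}$. Since $G$ is not a basic factor, at least one of $G_a,G_b$ is neither a single vertex nor a cycle, and one performs the vertex-split/weight-transfer reduction of Lemma~\ref{lem-MBFC} on a ``messy'' side; the bookkeeping branches on which side is messy, which side contains $u$, and the feasibility type of the split endpoint, but in every branch one recovers an intermediate factor $F$ as in the reduction above, or a strictly smaller key instance carrying $u$ and obeying weight-maximality. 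This is lengthy but routine.

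The core of the argument is the $2$-connected case. Here $G$ has no vertex of degree $1$, so $\deg_G(u)=3$ and $u$ is of type-$2$ with $\pi(u)=\{0,2,3\}$; this is the one place where the hypothesis ``$\pi(u)=\{0,2,3\}$ when $\deg_G(u)=3$'' is genuinely used, since then $2$ is an even feasible degree at $u$ --- the extra slack that a type-$1$ vertex does not have, which is exactly why Remark~\ref{remark:example} and the instance of Figure~\ref{fig:1.cc} break. I would start from a positive-weight basic factor $F_0$ provided by Lemma~\ref{lem-MBFC}, taken with the fewest edges among all positive-weight basic factors of $\Omega$. If $\deg_{F_0}(u)\in\{0,2\}$, the reduction applies with $F=F_0$ (using $F_0\subsetneq G$). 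Otherwise $\deg_{F_0}(u)=3$, so $u$ is a degree-$3$ vertex of $F_0$ and $F_0$ is a tadpole, a dumbbell, or a theta. In each of these shapes one of the three edges at $u$ lies on a cycle $C\subseteq F_0$ with $\deg_C(u)=2$, and every vertex of $C$ other than a degree-$3$ vertex of $F_0$ is $2$-feasible; so $C$ fails to be a factor only when some type-$1$ degree-$3$ vertex of $F_0$ lies on $C$. Splitting on the sign of $\omega(C)$ and on whether $C$ is a factor, one either (a) contradicts the minimality of $F_0$ by peeling off a smaller positive basic factor, (b) applies the reduction with $F=C$ when $C$ is a positive-weight factor, or (c) is left with a type-$1$ obstruction on $C$, which one removes by rerouting through $G$ using $2$-connectivity and Lemma~\ref{lem:extra-path} together with the $u$-aware analogues of Lemmas~\ref{lem:path} and~\ref{lem:cycle}, producing either a smaller positive factor with even degree at $u$ or a basic factor of weight $\geq\omega(G)$ that contradicts weight-maximality. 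I expect the main obstacle to be precisely case (c): showing that the type-$1$ vertices appearing on the cycles near $u$ can always be bypassed while keeping $\deg(u)$ even. This is the difficulty that makes the type-$2$ hypothesis indispensable and that needs the most delicate, case-heavy refinement of the structural lemmas of Section~\ref{subsec1}.
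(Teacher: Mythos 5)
Your ``reduction engine'' is sound: if you can exhibit a factor $F\in\Omega$ with $\omega(F)>0$, $|E_F|<|E_G|$ and $\deg_F(u)\in\{0,2\}$, then $\Omega_F$ is a key instance in which every factor has even degree at $u$ (since $\pi_F(u)\subseteq\{0,2\}$ or $u\notin V_F$), and Lemma~\ref{lem-MBFC} applied to $\Omega_F$ finishes. Your disconnected case is also fine. But the existence of such an intermediate $F$ in the connected cases is essentially the whole content of the lemma, and that is exactly what you have not established. In the non-$2$-connected case you propose to replay the bridge/vertex-split reduction from Lemma~\ref{lem-MBFC} and call the bookkeeping ``routine,'' but your induction needs more than $\omega>0$ on the smaller instance: it needs the hypothesis ``$\omega$ of the whole graph exceeds the weight of \emph{every} basic factor,'' and this is not preserved by the split, because a basic factor of the split instance containing the reweighted edge $e_1$ corresponds to a subgraph of $G$ containing all of $G\backslash G_u$, which need not be a basic factor of $\Omega$, so the original maximality hypothesis says nothing about its weight. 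In the $2$-connected case your concrete moves also break down: starting from an edge-minimal positive basic factor $F_0$ with $\deg_{F_0}(u)=3$ and a cycle $C\ni u$ with $\omega(C)\le 0$, ``peeling off'' $C$ does not yield a smaller basic factor, because the remainder of a tadpole or dumbbell is a path ending at $u$ with degree $1$, and $1\notin\pi(u)=\{0,2,3\}$; so no contradiction with minimality arises. Finally, you yourself flag case (c) --- bypassing type-$1$ vertices on the cycles through $u$ while keeping $\deg(u)$ even --- as an expected obstacle rather than a proved step. That case is precisely where the difficulty illustrated by Figure~\ref{fig:1.cc} lives, so the proposal has a genuine gap at its core rather than a routine omission.

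For contrast, the paper does not induct on $|E_G|$ at all. It picks $F$ to be a positive basic factor of \emph{maximum} weight (not minimum edge count), notes that if $\deg_F(u)$ is odd then $\deg_F(u)=\deg_G(u)$, builds the complementary key instance on $G'=G\backslash F$ (reusing Lemma~\ref{lem-MBFC} there, which only needs $\omega(G')=\omega(G)-\omega(F)>0$, so the maximality hypothesis is used only once and globally), shows $H=F\cup F'$ is a factor, and then runs a shape-by-shape analysis of $F$, $F'$ and their intersection points, repeatedly exploiting the weight-maximality of $F$ to obtain inequalities between path weights and to assemble a positive basic factor avoiding odd degree at $u$. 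If you want to salvage your inductive plan, you would have to prove genuinely $u$-aware analogues of Lemmas~\ref{lem:path}, \ref{lem:cycle} and \ref{lem:2-connect} and a bridge reduction that preserves the maximality hypothesis; as written, those are assumed rather than proved.
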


\begin{proof}
By Lemma~\ref{lem-MBFC}, there exists at least one basic factor of $\Omega$ such that its weight is positive.
Among all such basic factors,  we pick an $F$ such that $\omega(F)$ is the largest.
We have $0<\omega(F)<\omega(G)$.
If $\deg_{F}(u)$ is even, then we are done.
Thus, we may assume that $\deg_{F}(u)$ is odd.
Since $F$ is a basic factor and it contains a vertex $u$ of odd degree, $F$ is not a cycle.
 By the definition of basic factors,  $F$ contains exactly one more vertex $v$ of odd degree.
Since $F$ is a factor of $\Omega$, $\deg_{F}(u)\subseteq \pi(u)$. 
Recall that $\deg_G(u)=1$ or $3$.
If $\deg_G(u)=1$, then $\pi(u)=\{0, 1\}$, and hence $\deg_F(u)=1$.
If $\deg_G(u)=3$, then $\pi(u)=\{0, 2, 3\}$, and hence $\deg_F(u)=3$.
Thus, $\deg_{F}(u)$ always equals $\deg_{G}(u)$.

Consider the graph $G'=G\backslash F$, i.e., the subgraph of $G$ induced by the edge set $E_G\backslash E_{F}$.
Consider the instance $\Omega'=(G', \pi', \omega')$ where for every $x\in V_{G'}$, $\pi'(x)=\{0, 1\}$ if $\deg_{G'}(x)=1$, $\pi'(x)=\{0, 2\}$ if $\deg_{G'}(x)=2$ and $\pi'(x)=\pi(x)$ if $\deg_{G'}(x)=3$, and $\omega'$ is the weight function $\omega$ restricted to $G'$.  
Note that $\Omega'$ is also a key instance, but it is not necessarily a sub-instance of $\Omega$.
Since $\omega(G)>\omega(F)$, we have $\omega'(G')=\omega(G')=\omega(G)-\omega(F)>0$.
Without causing ambiguity, we may simply write $\omega'$ as $\omega$ in the instance $\Omega'$.
By Lemma~\ref{lem-MBFC}, there exists a basic factor $F'$ of $\Omega'$ such that $\omega(F')>0$.
Since $E_{F'}\subseteq E_{G}\backslash E_{F}$, $F$ and $F'$ are edge-disjoint.
Let $H=F\cup F'$, which is the subgraph of $G$ induced by the edge set $E_F\cup E_{F'}$. 
We show that $H$ is a factor of $\Omega$.

Let $V_{\cap}=V_F\cap V_{F'}$. 
First we show that for every $x\in V_H\backslash V_{\cap}$, $\deg_H(x)\in \pi(x)$.
If $x\in V_F\backslash  V_{\cap}$, then $\deg_H(x)=\deg_F(x)$.
Since $F\in \Omega$, $\deg_F(x)\in \pi(x)$.
Then, $\deg_H(x)\in \pi(x)$. 
If $x\in V_{F'}\backslash  V_{\cap}$, then $\deg_H(x)=\deg_{F'}(x)$.
Since $x\notin V_{F}$ and $G'=G\backslash F$, $\deg_{G'}(x)=\deg_G(x)$.
Then,  by the definition of $\Omega'$, we have $\pi'(x)=\pi(x)$.
Since $F'$ is a factor of $\Omega'$, $\deg_{F'}(x)\in \pi'(x)$.
Thus, $\deg_H(x)\in \pi(x)$. 
Now, we consider vertices in $V_\cap$. 
Since $F$ and $F'$ are edge disjoint, for every $x\in V_{\cap}$  we have $\deg_H(x)=\deg_F(x)+\deg_{F'}(x)\leq \deg_{G}(x)\leq 3$. 
Also, $\deg_F(x), \deg_{F'}(x)\geq 1$ since $F$ and $F'$ are subcubic graphs which have no isolated vertices. 
\begin{itemize}
    \item 
If $\deg_F(x)=1$, then $1\in \pi(x)$. The vertex $x$ is 1-feasible. Thus, $\deg_G(x)\neq 2$.
Since $\deg_G(x)>\deg_F(x)=1$, $\deg_G(x)=3$.
Then, $\deg_{G'}(x)=\deg_G(x)-\deg_{F}(x)=2$, $\pi'(x)=\{0, 2\}$ and $\deg_{F'}(x)=2$.
\item
If $\deg_F(x)=2$, then $\deg_G(x)=3$ since  $\deg_G(x)>\deg_F(x)$. 
Then, $\deg_{G'}(x)=\deg_G(x)-\deg_{F}(x)=1$, $\pi'(x)=\{0, 1\}$ and $\deg_{F'}(x)=1$.
\end{itemize}
Thus, for every $x\in V_{\cap}$, $\deg_H(x)=\deg_F(x)+\deg_{F'}(x)=3\in \pi(x)$. Thus, $H$ is a factor of $\Omega$.

Consider the sub-instance $\Omega_H=(H, \pi_H, \omega_H)$ of $\Omega$ induced by $H$ (we will write $\omega_H$ as $\omega$ for simplicity).
We will show that we can find a 
a basic factor $F^\ast$ of $\Omega_H$ such that $\omega(F^\ast)>0$ and $\deg_{F^\ast}(u)\equiv 0 \mod 2$.
Clearly, $F^\ast$ is also a factor of $\Omega$. 

Consider the set $V_{\cap}$ of intersection points. 
If $V_{\cap}=\emptyset$, then for every $x\in V_{F'}$, $\deg_{F'}(x)=\deg_H(x)\in \pi(x)$.
Thus, $F'$ is a basic factor of $\Omega$ where $\omega(F')>0$ and $\deg_{F'}(u)=0$. That is, $F'$ is the desired $F^\ast$.
We are done.
Thus, we may assume that $V_{\cap}$ is non-empty. 
For every $x\in V_{\cap}$, $\deg_F(x)=1$ and $\deg_{F'}(x)=2$, or $\deg_F(x)=2$ and $\deg_{F'}(x)=1$.
Recall that $F$ is a basic factor containing two vertices $u, v$ of odd degree, and $\deg_F(u)=\deg_G(u)$.
Clearly,  $u\notin V_\cap$. 

We consider the possible forms of $F$ and $F'$. Recall that $F$ is not a cycle. We show that $F'$ is also not a cycle.
For a contradiction, suppose that $F'$ is a cycle. Then, all vertices of $F'$ have degree $2$. 
Thus, the only possible vertex in $V_\cap$ is $v$. 
Since $V_\cap$ is non-empty, $V_\cap=\{v\}$. 
Then, $\deg_{F}(v)=1$ and $\deg_{F'}(v)=2$. 
If $\deg_F(u)=1$, then $F$ is a path. The graph $H$ is a tadpole graph where $v$ is the only vertex of degree $3$.
If $\deg_F(u)=3$, then $F$ is a tadpole graph.
    The graph $H$ is a dumbbell graph where $v$ and $u$ are  the two vertices of degree $3$.
    In both cases, $H$ is a basic factor of $\Omega$.
    Since $\omega(F')>0$, we have
$\omega(H)=\omega(F)+\omega(F')>\omega(F)$ which leads to a contraction with $F$ being a basic factor with the largest weight. 
Thus, $F'$ is a basic factor which is not a cycle.
Then, it contains exactly two vertices $s, t$ of odd degree. 
Then, $V_\cap\subseteq\{v, s, t\}$. 

We consider the graph $H$ depending on the forms of $F$ and $F'$, and the vertices in $V_\cap$. 
There are 5 main cases. 
\begin{enumerate}
         \item[\Rmnum{1}.] $F$ is a path.
         \item[\Rmnum{2}.] $F$ is a tadpole graph and $\deg_F(u)=3$.
         \item[\Rmnum{3}.] $F$ is a tadpole graph and $\deg_F(u)=1$.
         \item[\Rmnum{4}.] $F$ is a dumbbell graph.
         \item[\Rmnum{5}.] $F$ is a theta graph.
\end{enumerate}

Recall that for two points $x$ and $y$, we use $p_{xy}$, $p'_{xy}$ or $p''_{xy}$ to denote a path with endpoints $x$ and $y$.
We also use $q_{xy^3}$ or $q'_{xy^3}$ to denote a tadpole graph where $x$ is the  vertex of degree $1$ and $y$ is the  vertex of degree $3$, and $\theta_{xy}$  to denote a theta graph where $x$ and $y$ are the two points of degree $3$.
In the following Figures~\ref{fig_1.a} to~\ref{fig:3.2.b}, we use hollow nodes to denote  $1$-feasible vertices, solid nodes to denote $2$-feasible vertices, semisolid nodes to denote vertices that are possibly $1$-feasible or $2$-feasible, red-colored lines to denote paths in $F$, and blue-colored lines to denote paths in $F'$.

\vspace{1ex}
\noindent {\bf Case \Rmnum{1}:} $F$ is a path. 
There are 4 subcases depending on the form of $F'$.
\begin{enumerate}
    \item[\Rmnum{1}.1]\label{case1.1} $F$ and $F'$ are both paths. Then, $V_\cap\subseteq \{v, s, t\}$.
    There are $5$ subcases: $V_\cap=\{v\}$, $V_\cap=\{s\}$ or $\{t\}$,
    $V_\cap=\{v, s\}$ or $\{v, s\}$, $V_\cap=\{s, t\}$, and $V_\cap=\{v, s, t\}$.
    \begin{enumerate}
        \item\label{case1.1.a} $V_\cap=\{v\}$.
        
               \begin{figure}[!htbp]
            \centering
            \includegraphics[width=7.2cm]{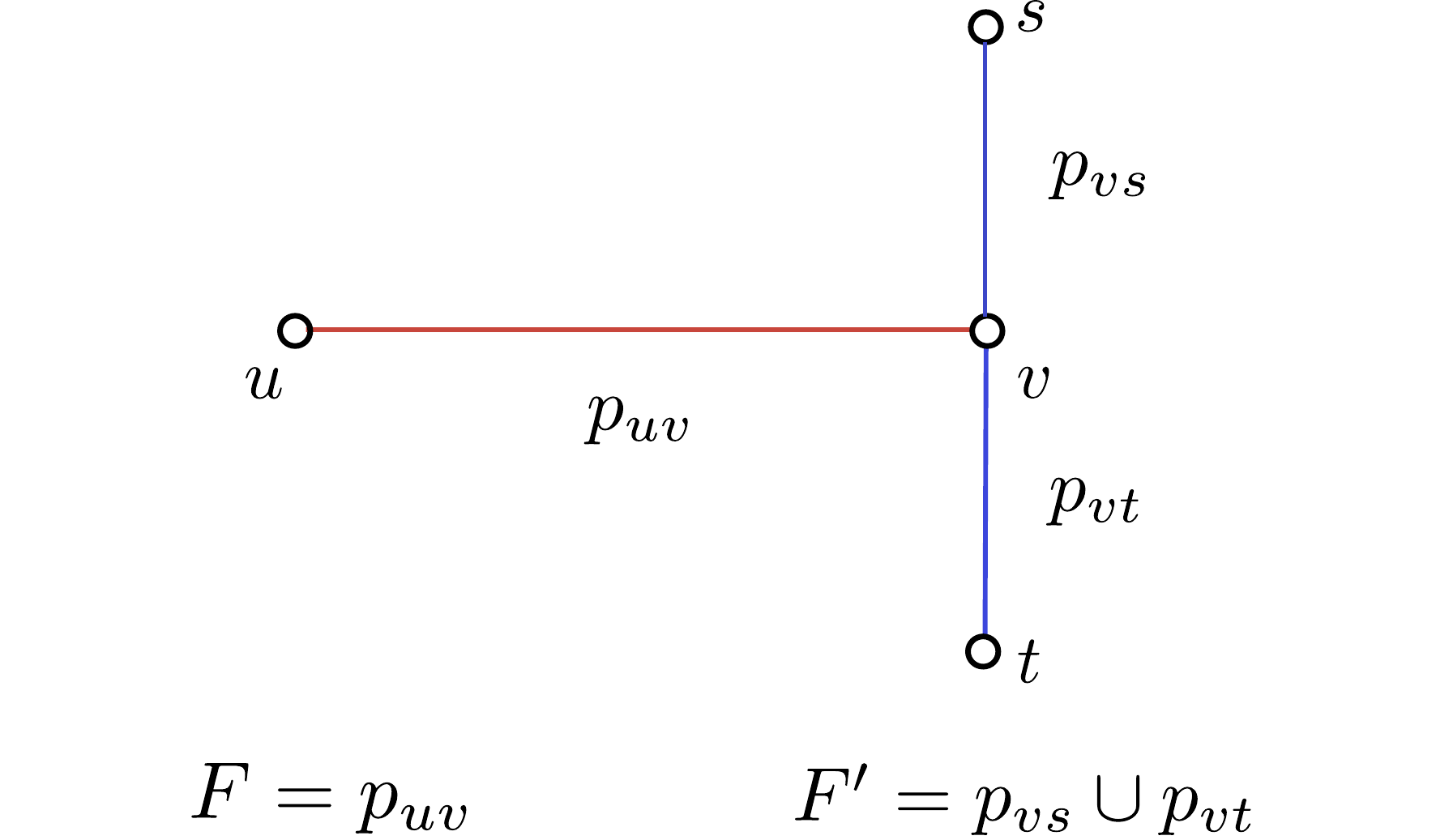}
            \caption{The graph $H$ in Case \Rmnum{1}.1.(a)}
            \label{fig_1.a}
        \end{figure}
        In this case,   $\deg_H(u)=\deg_H(s)=\deg_H(t)=1$, $\deg_H(v)=3$, and $\pi(v)=\{0, 1, 3\}$ since $\deg_F(v)=1\in \pi(v)$. 
        The graph $H$ consists of three edge-disjoint paths $p_{uv}$, $p_{vs}$ and $p_{vt}$.
         Then, $F=p_{uv}$ and $F'=p_{vs}\cup p_{vt}$.
         (See Figure~\ref{fig_1.a}.)

        Since $\omega(F')=\omega(p_{vs})+\omega(p_{vt})>0$, among $\omega(p_{vs})$ and $\omega(p_{vt})$, at least one is positive. 
        Without loss of generality, we may assume that $\omega(p_{vs})>0$. 
        Since $u$ does not appear in $p_{vs}$, we have $\deg_{p_{vs}}(u)=0$.
        For every vertex $x$ in $p_{vs}$ where $x\neq v$, $\deg_{p_{vs}}(x)=\deg_H(x)\in \pi(x)$.
        Also, $\deg_{p_{vs}}(v)=1\in \pi(v)$. 
        Thus, the path $p_{vs}$ is a basic factor of $\Omega$ where $\omega(p_{vs})>0$ and $\deg_{p_{vs}}(u)=0$. 
               
        \item\label{case1.1.b} $V_\cap=\{s\}$ or $\{t\}$.
        
        These two cases are symmetric. We only consider the case that $V_\cap=\{s\}$.

            \begin{figure}[!hbpt]
             \centering
             \includegraphics[width=7.2cm]{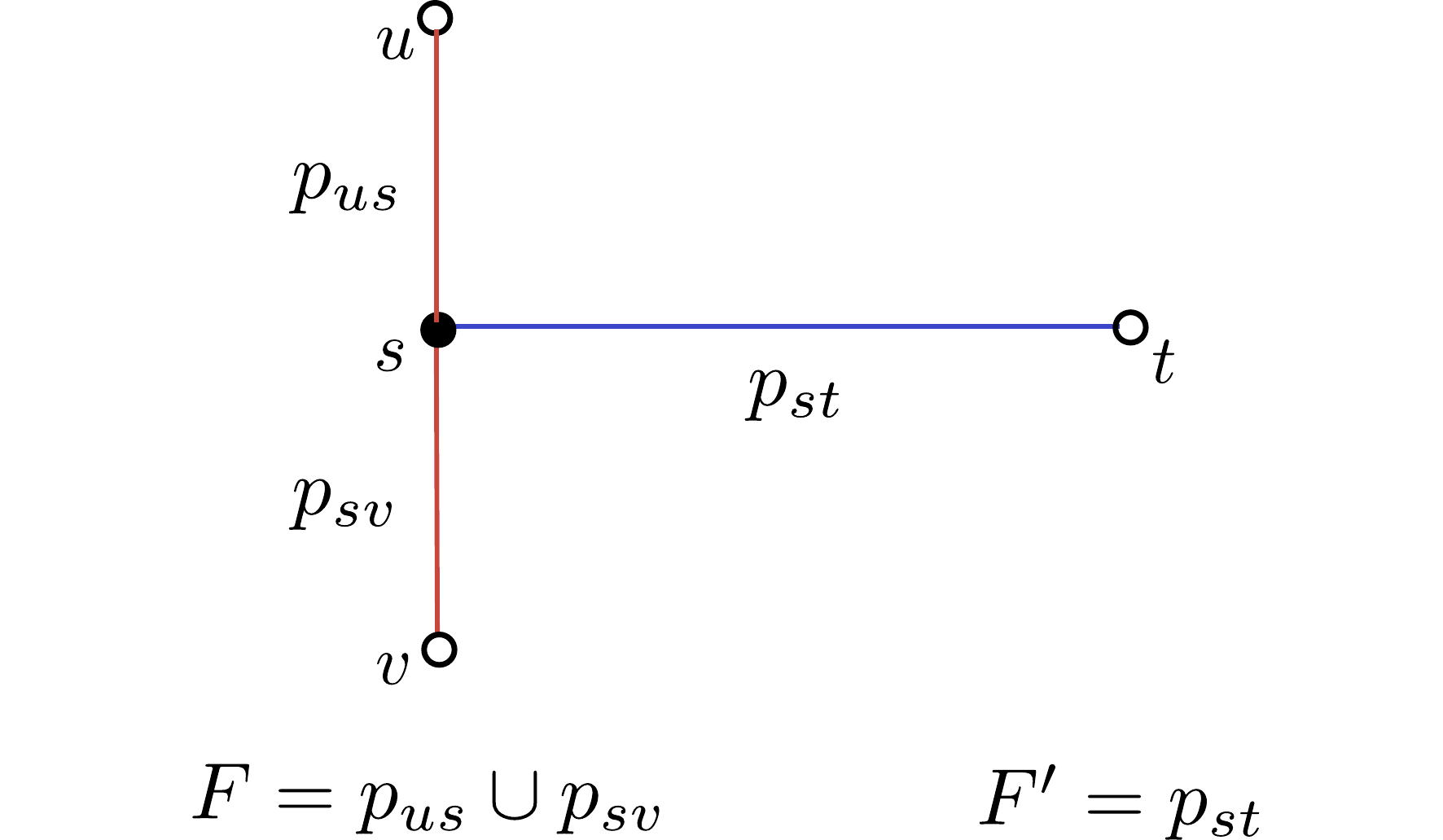}
             \caption{The graph $H$ in Case \Rmnum{1}.1.(b)}
             \label{fig:1.b}
         \end{figure}
        
        In this case, $\deg_H(s)=3$, $\deg_H(u)=\deg_H(v)=\deg_H(t)=1$, and $\pi(s)=\{0, 2, 3\}$ since $\deg_{F'}(s)=1$ and $\deg_F(s)=2\in \pi(s)$. 
             The graph $H$ consists of three edge-disjoint paths $p_{us}$, $p_{sv}$ and $p_{st}$.
           Then, $F'=p_{st}$ and  $F=p_{us}\cup p_{sv}$. (See Figure~\ref{fig:1.b}.)

             Note that $\omega(p_{st})=\omega(F')>0$.
             Let $p_{ut}=p_{us}\cup p_{st}$
            be the path with endpoints $u$ and $t$.
             For every vertex $x$ in $p_{ut}$ where $x\neq s$, we have $\deg_{p_{ut}}(x)=\deg_H(x)\in \pi(x)$.
             Also, $\deg_{p_{ut}}(s)=2\in \pi(s)$. 
             Thus, $p_{ut}$ is a  basic factor of $\Omega$. 
             Then, $\omega(F)\geq \omega(p_{ut})$ since $F$ is a basic factor of $\Omega$ with the largest weight $\omega(F)$.
             Then, $$\omega(F)=\omega(p_{us})+\omega(p_{sv})\geq \omega(p_{us})+\omega(p_{st})=\omega(p_{ut}).$$
             Thus, $\omega(p_{sv})\geq \omega(p_{st})>0$.
             Let $p_{vt}=p_{sv}\cup p_{st}$ be the path
             with endpoints $v$ and $t$.
             Then, $$\omega(p_{vt})=\omega(p_{sv})+\omega(p_{st})>0.$$
             Since $u$ is not in $p_{vt}$, $\deg_{p_{vt}}(u)=0$.
             Similar to the proof of $p_{ut}\in \Omega$, we have $p_{vt}\in \Omega$. 
             Thus, the path $p_{vt}$ is a basic factor of $\Omega$ where $\omega(p_{vt})>0$ and $\deg_{p_{vt}}(u)=0$. 
             \item\label{case1.1.c} $V_\cap=\{v, s\}$ or $\{v, t\}$. 
         
           These two cases are symmetric. We only consider the case that $V_\cap=\{v, s\}$. 
           
              \begin{figure}[!htpb]
               \centering
               \includegraphics[height=2.8cm]{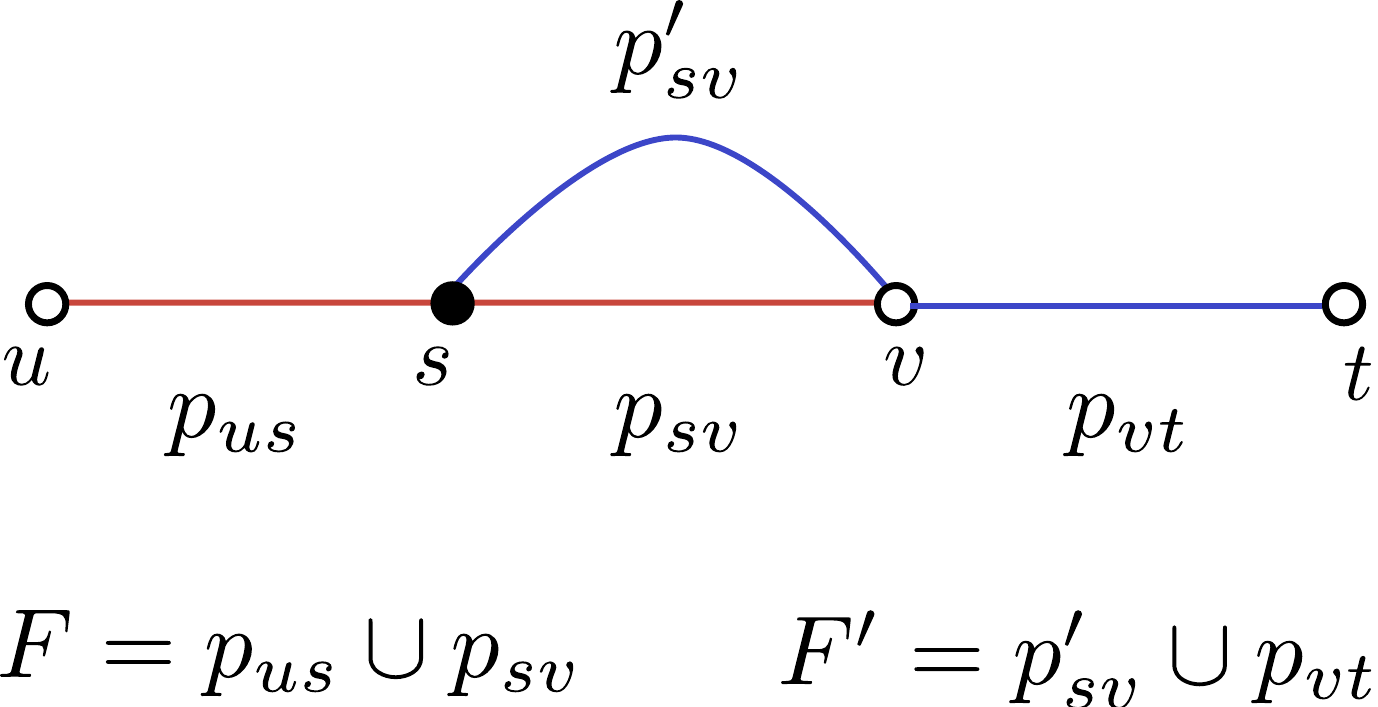}
               \caption{The graph $H$ in Case \Rmnum{1}.1.(c)}
               \label{fig:1.c}
           \end{figure}
           
           In this case, $\deg_H(v)=\deg_H(s)=3$,  $\deg_H(u)=\deg_H(t)=1$, $\pi(v)=\{0, 1, 3\}$ since $\deg_F(v)=1\in \pi(v)$, and $\pi(s)=\{0, 2, 3\}$ since $\deg_F(s)=2\in \pi(s)$.  
           The point $s$ splits $F$ into two paths $p_{us}$ and $p_{sv}$.
           Then, $F=p_{us}\cup p_{sv}$.
           The point $v$ splits  $F'$ into two paths $p'_{sv}$ and $p_{vt}$. 
            Then, $F'=p'_{sv}\cup p_{vt}$. 
             (See Figure~\ref{fig:1.c}.)

            Consider the path $p'_{uv}=p_{us}\cup p'_{sv}$. 
            Note that $\deg_{p'_{uv}}(s)=2\in \pi(s)$.
             Then,  $p'_{uv}$ is a basic factor of $\Omega$.
             Since, $F$ is a basic factor of $\Omega$ with the largest weight,
             we have
$$\omega(F)=\omega(p_{us})+\omega(p_{sv})\geq \omega(p_{us})+\omega(p'_{sv})=\omega(p'_{uv}).$$
         Thus,   $\omega(p_{sv})\geq \omega(p'_{sv})$.
         Let $F^\ast$ be the tadpole graph  $q_{tv^3}=p_{sv}\cup p'_{sv} \cup p_{vt}$.
         Note that $\deg_{F^\ast}(s)=2\in \pi(s)$ and $\deg_{F^\ast}(v)=3\in \pi(v)$.
         Then, $F^\ast$ is a basic factor of $\Omega$
         and $\deg_{F^\ast}(u)=0$.
         Also, the path $p_{vt}$ is a basic factor of $\Omega$ since
         $\deg_{p_{vt}}(v)=1\in \pi(v)$,
         and $\deg_{p_{vt}}(u)=0$.
         Then, $$\omega(F^\ast)+\omega(p_{vt})=\omega(p_{sv})+\omega(p'_{sv})+\omega(p_{vt})+\omega(p_{vt})\geq 2(\omega(p'_{sv})+\omega(p_{vt}))=2\omega(F')>0.$$
         Thus, among $\omega(F^\ast)$ and $\omega(p_{vt})$, at least one is positive.
         Thus, $F^\ast$ or $p_{vt}$ is a desired basic factor of $\Omega$ that satisfies the requirements. 
         \item\label{case1.1.d} $V_{\cap}=\{s, t\}$.
            \begin{figure}[!h]
             \centering
  \includegraphics[height=2.7cm]{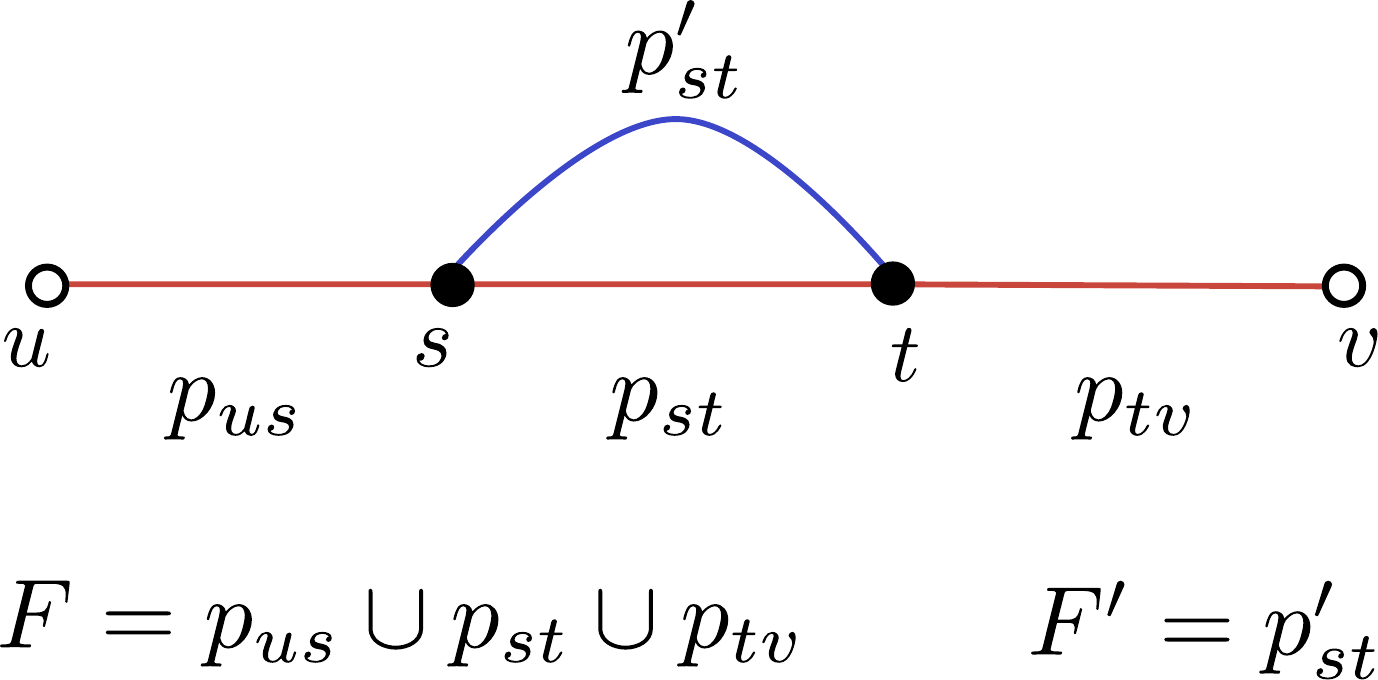}
             \caption{The graph $H$ in Case \Rmnum{1}.1.(d)}
             \label{fig:1.d}
         \end{figure}      
         
         In this case, $\deg_H(u)=\deg_H(v)=1$, $\deg_H(s)=\deg_H(t)=3$, and $\pi(s)=\pi(t)=\{0, 2, 3\}$. 
         The points $s$ and $t$ split $F$ into three paths. 
         Without loss of generality, we may assume that $s$ is closer to $u$ and $t$ is closer to $v$.
         Then, the three paths are $p_{us}$, 
         $p_{st}$, 
         and $p_{tv}$, and $F=p_{us}\cup p_{st} \cup p_{ts}$.
         Also, $F'$ is a path with endpoints $s$ and $t$, which is disjoint with $p_{st}$.
          (See Figure~\ref{fig:1.d}.)
         
         Consider the path $p'_{uv}=p_{us}\cup F'\cup p_{tv}$.
         One can check that $p'_{uv}$ is a basic factor of $\Omega$.
         Then, $ \omega(F)\geq \omega(p'_{uv})$.
         Thus, $ \omega(p_{st})\geq \omega(F')>0$.
         Consider the cycle $F^\ast=F'\cup p_{st}$.
         Also, one can check that $F^\ast$ is a basic factor of $\Omega$. 
         Moreover, $\omega(F^\ast)=\omega(F')+\omega(p_{st})>0$ and $\deg_{F^\ast}(u)=0$. We are done.
         \item $V_{\cap}=\{v, s, t\}.$
         
         \begin{figure}[!h]
              \centering
    \includegraphics[height=2.7cm]{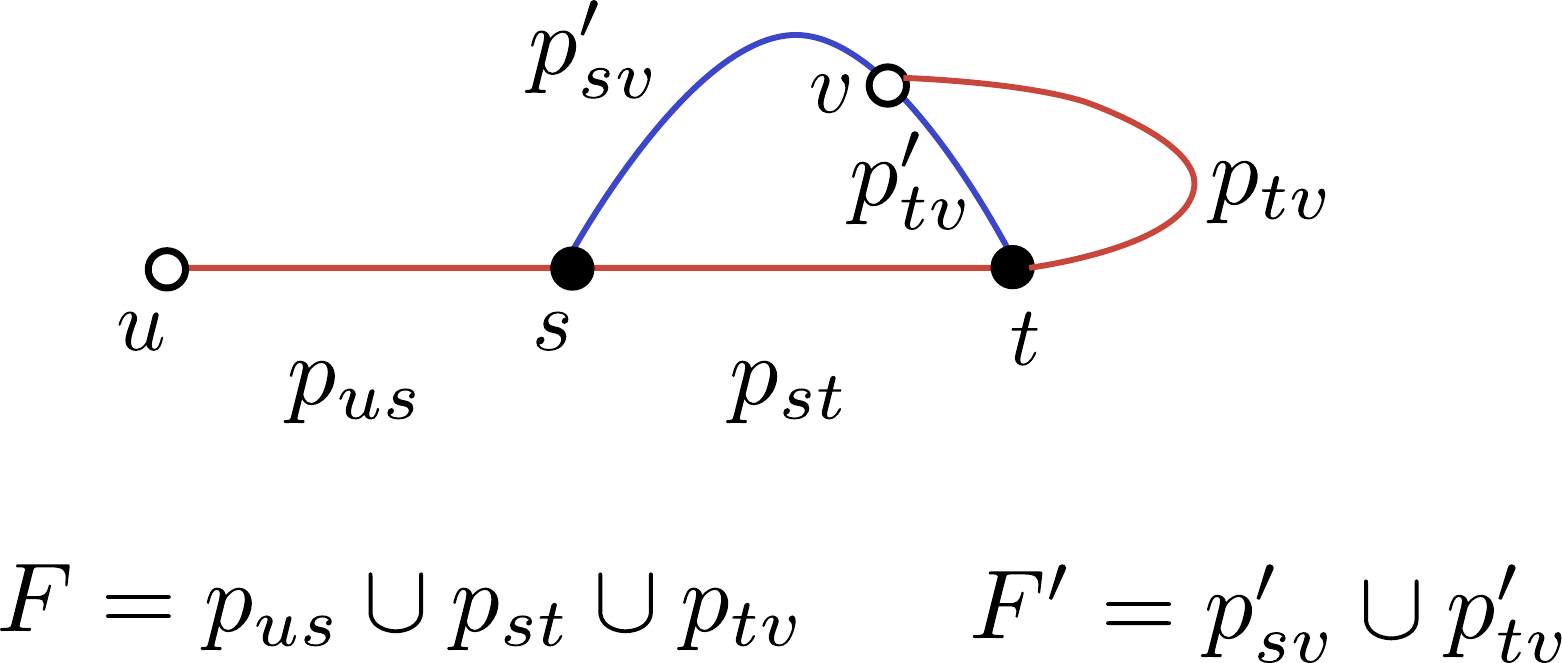}
              \caption{The graph $H$ in Case \Rmnum{1}.1.(e)}
              \label{fig:1.e}
          \end{figure}

         In this case,  $\deg_H(u)=1$, $\deg_H(v)=\deg_H(s)=\deg_H(t)=3$, $\pi(v)=\{0, 1, 3\}$, and $\pi(s)=\pi(t)=\{0, 2, 3\}$. 
          The points $s$ and $t$ split $F$ into three paths. 
          Without loss of generality, we assume that they are $p_{us}$, 
          $p_{st}$, 
          and $p_{tv}$. 
          Then, $F=p_{us}\cup p_{st} \cup p_{tv}$.
          The point $v$ splits $F'$ into two paths, $p'_{sv}$ 
          and $p'_{tv}$.
         Then, $F'=p'_{sv}\cup p'_{tv}$.
         (See Figure~\ref{fig:1.e}.)

          Consider the path $p'_{uv}=p_{us}\cup p'_{sv}$.
          One can check that it is a basic factor of $\Omega$.
          Since $\omega(F)\geq \omega(p'_{uv})$, we have $$\omega(p_{st})+\omega(p_{tv})\geq \omega(p'_{sv}).$$
          Consider the path $p''_{uv}=p_{us}\cup p_{st}\cup p'_{tv}$.
         One can check that it is also a basic factor of $\Omega$.
          Since $\omega{(F)}\geq \omega(p''_{uv})$, we have 
          $$\omega(p_{tv})\geq \omega(p'_{tv}).$$
          Consider the tadpole graph $q_{uv^3}=p_{us}\cup p'_{sv}\cup p'_{tv} \cup p_{tv}$.
          One can check that it is also a basic factor of $\Omega$.
           Since $\omega{(F)}\geq \omega(q_{uv^3})$, we have
          $$\omega(p_{st})\geq \omega(p'_{sv})+\omega(p'_{tv}).$$
          Sum up the above three inequalities, and we have 
          $$2(\omega(p_{st})+\omega(p_{tv}))\geq 2(\omega(p'_{sv})+\omega(p'_{tv}))=2\omega(F')>0.$$
          Consider the theta graph $F^\ast=p_{st}\cup p_{tv}\cup p'_{sv}\cup p'_{tv} .$
          Still one can check that it is a basic factor of $\Omega$.
          Moreover, $$\omega(F^\ast)=\omega(p_{st})+\omega(p_{tv})+\omega(p'_{sv})+\omega(p'_{tv})>0$$ and $\deg_{F^\ast}(u)=0$.
          We are done.
          \end{enumerate}
           We are done with Case~\href{case1.1}{\Rmnum{1}.1} where $F$ and $F'$ are both paths. 
           \item[\Rmnum{1}.2.]\label{case1.2} $F$ is a path and $F'$ is a tadpole graph. Without loss of generality, we may assume  that $\deg_{F'}(s)=1$ and $\deg_{F'}(t)=3$. 
           In other words, $F'$ consists of a path with endpoints $s$ and $t$, and a cycle $C$ containing the vertex $t$.
           Then, $V_{\cap}\subseteq \{v, s\}$.
           There are three subcases: $V_{\cap}=\{v\}$, $V_{\cap}=\{s\}$, and $V_{\cap}=\{v, s\}$. 
           \begin{enumerate}
               \item $V_{\cap}=\{v\}$.
               
               In this case, $\deg_H(u)=\deg_H(s)=1$, $\deg_H(v)=\deg_H(t)=3$, $\pi(v)=\{0, 1, 3\}$, and   $\pi(t)=\{0, 1, 3\}$ or $\{0, 2, 3\}$.
               There are two subcases depending on whether the intersection point $v$ appears in the path part or the cycle part of  $F'$.

               \begin{enumerate}
                   \item $v$ appears in the path part. 

                   Note that for every $x\in V_C\backslash \{t\}$, $\deg_H(x)=2$, and $\deg_H(t)=3$.
                   We say such a cycle with exactly one vertex of degree $3$ in $H$ is a \emph{dangling} cycle in $H$.
                   Let $e_t$ be the edge incident to $t$ where $e_t\notin E_C$. 
                   We call the vertex $t$ 
                   the \emph{connecting point} of $C$, and the edge $e_t$ 
                   the \emph{connecting bridge} of $C$.

                   Consider the graph $H'=H\backslash C$.
Notice that $\deg_{H'}(x)=\deg_H(x)$ for every $x\in V_{H'}\backslash \{t\}$ and $\deg_{H'}(t)=1$.
Consider the instance $\Omega_{H'}=(H', \pi_{H'}, \omega_{H'})$ where $\pi_{H'}(x)=\pi_H(x)$ for every $x\in V_{H'}\backslash\{t\}$ and $\pi_{H'}(t)=\{0, 1\}$, and $\omega_{H'}(e)=\omega(e)$ for every $e\in E_{H'}\backslash \{e_t\}$ and $\omega_{H'}(e_t)=\omega(e_t)+\omega(C)$.
In other words, the instance $\Omega_{H'}$ is obtained from $\Omega_H$ by contracting the dangling cycle $C$ to its connecting point $t$ and adding the total weight of $C$ to its connecting bridge $e_t$.
Clearly, $\Omega_{H'}$ is a key instance and $\omega_{H'}(H')=\omega(H')+\omega(C)=\omega(H)>0$.

For every factor $K'\in \Omega_{H'}$, we can recover a factor $K\in \Omega_H$ from $K'$ as follows: $K=K'$ if $e_t\notin E_{K'}$ and $K=K'\cup C$ if $e_t\in E_K$. 
One can check that $K$ is a factor of $\Omega_H$, and $\omega(K)=\omega_{H'}(K')$.
If $e_t\notin K$, then $K'=K$.
Clearly, $K'$ is a basic factor of $\Omega_{H'}$ if and only if $K$ is a basic factor of $\Omega$.
Now, suppose that $e_t\in K$.
Remember that $\deg_{H'}(t)=1$.
Then, $K'$ is a path with $t$ as an endpoint if and only if $K=K'\cup C$ is a tadpole graph with $t$ as the vertex of degree $3$, and
 $K'$ is a tadpole with $t$ as the vertex of degree $1$ if and only if $K=K'\cup C$ is a dumbbell graph.
Thus,
$K'$ is a basic factor of $\Omega_{H'}$ if and only if $K$ is a basic factor of $\Omega_H$. 

Notice that the instance $\Omega_H'$ has a similar structure to the instance $\Omega_H$ in Case \href{case.1.1.a}{\Rmnum{1}.1.(a)}.
By replacing the vertex $v$ 
                   in Case \href{case.1.1.a}{\Rmnum{1}.1.(a)} by the cycle $C$ (and re-arranging the weights between the cycle $C$ and its connecting bridge), 
                   one can check that the proof of Case \href{case.1.1.a}{\Rmnum{1}.1.(a)}   works here.
                   Note that after this replacement, the path $p_{vt}$  in Case \href{case.1.1.a}{\Rmnum{1}.1.(a)} becomes a tadpole graph $q_{vt^3}$ which is still a basic factor. 

                   \item $v$ appears in the cycle part. 
                   
                                        \begin{figure}[!htpb]
                \centering
                \includegraphics[height=3.3cm]{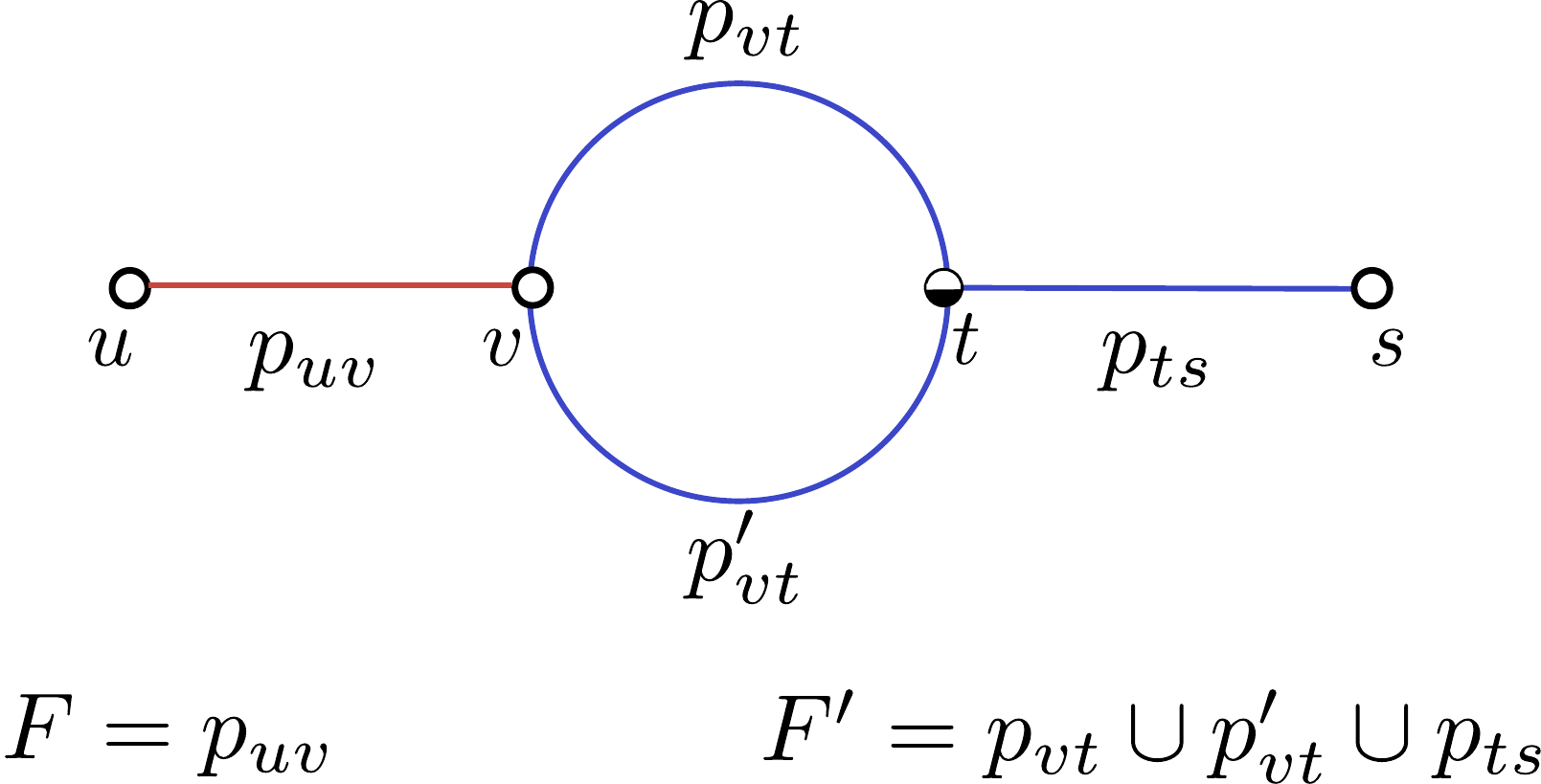}
                \caption{The graph $H$ in Case \Rmnum{1}.2.(a)}
                \label{fig:2.a}
            \end{figure}
            
                   Together with the point $t$, the point $v$ splits the cycle in $F'$ into two paths $p_{vt}$ and $p'_{vt}$.
                   Let $p_{ts}$ denote the path in $F'$ with endpoints $t$ and $s$.
                   Then, $F'=p_{vt}\cup p'_{vt}\cup p_{ts}$.
                   (See Figures~\ref{fig:2.a}.)

                   \begin{itemize}
                       \item If $\pi(t)=\{0, 1, 3\}$, then  the paths $p_{vt}$, $p'_{vt}$ and $p_{ts}$ are all basic factors of $\Omega$. 
                       Moreover, the vertex $u$ does not appear in any of these paths. 
                       Also, since $\omega(F')=\omega(p_{vt})+\omega(p'_{vt})+\omega(p_{ts})>0$,
                       there is at least one path with positive weight.
                       We are done.
                       \item If $\pi(t)=\{0, 2, 3\}$, then  the tadpole graph $q_{u v^3}=F\cup p_{vt} \cup p'_{vt}$ is a basic factor or $\Omega$.
                       Since $\omega(F)\geq \omega(q_{uv^3})$, we have $\omega(p_{vt})+\omega(p'_{vt})\leq 0.$
                       Without loss of generality, we assume that $\omega(p'_{vt})\leq 0$. 
                       Consider the path $F^\ast=p_{vt}\cup p_{ts}$.
                        We have $\deg_{F^\ast}(u)=0$, and $F^\ast$
                        is a basic factor of $\Omega$.
                       Since $$\omega(F')=\omega(p_{vt})+\omega(p'_{vt})+\omega(p_{ts})=\omega(F^\ast)+\omega(p'_{vt})>0$$ and $\omega(p'_{vt})\leq 0,$
                      we have $\omega(F^\ast)>0$.
                      We are done.
                   \end{itemize}
               \end{enumerate}
               \item $V_{\cap}=\{s\}$. 
               
               Still,  
               the cycle $C$ in $F'$ is a dangling cycle with the connecting point $t$.
               We  can contract $C$ to 
               $t$ and add the weight $\omega(C)$ to its connecting bridge. 
               Then,  this case is similar to Case \href{case1.1.b}{\Rmnum{1}.1.(b)}.
                By replacing the vertex $t$ in
                Case \href{case1.1.b}{\Rmnum{1}.1.(b)} by 
                the $C$,
                one can check that the proof of Case \href{case1.1.b}{\Rmnum{1}.1.(b)} works here. 
                Note that the path $p_{st}$ in 
                Case \href{case1.1.b}{\Rmnum{1}.1.(b)} is replaced by a tadpole graph $q_{st^3}$ which is still a basic factor.
               \item $V_{\cap}=\{v, s\}$.
               
               In this case, $\deg_H(u)=1$, $\deg_H(v)=\deg_H(s)=\deg_H(t)=3$, 
               $\pi(v)=\{0, 1, 3\}$,
               $\pi(s)=\{0, 2, 3\}$,
               and $\pi(t)=\{0, 1, 3\}$ or $\{0, 2, 3\}$.
               There are two subcases depending on whether the intersection point $v$ appears in the path part or the cycle part of the tadpole graph $F'$.
               Note that there is only one way for the intersection point $s$ to appear in the path $F$, and $s$ always splits  $F$ into two paths $p_{us}$ and $p_{st}$.
               \begin{enumerate}
                   \item $v$ appears in the path part. 
                   
                   Still, the cycle $C$ is a dangling cycle with the connecting point $t$.
                   This case is similar to Case  \href{case1.1.c}{\Rmnum{1}.1.(c)}.
                   By replacing  the vertex $t$
                   in Case \href{case1.1.c}{\Rmnum{1}.1.(c)} by  the cycle $C$,
                   one can check that the proof of Case \href{case1.1.c}{\Rmnum{1}.1.(c)} works here.
                  Note that  the path $p_{vt}$ and the tadpole graph $F^\ast=q_{tv^3}$
                   in Case  \href{case1.1.c}{\Rmnum{1}.1.(c)} are replaced by a tadpole graph and a dumbbell graph respectively. Both are still  basic factors. 
                   
                   \item $v$ appears in the cycle part. 
                   
                   \begin{figure}[!htpb]
                \centering
                \includegraphics[width=7.2cm]{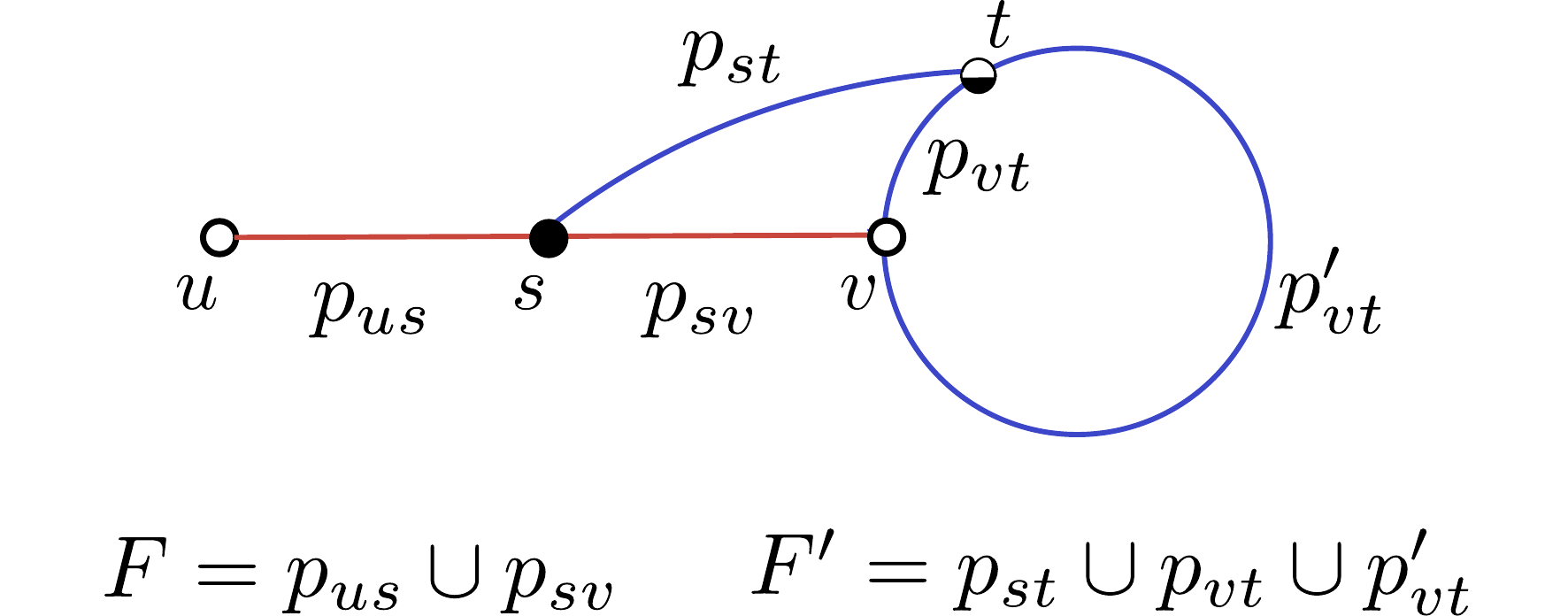}
                \caption{The graph $H$ in Case \Rmnum{1}.2.(c)}
                \label{fig:2.c}
            \end{figure}

                  Together with the point $t$, the point $v$ splits the cycle in $F'$ into two parts $p_{vt}$ and $p'_{vt}$. 
                  Let $p_{st}$ denote the path in $F'$ with endpoints $s$ and $t$. Then, $F'=p_{st}\cup p_{vt}\cup p'_{vt}$.
                  (See Figure~\ref{fig:2.c}.)
                  \begin{itemize}
                      \item If $\pi(t)=\{0, 1, 3\}$, then the paths $p_{vt}$ and $p'_{vt}$ are both basic factors of $\Omega$.
                      If $\omega(p_{vt})>0$ or $\omega(p'_{vt})>0$, then we have a basic factor satisfying the requirements. 
                      Thus, we may assume $\omega(p_{vt}), \omega(p'_{vt})\leq 0$.
                      Since $\omega(F')=\omega(p_{st})+\omega(p_{vt})+\omega(p'_{vt})>0$, 
                      we have $\omega(p_{st})>0$.
                      Consider the path $p_{ut}=p_{us}\cup p_{st}$.
                      It is a basic factor of $\Omega$.
                      Since $F$ is a basic factor of $\Omega$ with the largest weight,  $$\omega(F)=\omega(p_{us})+\omega(p_{sv})\geq\omega(p_{us})+\omega(p_{st})=\omega(p_{ut}).$$
                      Then, $\omega(p_{sv})\geq \omega(p_{st})>0$.
                      
                      Consider the path $p''_{vt}=p_{sv}\cup p_{st}$. 
                      It is a basic factor of $\Omega$ and $\deg_{p''_{vt}}(u)=0$.
                      Also, $\omega(p''_{vt})=\omega(p_{sv})+\omega(p_{st})>0$. We are done.
                      \item 
                      If $\pi(t)=\{0, 2, 3\}$, then  the tadpole graph $q_{uv^3}=F\cup p_{vt} \cup p'_{vt}$ is a basic factor of $\Omega$.
                       Since $\omega(F)\geq \omega(q_{uv^3})$, we have $\omega(p_{vt})+\omega(p'_{vt})\leq 0.$
                       Since $\omega(F')>0$, we have $\omega(p_{st})>0$.
                       Consider the path $p'_{uv}=p_{us}\cup p_{st}\cup p_{vt}$. 
                       It is a basic factor of $\Omega$.
                       Since $\omega(F)\geq \omega(p'_{uv})$,
                       we have $$\omega(p_{sv})\geq \omega(p_{st})+\omega(p_{vt}).$$
                       Similarly, consider the path $p''_{uv}=p_{us}\cup p_{st}\cup p'_{vt}$. 
                       We have $$\omega(p_{sv})\geq \omega(p_{st})+\omega(p'_{vt}).$$
                       Sum up the above two inequalities, we have $$2\omega(p_{sv})\geq 2\omega(p_{st})+\omega(p_{vt})+\omega(p'_{vt})\geq 2\omega(p_{st}) >0.$$
                       
                       Consider the theta graph $F^\ast=p_{sv}\cup F'$. 
                       Note that $F^\ast$ is a basic factor of $\Omega$ and $\deg_{F^\ast}(u)=0$. 
                       Also, $\omega(F^\ast)=\omega(p_{sv})+\omega(F')>0$. We are done. 
                  \end{itemize}
               \end{enumerate}
           \end{enumerate}
            We are done with Case~\href{case1.2}{\Rmnum{1}.2} where $F$ is a path and $F'$ is a tadpole graph.

                          \item[\Rmnum{1}.3.]
               $F$ is a path and $F'$ is a dumbbell graph. 
               Then, $V_{\cap}=\{v\}$. 
               
               Let $C_s$ and $C_t$ be the two cycles in $F'$ that contain vertices $s$ and $t$ respectively. 
               Clearly, among $V_{C_s}$ and $V_{C_t}$, there exists at least one such that it does not contain the intersection point $v$.
               Notice that vertices $s$ and $t$ are symmetric in this case. 
               Without loss of generality, we may assume that $v\notin V_{C_s}$.   
               Then, $V_{C_s}\cap V_{F}=\emptyset$.
               Thus, $C_s$ is a dangling cycle with the connecting point $s$.
               Then, 
               this case is similar to Case~\href{case1.2.a}{\Rmnum{1}.2.(a)}. 
               By replacing the vertex $s$ in Case~\href{case1.2.a}{\Rmnum{1}.2.(a)} by the cycle $C_s$, 
               one can check that the proof of Case~\href{case1.1.a}{\Rmnum{1}.2.(a)} works here. 
                          
             \item[\Rmnum{1}.4.]   $F$ is a path and $F'$ is a theta graph.
           Then, $V_{\cap}=\{v\}$. 
\begin{figure}[!htbp]
    \centering
    \includegraphics[height=3.2cm]{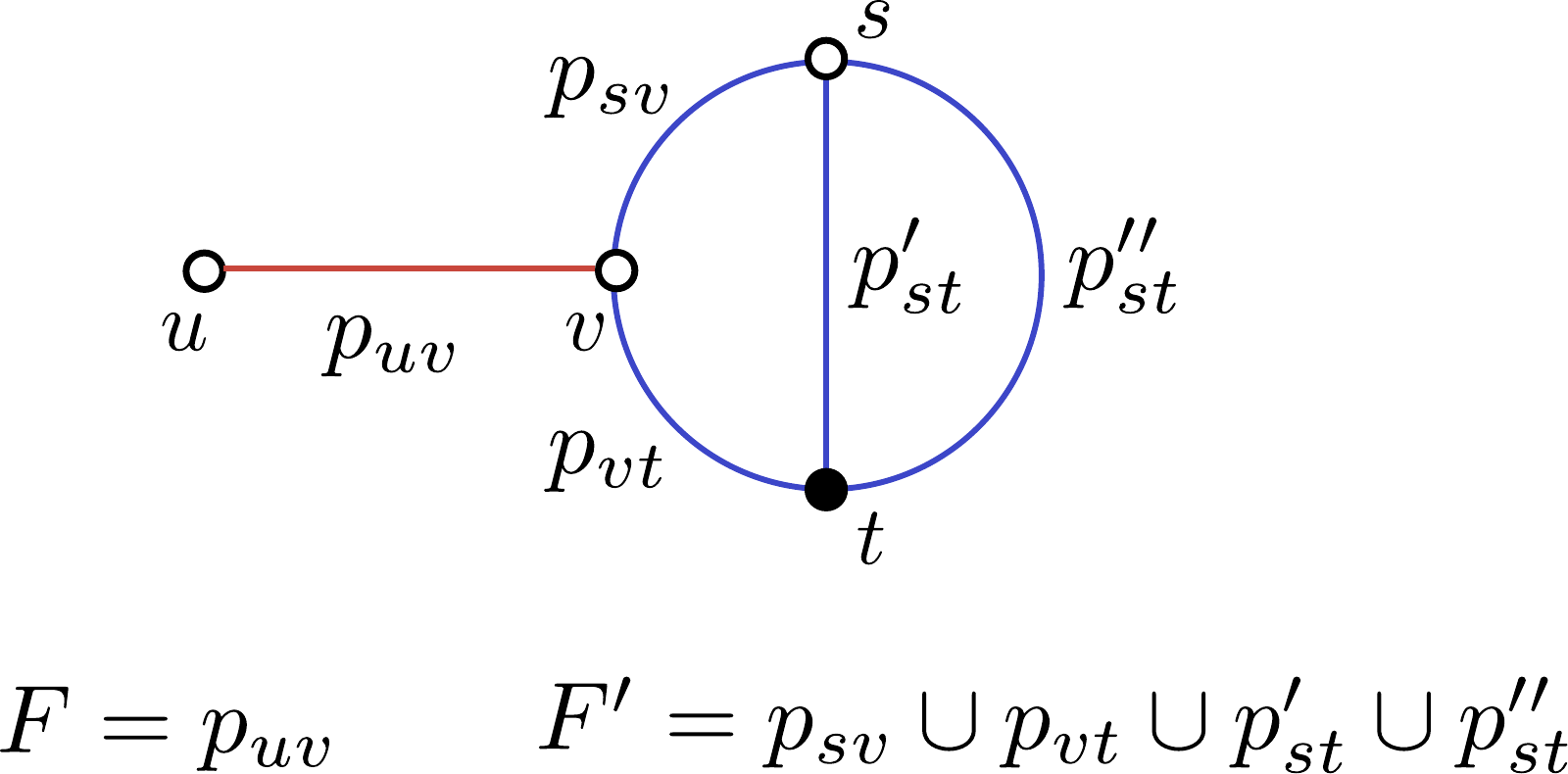}
    \caption{The graph $H$ in Case \Rmnum{1}.4}
    \label{fig:4}
\end{figure}
           
            In this case, $\deg_H(u)=1$, $\deg_H(v)=\deg_H(s)=\deg_H(t)=3$, and
               $\pi(v)=\{0, 1, 3\}$.
      Since $F'$ is a theta graph and $\deg_{F'}(s)=\deg_{F'}(t)=3$, without loss of generality, we may assume that $\pi(s)=\{0, 1, 3\}$ and $\pi(t)=\{0, 2, 3\}$. 
           $F'$ consists of three paths $p_{st}$, $p'_{st}$ and $p''_{st}$.
           Without loss of generality, we may assume that $v$ appears in the path $p_{st}$ and it splits the path into two paths $p_{sv}$ and $p_{vt}$.
           (see Figure~\ref{fig:4}.)
           
           Consider the paths $p'_{sv}=p'_{st}\cup p_{vt}$ and $p''_{sv}=p''_{st}\cup p_{vt}$, 
           and the tadpole graph $q_{vs^3}=p_{sv}\cup p'_{st}\cup p''_{st}$.
           It can be checked that $p'_{sv}$, $p''_{sv}$ and $q_{vs^3}$ are all basic factors of $H$.
           The vertex $u$ does not appear in any of them.
           Also, $$\omega(p_{sv})+\omega(p'_{sv})+\omega(p''_{sv})+\omega(q_{vs^3})=2\omega(F')>0.$$
           Then, among them at least one is positive.
           Thus, we can find a basic factor of $\Omega$ satisfying the requirements. 
           \end{enumerate}
 We are done with Case \Rmnum{1} where $F$ is a path.
 
 \vspace{2ex}
\noindent {\bf Case \Rmnum{2}:} $F$ is a tadpole graph and $\deg_F(u)=3$. 
By assumption, $\pi(u)=\{0, 2, 3\}$.
Also, since $\deg_F(v)=1\in \pi(v)$, $v$ is 1-feasible.
 Let $C$ be the cycle part of $F$. Consider $\{s, t\}\cap V_C$. 
 Here, we discuss possible cases depending  on intersection vertices belonging to $V_C$ instead of the entire set $V_\cap$ of vertices points as in Case \Rmnum{1}.
 There are three subcases.
 
 \begin{enumerate}
     \item[\Rmnum{2}.1]\label{case2.1}  $\{s, t\}\cap V_C=\emptyset$.
     
     In this case, $\deg_H(x)=2$ for every $x\in V_C\backslash\{u\}$.
     Thus, $C$ is a dangling cycle with in connecting point $u$ in $H$.
Then, the case is similar to {Case \Rmnum{1}}. 
 By replacing the vertex $u$ in Case \Rmnum{1} by the cycle $C$, one can check that the proof of Case \Rmnum{1} works here.
Note that after the above replacement, a path containing $u$ as an endpoint in Case \Rmnum{1} becomes a tadpole graph containing the cycle $C$, and a tadpole graph containing $u$ as the vertex of degree 1 in Case \Rmnum{1} becomes a dumbbell graph. 

\item[\Rmnum{2}.2]\label{case2.2} $\{s, t\}\cap V_C=\{s\}$ or $\{t\}$.

Without loss of generality, we may assume that $s\in V_C$.
Then, $\deg_H(u)=\deg_H(s)=3$ and $\pi(u)=\pi(s)=\{0, 2, 3\}$.
If $\omega(C)>0$, then we are done since $C$ is a basic factor of $\Omega$ and $\deg_C(u)=2$.
Thus, we may assume that $\omega(C)\leq 0$.
Vertices $s$ and $u$ split $C$ into two paths $p_{us}$ and $p'_{us}$.
Since $\omega(C)=\omega(p_{us})+\omega(p'_{us})\leq 0$, among them at least one is non-positive.
Without loss of generality, we assume that $\omega(p_{us})\leq 0$.

Consider the graph $H'=H\backslash p_{us}$.
Note that $V_{H'}=(V_H\backslash{V_{p_{us}}})\cup \{u, s\}$.
For every $x\in V_{H'}\backslash\{u, s\}$, we have $\deg_{H'}(x)=\deg_{H}(x)\in \pi(x)$ since $H$ is a factor of $\Omega$. 
Also, $\deg_{H'}(u)=2\in \pi(u)$ and $\deg_{H'}(s)=2\in \pi(s)$.
Thus, $H'$ is a factor of $\Omega$.
Also, $\omega(H')=\omega(H)-\omega(p_{us})>0$.
However, it is not clear whether $H'$ is  a \emph{basic} factor of $\Omega$.
Consider the sub-instance $\Omega_H'=(H', \pi_{H'}, \omega)$ of $\Omega$ induced by the factor $H'$.
Since $\omega(H')>0$, 
by Lemma~\ref{lem-MBFC}, there is a basic factor $F^\ast\in \Omega_{H'}$ such that $\omega(F^\ast)>0$.
Then, $\deg_{F^\ast}(u)\in \pi_{H'}(u)=\{0, 2\}$.
Clearly, $F^\ast$ is also a basic factor of $\Omega$. We are done.

Note that this proof works no matter  whether $F'$ is a path or a tadpole graph, and  whether $v\in V_\cap$ or $t\in V_\cap$.
In fact, this proof also works when $F$ is a dumbbell graph as long as $s$ (or symmetrically $t$) is the only vertex in $V_{F'}$ appearing in the cycle $C$ of $F$ that contains  the vertex $u$.

\item[\Rmnum{2}.3]\label{case2.3} $\{s, t\}\subseteq V_C$.

\begin{figure}[!htbp]
    \centering
    \includegraphics[height=2.9cm]{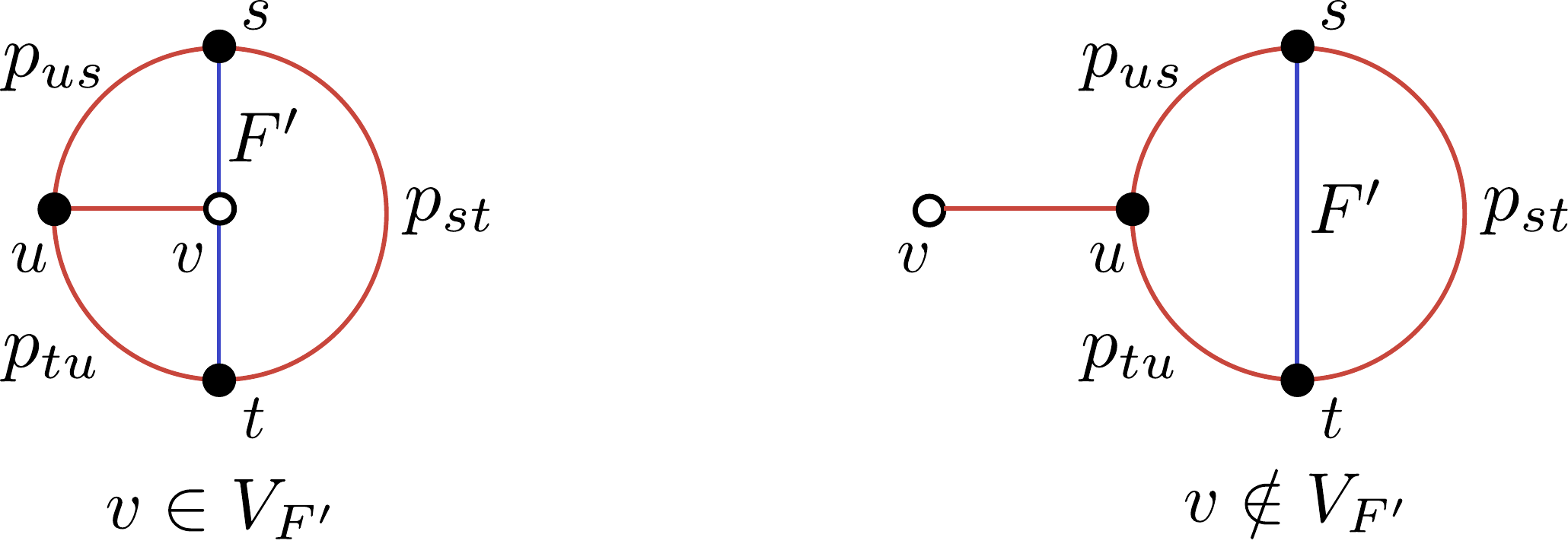}
    \caption{The two possible forms of graph $H$ in Case \Rmnum{2}.3.}
    \label{fig:case-2.3}
\end{figure}

In this case, $\deg_H(u)=\deg_H(s)=\deg_H(t)=3$ and $\pi(u)=\pi(s)=\pi(t)=\{0, 2, 3\}$.
Also,  $\deg_{F'}(s)=\deg_{F'}(t)=1$.
Thus, $F'$ is a path with endpoints $s$ and $t$. 
Note that in this case, it is possible that $v\in V_{F'}$.
If $v\in V_{F'}$, then $\deg_H(v)=3$ and $\pi(v)=\{0, 1, 3\}$; otherwise, $\deg_H(v)=1$ and $\pi(v)=\{0, 1\}$.
The points $u$, $s$, and $t$ split $C$ into three paths, $p_{us}$, $p_{st}$, $p_{tu}$. Then, $C=p_{us}\cup p_{st} \cup p_{tu}$. (See Figure \ref{fig:case-2.3}.) 
If $\omega(C)>0$, then  we are done. Thus, we may assume that $\omega(C)\leq 0$.

Consider the graph $H_1=H\backslash p_{st}=(F\backslash p_{st})\cup F'$.
Similar to the above Case \href{case2.2}{\Rmnum{2}.2}, one can check that $H_1$ is a factor of $\Omega$. 
Also, $H_1$ is a tadpole graph if $\deg_H(v)=1$  or a theta graph if
$\deg_H(v)=3$.
Thus, in both cases, $H_1$ is a \emph{basic} factor of $\Omega$. 
Since $F$ is a basic factor of $\Omega$ with the largest weight, we have $$\omega(F)\geq \omega(H_1)=\omega(F)-\omega(p_{st})+\omega(F').$$
Thus, $\omega(p_{st})\geq \omega(F')>0$.
Since $\omega(C)=\omega(p_{st})+\omega(p_{us})+\omega(p_{tu})\leq 0$, we have $\omega(p_{us})+\omega(p_{tu})<0$.
Without loss of generality, we may assume that $\omega(p_{us})<0$.
Then, consider the graph $H_2=H\backslash p_{us}$. 
Still, one can check that $H_2$ is a factor of $\Omega$, and $\deg_{H_2}(u)=2$.
Also, $H_2$ is a tadpole graph if  $\deg_H(v)=1$, or a theta graph if $\deg_H(v)=3$.
Thus, $H_2$ is a basic factor of $\Omega$. 
Moreover, $\omega(H_2)=\omega(H)-\omega(p_{us})>0$. We are done.

\end{enumerate}

\vspace{1ex}
\noindent{\bf Case \Rmnum{3}:}
$F$ is a tadpole graph and $\deg_F(v)=3$.
In this case, $\deg_F(u)=1$, $\pi(u)=\{0, 1\}$, $\deg_F(v)=3$, and $\pi(v)=\{0, 1, 3\}$ or $\{0, 2, 3\}$.
Recall that $\deg_H(u)=\deg_F(u)=1$, and $u\notin V_\cap$.
 Let $C$ be the cycle part of $F$. 
 Still consider $\{s, t\}\cap V_C$. There are three subcases.

 \begin{enumerate}
     \item[\Rmnum{3}.1]  $\{s, t\}\cap V_C=\emptyset$. 
     
In this case, $\deg_H(x)=2$ for every $x\in V_C\backslash\{v\}$.
Thus, in the graph $H$, the cycle $C$ is a dangling cycle with the connecting point $v$. 
Then, the case is similar to  Case \Rmnum{1}.
For a graph $H$ in Case \Rmnum{1} where $\deg_H(v)=1$ (i.e., $v \notin V_\cap$), 
by replacing the vertex $v$    by the cycle $C$, one can check that the proof of Case \Rmnum{1} works here.

\item[\Rmnum{3}.2]\label{case3.2} $\{s, t\}\cap V_C=\{s\}$ or $\{t\}$. 

Without loss of generality, we may assume that $s\in V_C$.
Then $\deg_H(s)=3$ and  $\pi(s)=\{0, 2, 3\}$.
Vertices $s$ and $v$ split $C$ into two paths $p_{vs}$ and $p'_{vs}$.
Let $p_{uv}$ be the path part in the tadpole graph $F$.
There are two subcases depending on 
whether $t\in V_F$.
Since $t\notin V_C$, $t\in V_F$ implies $t\in V_{p_{uv}}$.
\begin{enumerate}

    \item $t\notin V_{p_{uv}}$. 
    
    \begin{figure}[!htbp]
        \centering
        \includegraphics[height=3.2cm]{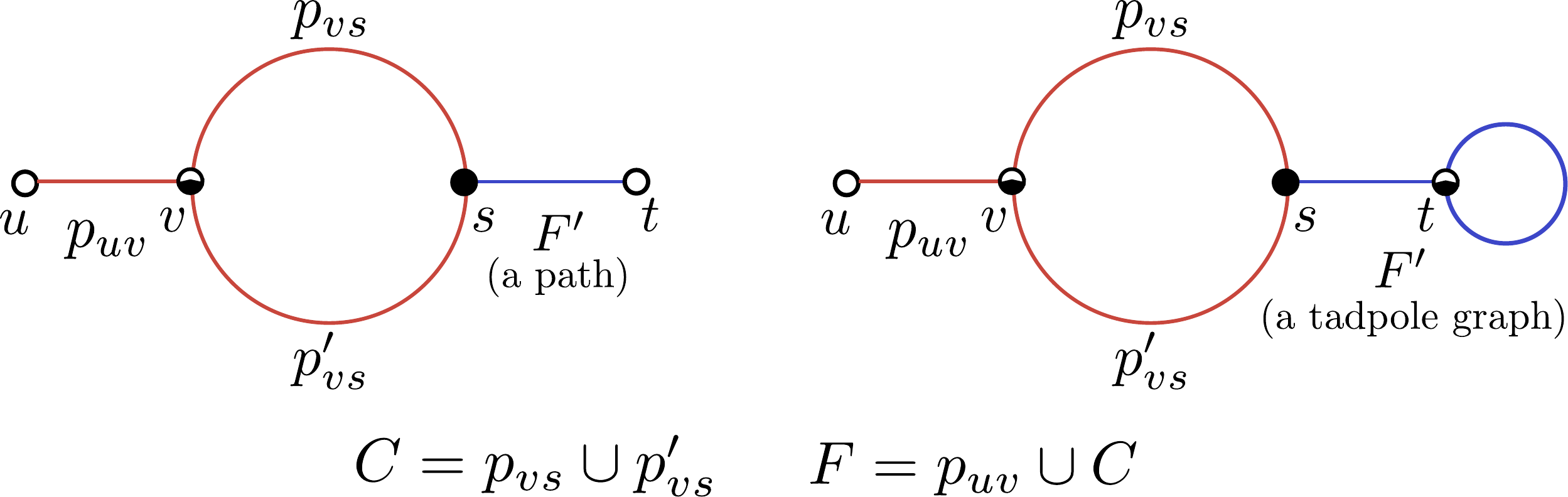}
        \caption{The two possible forms of graph $H$ in Case \Rmnum{3}.2 where $t\notin V_{p_{uv}}$.}
        \label{fig:3.2}
    \end{figure}

     In this case, $\deg_H(t)=1$ or $3$ depending on whether $F'$ is a path or a tadpole graph respectively (See Figure~\ref{fig:3.2}).
    If $F'$ is a tadpole graph, then the cycle in $F'$ containing $t$ is  a dangling cycle in $H$ with the connecting point $t$.
\begin{itemize}
    \item If $\pi(v)=\{0, 1, 3\}$, then
    $p_{uv}$ is a basic factor of $\Omega$ (this is true no matter whether $t\in V_{p_{uv}}$).
    Since $F$ is a basic factor of $\Omega$ with the largest weight,
$\omega(F)\geq \omega(p_{uv})$.
Thus, $\omega(p_{vs})+\omega(p'_{vs})=\omega(C)=\omega(F)-\omega(p_{uv})\geq 0$.
Without loss of generality, we may assume that $\omega(p_{vs})\geq 0$.
Consider the graph $H'=F'\cup p_{vs}$.
It is a path if $F'$ is a path, or a tadpole graph of $F'$ is a tadpole graph. 
Note that $\deg_{H'}(u)=0\in \pi(u)$, $\deg_{H'}(v)=1\in \pi(v)$, $\deg_{H'}(s)=2\in \pi(s)$, and $\deg_{H'}(t)=\deg_{H}(t)\in \pi(t)$.
Also, for every $x\in V_{H'}\backslash\{u, v, s, t\}$, $\deg_{H'}(x)=\deg_{H}(x)\in \pi(x)$.
Thus, $H'$ is a basic factor of $\Omega$. 
Also, $\omega(H')=\omega(F')+\omega(p_{vs})>0$.
We are done.

\item If $\pi(v)=\{0, 2, 3\}$, then the cycle $C$ is a basic factor of $\Omega$.
Consider $H_1=H\backslash p_{vs}$. 
Note that $\deg_{H_1}(u)=1\in \pi(u)$, $\deg_{H_1}(v)=2\in \pi(v)$, $\deg_{H_1}(s)=2\in \pi(s)$, and $\deg_{H_1}(t)=\deg_{H}(t)\in \pi(t)$.
One can check that $H_1$ is a factor of $\Omega$.
Also, $H_1$ is either a path with endpoints $u$ and $s$ if $F'$ is a path, or a tadpole graph with $u$ being the vertex of degree $1$ and $t$ being the vertex of degree $3$ if $F'$ is a tadpole graph. 
Thus, $H_1$ is a basic factor of $\Omega$.
Since $F$ is a basic factor with the largest weight, 
$$\omega(F)\geq \omega(H_1)=\omega(F)-\omega(p_{vs})+\omega(F').$$
Thus, $\omega(p_{vs})\geq \omega(F')>0$.
Similarly, by considering $H_2=H\backslash p'_{vs}$, we have $\omega(p'_{vs})>0$.
Then, $\omega(C)=\omega(p_{vs})+\omega(p'_{vs})>0$.
Thus, $C$ is a basic factor of $\Omega$ with positive weight and $\deg_C(u)=0$.
\end{itemize}
\item $t\in V_{p_{uv}}$. 

\begin{figure}[!htbp]
    \centering
    \includegraphics[height=2.8cm]{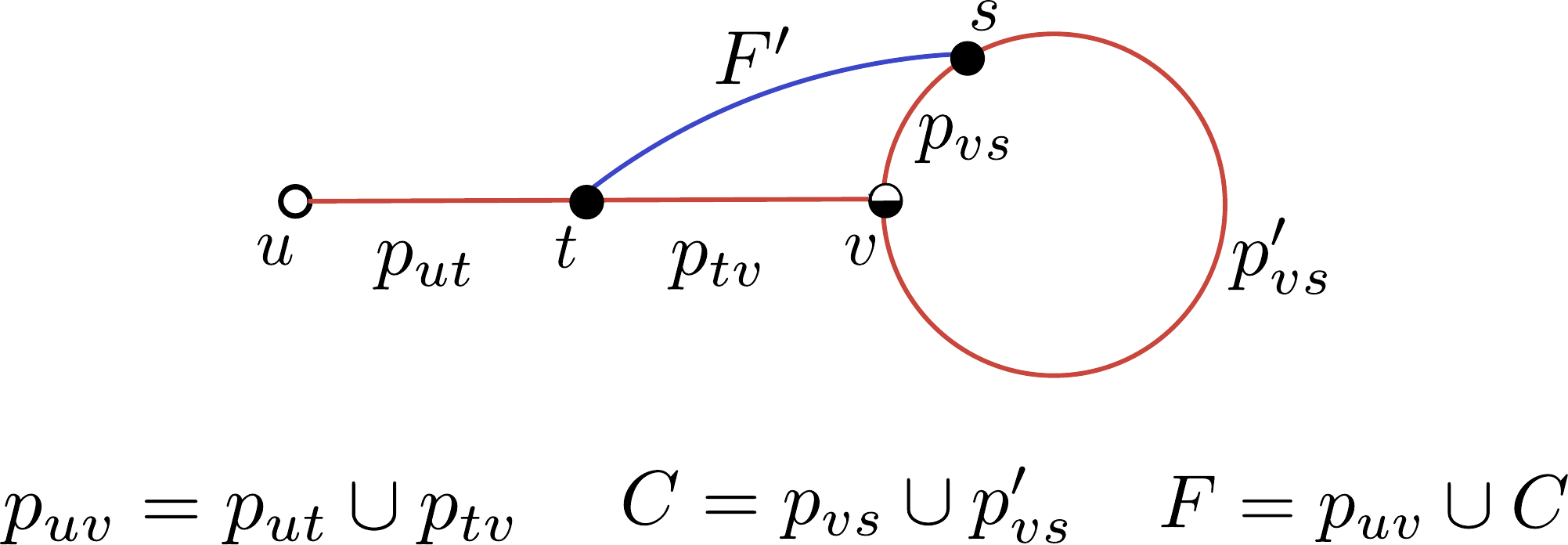}
    \caption{The graph $H$ in Case \Rmnum{3}.2 where $t\in V_{p_{uv}}$.}
    \label{fig:3.2.b}
\end{figure}

In this case, $\deg_H(t)=3$ and $\pi(t)=\{0, 2, 3\}$.
$F'$ is a path with endpoints $s$ and $t$. 
The vertex $t$ splits $p_{uv}$ into two parts $p_{ut}$ and $p_{tv}$ (see Figure~\ref{fig:3.2.b}).
\begin{itemize}
    \item If $\pi(v)=\{0, 1, 3\}$, then $p_{uv}$ is  a basic factor of $\Omega$.
    Since $\omega(F)\geq \omega(p_{uv})$, we have $\omega(C)\geq 0$.
    Consider the path $p'_{uv}=p_{ut}\cup F'\cup p_{vs}$.
    It is also a basic factor of $\Omega$.
    Still, since $\omega(F)\geq \omega(p'_{uv})$, we have $$\omega(p_{tv})\geq \omega(F')+\omega(p_{vs}).$$
    Similarly, by considering the path $p''_{uv}=p_{ut}\cup F'\cup p'_{vs}$, we have $$\omega(p_{tv})\geq \omega(F')+\omega(p'_{vs}).$$
    Thus, $2\omega(p_{tv})\geq 2\omega(F')+\omega(p_{vs})+\omega(p'_{vs})=2\omega(F')+\omega(C)>0$.
    Consider the theta graph $F^\ast=p_{tv}\cup C \cup F'$.
    Clearly, $\omega(F^\ast)>0$.
   Then, $F^\ast$ is a basic factor of $\Omega$ with $\deg_{F^\ast}(u)=0$. We are done.
    \item If $\pi(v)=\{0, 2, 3\}$, then $C$ is a basic factor of $\Omega$.
    Consider $H_1=H\backslash p_{vs}$.
    It is a  tadpole graph with the vertex $u$ of degree $1$ and the vertex $t$ of degree $3$. 
Note that $\deg_{H_1}(u)=1\in \pi(u)$, $\deg_{H_1}(v)=2\in \pi(v)$, $\deg_{H_1}(s)=2\in \pi(s)$, and $\deg_{H_1}(t)=3\in \pi(t)$.
One can check that $H_1$ is a basic factor of $\Omega$.
Thus, $H_1$ is a basic factor of $\Omega$.
Since $F$ is a basic factor with the largest weight, 
$$\omega(F)\geq \omega(H_1)=\omega(F)-\omega(p_{vs})+\omega(F').$$
Thus, $\omega(p_{vs})\geq \omega(F')>0$.
Similarly, by considering $H_2=H\backslash p'_{vs}$, we have $\omega(p'_{vs})>0$.
Then, $\omega(C)=\omega(p_{vs})+\omega(p'_{vs})>0$.
Thus, $C$ is a basic factor of $\Omega$ with positive weight and $\deg_C(u)=0$.
\end{itemize}
\end{enumerate}

\item[\Rmnum{3}.3]\label{case3.3} $\{s, t\}\in V_C$.

In this case, $\deg_H(u)=1$, $\pi(u)=\{0, 1\}$,  $\deg_H(v)=\deg_H(s)=\deg_H(t)=3$, and $\pi(s)=\pi(t)=\{0, 2, 3\}$.
Also,  $\deg_{F'}(s)=\deg_{F'}(t)=1$.
Thus, $F'$ is a path with endpoints $s$ and $t$. 
Let $p_{st}\subseteq C$ be the path with endpoints $t$ and $s$ such that $v\notin V_{p_{st}}$. 

Consider the tadpole graph $q_{uv^3}=(F\backslash p_{st})\cup F'$.
In other words, $q_{uv^3}$ is the tadpole graph obtained from $F$ by replacing the path $p_{st}$ by $F'$.
One can check that  $q_{uv^3}$ is also a basic factor of $\Omega$.
Since $F$ is a basic factor of $\Omega$ with the largest weight, $$\omega(F)\geq \omega(q_{uv^3})=\omega(F)-\omega(p_{st})+\omega(F').$$
Thus, $\omega(p_{st})\geq \omega(F')>0$.
Consider the cycle $C'=p_{st}\cup F'$.
Note that it is a basic factor of $\Omega$.
Also, $\deg_{C'}(u)=0$ and $\omega(C')=\omega(p_{st})+\omega(F')>0$.
We are done.
 \end{enumerate}

\vspace{1ex}

\noindent{\bf Case \Rmnum{4}:}
$F$ is a dumbbell graph.
Let $C_u$ and $C_v$ be the two cycles of $F$ containing vertices $u$ and $v$ respectively.

\vspace{1ex}

If $\{s, t\}\cap C_v=\emptyset$, 
then $C_v$ is a dangling cycle in $H$ with the connecting point $v$.  
This case is similar to Case \Rmnum{2}.
For a graph $H$ in Case \Rmnum{2} where $\deg_H(v)=1$ (i.e., $v\notin V_\cap$), 
by replacing the vertex $v$ by the cycle $C_v$, one can check that the proof of Case \Rmnum{2} works here. 

If $\{s, t\}\cap C_u=\emptyset$, 
then $C_u$ is a dangling cycle in $H$ with the connecting point $u$.  
This case is similar to Case \Rmnum{3}.
By replacing the vertex $u$ in Case \Rmnum{3} by the cycle $C_u$, one can check that the proof of Case \Rmnum{3} works here. 

If $\{s, t\}\cap C_u$ and $\{s, t\}\cap C_v$ are both non-empty, then without loss of generality, we may assume that $s\in C_u$ and $t\in C_v$.
Thus, $F'$ is a path with endpoints $s$ and $t$. 
As we have mentioned in  Case \href{case2.2}{\Rmnum{2}.2}, one can check that the proof of Case \href{case2.2}{\Rmnum{2}.2} works here. 

\vspace{2ex}

\noindent{\bf Case \Rmnum{5}:} $F$ is a theta graph. In this case, $\deg_H(u)=\deg_H(v)=3$. 
By assumption, $\pi(u)=\{0, 2, 3\}$.
Also,  by the definition of theta graphs, $\pi(v)=\{0, 1, 3\}$.
Then, $V_\cap\subseteq\{s, t\}$. There are two subcases.
\begin{enumerate}
    \item[\Rmnum{5}.1] $V_\cap=\{s\}$ or $\{t\}$. 
    
    Without loss of generality, we assume that $V_\cap=\{s\}$.
    Then, $\deg_H(s)=3$ and $\pi(s)=\{0, 2, 3\}$.
             The theta graph  $F$ consists of three paths $p_{uv}$, $p'_{uv}$ and $p''_{uv}$.
           Without loss of generality, we may assume that $s$ appears in the path $p_{uv}$ and it splits $p_{uv}$ into two paths $p_{us}$ and $p_{sv}$.
           
           Consider the paths $p_{sv}$, $p'_{sv}=p'_{uv}\cup p_{su}$ and $p''_{sv}=p''_{uv}\cup p_{su}$, 
           and the tadpole graph $q_{sv^3}=p_{sv}\cup p'_{uv}\cup p''_{uv}$.
           They are not factors of $H$ since the degree of $s$ is 1 in all these four graphs. 
           However, by taking the union of $F'$ with any one of them, we can get a basic factor of $H$ and the degree of $u$ in it is even. 
           Since $$\omega(p_{sv})+\omega(p'_{sv})+\omega(p''_{sv})+\omega(q_{sv^3})=2\omega(F')>0,$$
         among them at least one is positive.
         Also, $\omega(F')>0$.
         Then, by taking the union of it with $F'$, we can find a basic factor of $\Omega$ satisfying the requirements. 
         
         \item[\Rmnum{5}.2] $V_\cap=\{s, t\}$.
         
         In this case, $F'$ is a path with endpoints $s$ and $t$.
         Since $F$ is a theta graph which is 2-connected, we can find a path $p_{st}\subseteq F$ such that $v\notin V_{p_{st}}$.
         If $u\notin V_{p_{st}}$, then one can check that the theta graph $H'=(F\backslash p_{st})\cup F'$ is also a basic factor of $\Omega$.
         Since $\omega(F)\geq\omega(H')$, we have $\omega(p_{st})\geq\omega(F')>0$.
         Then, the cycle $C=p_{st}\cup F'$ is a basic factor of $\Omega$ where $\omega(C)=\omega(p_{st})+\omega(F')>0$ and $\deg_C(u)=0$. We are done.
         Otherwise, $u\in V_{p_{st}}$.
         The vertex $u$ splits $p_{st}$ into two paths $p_{us}$ and $p_{ut}$.
         Consider the theta graph $H'=(F\backslash p_{us})\cup F'$, where $\deg_{H'}(v)=\deg_{H'}(t)=3$, $\pi(v)=\{0, 1, 3\}$, and $\pi(t)=\{0, 2, 3\}$.
         One can check that $H'$ is a basic factor of $\Omega$.
         Since $\omega(F)\geq \omega(H')=\omega(F)-\omega(p_{us})+\omega(F')$, we have
         $\omega(p_{us})\geq \omega(F')>0$.
         Similarly, by considering the theta graph $H''=(F\backslash p_{ut})\cup F'$, 
         we have  $\omega(p_{ut})\geq \omega(F')>0$.
         Then the cycle $C=p_{st}\cup F'=p_{us}\cup p_{ut}\cup F'$ is  a basic factor of $\Omega$ where $\omega(C)=\omega(p_{us})+\omega(p_{ut})+\omega(F')>0$ and $\deg_C(u)=2$. We are done.
 \end{enumerate}
 We have taken care of all possible cases and finished the proof. 
\end{proof} 

Combining Lemmas~\ref{lem-MBFC} and~\ref{lem-2}, we finished the proof of Theorem~\ref{lem-basic-factor-star}.

\appendix

\section{$\Delta$-Matroids and Matching Realizability}\label{sec:matroid}

A $\Delta$-matroid is a family of sets obeying an axiom generalizing the 
matroid exchange axiom. Formally, a pair $M=(U,\sF)$ is a $\Delta$-matroid if
$U$ is a finite set and $\sF$ is a collection of subsets of $U$
satisfying the following: for any $X,Y\in\sF$ and any $u\in X\Delta Y$
in the symmetric difference of $X$ and $Y$, there exits a $v\in X\Delta Y$ such
that $X\Delta\{u,v\}$ belongs to $\sF$~\cite{bouchet1987greedy}.
A $\Delta$-matroid is \emph{symmetric} if, for every pair of $X,Y\subseteq U$ with
$|X|=|Y|$, we have $X\in\sF$ if and only if $Y\in\sF$.
A $\Delta$-matroid is \emph{even} if for every pair of  $X,Y\subseteq U$, $|X|\equiv |Y| \mod 2$.

Suppose that $U=\{u_1, u_2, \ldots, u_n\}$.
A subset $V\subseteq U$ can be encoded by a binary string $\alpha_V$ of $n$-bits where
the $i$-th bit of $\alpha_V$ is $1$ if $u_i\in V$ and $0$ if  $u_i\notin V$.
Then, a $\Delta$-matroid $M=(U,\sF)$ can be  represented by a relation $R_M$ of arity $|U|$ which consists of binary strings that encode all subsets in $\sF$.
Such a representation is unique up to a permutation of variables of the relation. 
A degree constraint $D$ of arity $n$ can be viewed  as an $n$-ary symmetric relation which consists of binary strings with the Hamming weight $d$ for every $d\in D$.
By the definition of $\Delta$-matroids, 
it is easy to check that a degree constraint $D$ (as a symmetric relation) represents a $\Delta$-matroid  if and only if $D$ has all gaps of length at most $1$.

\begin{definition}[Matching Gadget]\label{def:matchgate}
A \emph{gadget} using a set $\mathscr{D}$ of degree constraints
consists of a graph $G=(U\cup V, E)$ where $\deg_G(u)=1$ for every $u\in U$ and there are no edges between vertices in $U$,
and a mapping $\pi:V\rightarrow \mathscr{D}$.
A \emph{matching gadget} is a gadget where $\mathscr{D}=\{ \{0,1\}, \{1\}\}$.
A degree constraint $D$ of arity $n$ is \emph{matching realizable} if  there exists a  matching gadget $(G=(U\cup V, E), \pi:V\rightarrow \{ \{0,1\}, \{1\}\})$ 
such that $|U|=n$ and for every $k\in [n]$, $k\in D$ if and only if for every $W\subseteq U$ with $|W|=k$, there exists a matching $F=(V_F, E_F)$ of $G$ such that $V_F\cap U=W$ and 
for every $v\in V$ where $\pi(v)=\{1\}$, $v\in V_F$.
\end{definition}

The definition of matching realizability   can be extended to a relation $R$ of arity $n$ by requiring the set $U$ of $n$ vertices in a matching gadget to represent the $n$ variables of $R$. 
If $R$ is realizable by a matching gadget  $G=(U\cup V, E)$, then for every $\alpha\in\{0, 1\}^n$, $\alpha\in R$ if and only if there is a matching $F=(V_F, E_F)$ of $G$ such that $V_F\cap U$ is exactly the subset of $U$ encoded by $\alpha$
(i.e., for every $u_i\in U$, $u_i\in V_F$ if and only if $\alpha_i=1$), and for every $v\in V$ where $\pi(v)=\{1\}$, $v\in V_F$.
Note that the matching realizability of a relation is invariant under a permutation of its variables. 
We say that a $\Delta$-matroid is matching realizable if the relation representing it is matching realizable.\footnote{This definition of matching realizability for $\Delta$-matroids is different with the one that is usually used
for even $\Delta$-matroids~{\rm\cite{bouchet1989matchings, Dvorak15:icalp, kkr2018even}}, in which 
the gadget
is only allowed to use the constraint $\{1\}$ for perfect matchings, and hence the resulting $\Delta$-matroid must be even.}
The following is an equivalent definition for the matching realizability of  $\Delta$-matroids.

\begin{lemma}\label{def:matching-gadget2}
If a $\Delta$-matroid $M=(U, \sF)$ is matching
realizable, then there is a graph $G=(U\cup W \cup X, E)$ where $\deg(v)=1$ for
every $v\in U\cup X$ and there are no edges between vertices in $U\cup X$, such that  for every
$V\subseteq U$, $V\in \sF$ if and only if there exists $X_1\subseteq X$ such
that the induced subgraph of $G$ induced by the vertex set $V\cup W \cup X_1$ (denoted by $G(V\cup W \cup X_1)$) has a  \emph{perfect} matching. 

With a slight abuse of notation, we also say the graph $G=(U\cup W \cup X, E)$ realizes $M$. 
\end{lemma}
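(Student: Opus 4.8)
The plan is to unwind Definition~\ref{def:matchgate} and turn the clause ``a vertex labelled $\{0,1\}$ may be left uncovered'' into ``such a vertex is matched to a private pendant vertex that we are free to include in $X_1$ or not'', while keeping the clause ``a vertex labelled $\{1\}$ must be covered'' as a genuine \emph{perfect}-matching requirement. First I would use the matching realizability of $M$ to fix a matching gadget $G_0=(U\cup V_0,E_0)$ together with a labelling $\pi_0:V_0\to\{\{0,1\},\{1\}\}$ witnessing it, and split $V_0$ disjointly as $V_0=A\cup B$, where $A=\{v\in V_0:\pi_0(v)=\{1\}\}$ and $B=\{v\in V_0:\pi_0(v)=\{0,1\}\}$. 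Then I would build $G=(U\cup W\cup X,E)$ by taking $W:=V_0$, adding for each $v\in B$ a fresh pendant vertex $x_v$ joined only to $v$, and setting $X:=\{x_v:v\in B\}$ and $E:=E_0\cup\{v x_v:v\in B\}$. The structural requirements of the lemma are then immediate: every vertex of $U$ has degree $1$ already in $G_0$ and no edge was added at $U$; each $x_v$ is a pendant; and no edge lies inside $U\cup X$, since the only new edges join $X$ to $W$.

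Next I would prove the stated equivalence for a fixed $V\subseteq U$. For the direction ``$V\in\sF$ implies the required $X_1$ exists'': by matching realizability there is a matching $F$ of $G_0$ with $V_F\cap U=V$ and $A\subseteq V_F$. Put $X_1:=\{x_v:v\in B\setminus V_F\}$ and let $F^{+}:=F\cup\{v x_v:x_v\in X_1\}$. The new pendant edges are pairwise disjoint and disjoint from $F$ (they meet only vertices of $B$ that $F$ misses), so $F^{+}$ is a matching; its vertex set is $(V_F\cap U)\cup(V_F\cap V_0)\cup\{v:x_v\in X_1\}\cup X_1$, which equals $V\cup W\cup X_1$ because $A\subseteq V_F$ forces $V_0\setminus V_F=B\setminus V_F=\{v:x_v\in X_1\}$. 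Hence $F^{+}$ is a perfect matching of the induced subgraph $G(V\cup W\cup X_1)$.

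For the converse, suppose $X_1\subseteq X$ is such that $G(V\cup W\cup X_1)$ has a perfect matching $M^{+}$. Each $x_v\in X_1$ has its unique $G$-neighbour $v\in W$ present in the induced subgraph, so $v x_v\in M^{+}$; deleting these pendant edges leaves a matching $F$ with $E_F\subseteq E_0$. Since $M^{+}$ is perfect on $V\cup W\cup X_1$ and $U\setminus V$ does not appear in the induced subgraph, $F$ covers every vertex of $V$ and no vertex of $U\setminus V$, so $V_F\cap U=V$; and $F$ covers every vertex of $W=V_0$ other than those matched to a pendant in $M^{+}$, and such vertices lie in $B$, so $A\subseteq V_F$. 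By the definition of matching realizability this is exactly the condition saying that the binary string encoding $V$ lies in the relation $R_M$ representing $M$, i.e.\ $V\in\sF$.

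The step I expect to be the main obstacle is the bookkeeping, in both directions, about which vertices a matching of $G_0$ is forced to cover: one must check that attaching pendants \emph{only} to the $B$-vertices makes a perfect matching of $G(V\cup W\cup X_1)$ correspond precisely to a matching of $G_0$ that covers all of $A$, covers exactly $V$ within $U$, and is otherwise unconstrained, and that the degree-$1$ structure of $U$ and $X$ is what simultaneously makes $U\setminus V$ disappear from the induced subgraph and forces each selected pendant to be matched to its base vertex. Beyond that the argument is routine, and in particular it uses no property of $M$ other than its being the set of feasible sets of a matching gadget.
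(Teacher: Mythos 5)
Your construction --- attaching a private pendant vertex to every $\{0,1\}$-labelled gadget vertex, collecting the pendants into $X$, and showing that toggling which pendants lie in $X_1$ exactly compensates for which $\{0,1\}$-vertices a matching of the original gadget leaves uncovered --- is precisely the construction in the paper's proof, which states it and leaves the verification as ``one can check''. Your write-up is correct and simply supplies the bookkeeping (both directions of the equivalence, the degree-$1$ and edge-disjointness conditions on $U\cup X$) that the paper omits.
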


\begin{proof}
Let $G=(U\cup W, E)$ be the matching gadget realizing $M=(U, \sF)$. 
We construct the following graph $G'$ from $G$. 
For every $x\in W$ with $\pi(x)=\{0, 1\}$, 
we add a new edge incident to it. 
As the edge is added, a new vertex of degree of $1$ is also added to the graph.
We denote  these new vertices by $X$ and these new edges by $E_X$.
Then, one can check that the graph $G'=(U\cup W\cup X, E\cup E_X)$ satisfies the requirements.  
\end{proof}

The following result generalizes Lemma A.1 of~\cite{kkr2018even}. 

\begin{lemma}\label{lem-mixed-partition}
Suppose that $M=(U, \sF)$ is a matching realizable  $\Delta$-matroid, and $V_1, V_2 \in \sF$. 
Then, $V_1\Delta V_2$ can be partitioned into single variables  $S_1, \ldots, S_k$  and  pairs of variables  $P_1, \ldots, P_\ell$ 
such that for every $P=S_{i_1}\cup \cdots \cup S_{i_r}\cup P_{j_1}\cup \cdots \cup P_{j_t}$ $(\{i_1, \ldots, i_r\}\subseteq [k], \{j_1, \ldots, j_t\}\subseteq [\ell])$, 
$V_1\Delta P\in \sF$ and $V_2\Delta P\in \sF$.
\end{lemma}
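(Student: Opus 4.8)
The plan is to reduce everything to the perfect-matching reformulation of matching realizability (Lemma~\ref{def:matching-gadget2}) and then argue via alternating paths. First I would fix a graph $G=(U\cup W\cup X,E)$ realizing $M$, where every vertex of $U\cup X$ has degree $1$ and there are no edges inside $U\cup X$. Since $V_1,V_2\in\sF$, there are sets $X_1,X_2\subseteq X$ and perfect matchings $N_1$ of $G(V_1\cup W\cup X_1)$ and $N_2$ of $G(V_2\cup W\cup X_2)$. I then consider the symmetric difference $N_1\Delta N_2$: as the union of two matchings it has maximum degree $2$, so it is a vertex-disjoint union of paths and cycles, and along each such path or cycle the edges alternate between $N_1$ and $N_2$.

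Next I would pin down the endpoints of those paths. Every vertex of $W$ is covered by both $N_1$ and $N_2$, hence has degree $0$ or $2$ in $N_1\Delta N_2$ and cannot be an endpoint of a path; every vertex of $U\cup X$ has degree at most $1$ in $G$, hence degree $0$ or $1$ in $N_1\Delta N_2$. A vertex $u\in U$ has degree $1$ in $N_1\Delta N_2$ exactly when it is covered by precisely one of $N_1,N_2$, i.e.\ exactly when $u\in V_1\Delta V_2$; similarly $x\in X$ has degree $1$ exactly when $x\in X_1\Delta X_2$. Consequently the endpoints of the paths in $N_1\Delta N_2$ are exactly the elements of $(V_1\Delta V_2)\cup(X_1\Delta X_2)$, and each such element is the endpoint of a unique path. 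For a path $Q$ of $N_1\Delta N_2$, let $\mathrm{end}_U(Q)$ be the set of its endpoints lying in $V_1\Delta V_2$; then $|\mathrm{end}_U(Q)|\in\{0,1,2\}$ (according to whether $0$, $1$, or $2$ endpoints of $Q$ lie in $U$), and the nonempty sets $\mathrm{end}_U(Q)$ form a partition of $V_1\Delta V_2$. I declare the two-element sets among them to be the pairs $P_1,\dots,P_\ell$ and the one-element sets to be the singletons $S_1,\dots,S_k$.

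It then remains to verify the toggling property. Given any sub-collection $\mathcal{Q}$ of the paths corresponding to chosen pairs and singletons, I set $P=\bigcup_{Q\in\mathcal{Q}}\mathrm{end}_U(Q)$ and $N=N_1\,\Delta\bigcup_{Q\in\mathcal{Q}}E(Q)$. Each $Q\in\mathcal{Q}$ is $N_1$-alternating, so $N$ is again a matching. Tracking coverage: every vertex of $W$ remains covered; a vertex $u\in V_1\Delta V_2$ is $N$-covered iff ``$u\in V_1$'' when its path is not in $\mathcal{Q}$ and iff ``$u\in V_2$'' when its path is in $\mathcal{Q}$, while a vertex $u\in U\setminus(V_1\Delta V_2)$ is untouched; hence the set of $U$-vertices covered by $N$ is exactly $V_1\Delta P$. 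Since every endpoint of an $N$-edge is $N$-covered, $N$ is a perfect matching of the subgraph of $G$ induced by the union of $W$, the $U$-vertices $N$ covers, and the $X$-vertices $N$ covers, so by Lemma~\ref{def:matching-gadget2} we get $V_1\Delta P\in\sF$. Running the identical argument with $(N_1,V_1,X_1)$ replaced by $(N_2,V_2,X_2)$, and using $V_1\Delta V_2=V_2\Delta V_1$, gives $V_2\Delta P\in\sF$, finishing the proof.

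The step I expect to be the main obstacle is the bookkeeping in the last two paragraphs: one must be certain that $W$-vertices never occur as path endpoints (which is precisely where it matters that both $N_1$ and $N_2$ are \emph{perfect} matchings, so each saturates all of $W$), and that toggling a family of alternating paths produces a matching that is perfect on the induced subgraph spanned by exactly the vertices it covers. Once those two points are nailed down, matching-realizability in the form of Lemma~\ref{def:matching-gadget2} supplies everything, and no further manipulation of the $\Delta$-matroid exchange axiom for $M$ is needed.
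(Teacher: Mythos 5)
Your proposal is correct and follows essentially the same approach as the paper's proof: both reduce to the perfect-matching reformulation via Lemma~\ref{def:matching-gadget2}, take the symmetric difference of the two perfect matchings, identify the path/cycle decomposition whose path endpoints lie in $(V_1\Delta V_2)\cup(X_1\Delta X_2)$, extract singletons and pairs from how the $U$-endpoints are matched up by paths, and then toggle along an arbitrary sub-collection of alternating paths to realize $V_1\Delta P$ and $V_2\Delta P$. The extra care you take in pinning down that $W$-vertices cannot be path endpoints and that toggling a union of alternating paths yields a perfect matching of the induced subgraph it covers is welcome, but it is not a departure from the paper's argument.
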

\begin{proof}
By Lemma~\ref{def:matching-gadget2}, there is a graph $G=(U\cup W\cup X, E)$  realizing $M$. 
Since $V_1, V_2 \in \sF$, 
there exists $X_1\subseteq X$ and $X_2\subseteq X$ 
such that the induced subgraph $G(V_1\cup W \cup X_1)$ has a perfect matching $M_1$, and $G(V_2\cup W \cup X_2)$ has a perfect matching $M_2$. 
Let $E_1$ and $E_2$ be the edge sets of $M_1$ and $M_2$ respectively. 
Consider the graph $G'=(U\cup W\cup X, E_1\Delta E_2)$. 
Since $E_1$ covers each vertex in $V_1\cup W \cup X_1$ exactly once, and $E_2$ covers each vertex in $V_2\cup W \cup X_2$ exactly once,  
for every $v\in (V_1\cap V_2)\cup W \cup (X_1 \cap X_2)$ in $G'$, $\deg(v)=0$ or $2$, 
and for every $v\in (V_1\Delta V_2)\cup (X_1\Delta X_2)$ in $G'$, $\deg(v)=1$.
Thus, $G'$ is a union of induced cycles and paths, 
where each path connects two vertices in $(V_1\Delta V_2)\cup (X_1\Delta X_2)$. 
For every vertex $u\in V_1\Delta V_2$, 
if it is connected to another vertex $v\in V_1\Delta V_2$ by a path in $G'$, then we make $\{u, v\}$ a pair. 
Otherwise (i.e., $u$ is connected to a vertex in $X_1\Delta X_2$ by a path in $G'$), 
we make $u$ a single variable. 
Then, $V_1\Delta V_2$ can be partitioned into single variables $S_1, \ldots, S_k$ and pairs $P_1, \ldots, P_\ell$ according to the paths in $G'$.

Moreover, each path in $G'$ is an alternating path with respect to both matchings  $M_1$ and $M_2$. 
Pick a union of  such paths (note that they are edge-disjoint).
Suppose that there are $r$ many paths that connect single variables in $S_{i_1}, \ldots, S_{i_r}$ with variables in $X$, 
and $t$ many paths that connect pairs $P_{j_1}, \ldots, P_{j_t}$.
Let $P=S_{i_1}\cup \cdots \cup S_{i_r} \cup P_{j_1}\cup \cdots \cup P_{j_t}$.
After altering the matchings $M_1$ and $M_2$ according to these $t$ many alternating paths, 
we obtain two new matchings that cover exactly $(V_1\Delta P)\cup W \cup X'_1$ for some $X'_1\subseteq X$ and $(V_2\Delta P) \cup W \cup X'_2$ for some $X'_2\subseteq X$ respectively. 
Thus, $V_1\Delta P\in \sF$ and $V_2\Delta P\in \sF$.
\end{proof}

\begin{theorem}\label{thm:real}
A degree constraint $D$ of gaps of length at most $1$
is matching realizable if and only if all its
  gaps are of the same length $0$ or $1$.
\end{theorem}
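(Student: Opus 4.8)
I would dispatch the ``if'' direction by quoting known constructions: interval constraints $\{g,g+1,\dots,f\}$ are matching realizable because they are the $(g,f)$-factor constraints of Szab\'o~\cite{Szabo2009} (the case $[b]$ being Tutte's $b$-matching gadget~\cite{tutte1954short}), and parity interval constraints $\{g,g+2,\dots,f\}$ are matching realizable by Cornu\'ejols~\cite{cornuejols1988general}; since a degree constraint all of whose gaps have length $0$ is precisely an interval and one all of whose gaps have length $1$ is precisely a parity interval (a singleton, having no gaps, is both), this direction is immediate. The real work is the ``only if'' direction, which I would prove in contrapositive form: if $D$ (with all gaps of length $\le 1$) has \emph{both} a gap of length $0$ and a gap of length $1$, then $D$ is not matching realizable.

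\noindent\textbf{Reformulating the hypotheses.} A gap of length $1$ in $D$ is the same as a \emph{hole}, an integer $h\notin D$ with $h-1,h+1\in D$; a gap of length $0$ is the same as a pair $q,q+1$ of \emph{consecutive} integers both in $D$. So I would start from a hole $h$ and a consecutive pair $q,q+1$, and (using that $q,q+1$ are consecutive) note that $h<q$ or $h>q+1$; by a left--right symmetry I may assume $h>q+1$ and, replacing $h$ by the smallest hole above $q+1$, arrange that $\{q,q+1,\dots,h-1\}\subseteq D$ as well (any missing integer in between would be a smaller hole). Thus $\{q,q+1,\dots,h-1,h+1\}\subseteq D$, $h\notin D$, and $h\ge q+2$.

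\noindent\textbf{The main step: invoking Lemma~\ref{lem-mixed-partition}.} Assume for contradiction that the symmetric $\Delta$-matroid $(U,\sF)$ represented by $D$ --- so $\sF=\{X\subseteq U:\ |X|\in D\}$ --- is matching realizable. Set $a:=h+1$, choose $b\in\{q,q+1\}$ of parity opposite to $a$ so that $a-b$ is odd (note $b\le h-1$), and pick nested sets $V_2\subseteq V_1\subseteq U$ with $|V_1|=a$, $|V_2|=b$; both lie in $\sF$. Lemma~\ref{lem-mixed-partition} then partitions $V_1\setminus V_2$ into $k$ singletons and $\ell$ pairs with $k+2\ell=a-b$ odd, hence $k\ge1$, such that $V_1\setminus P\in\sF$ for every union $P$ of a sub-family of the parts. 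I would then observe that as $P$ varies, $|P|$ runs over $\{i+2j:0\le i\le k,\ 0\le j\le\ell\}$, which equals $\{0,1,\dots,a-b\}$ because $k\ge1$ allows one singleton to correct parity; hence $|V_1\setminus P|=a-|P|$ hits every value in $\{b,b+1,\dots,a\}$, forcing $\{b,\dots,a\}\subseteq D$. Since $b\le h-1<h<h+1=a$, this contains $h\notin D$ --- a contradiction. The case $h<q$ is symmetric (use $|V_2|=h-1$, $|V_1|\in\{q,q+1\}$ of the parity making $|V_1|-(h-1)$ odd).

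\noindent\textbf{Where the difficulty sits.} The argument is short, and essentially the only thing requiring care is the bookkeeping in the main step: the elementary fact that $k\ge1$ singletons together with $\ell$ pairs realize every size in $\{0,\dots,k+2\ell\}$ (a one-line parity induction, using a single singleton to fix parity over the even ``bulk'' supplied by the pairs), and the coordinated parity choice of $b\in\{q,q+1\}$ ensuring that the contiguous block $\{b,\dots,a\}$ forced into $D$ really straddles the hole $h$. Getting those two parity points aligned is the crux; everything else is routine.
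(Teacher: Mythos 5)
Your proof is correct and follows essentially the same route as the paper's: the ``if'' direction rests on the known gadget constructions (the paper cites Cornu\'ejols for the all-gaps-length-$1$ case and builds the interval gadget explicitly rather than citing Szab\'o), and the ``only if'' direction applies Lemma~\ref{lem-mixed-partition} to two nested feasible sets of the associated symmetric $\Delta$-matroid whose sizes lie in $D$, straddle the missing value, and differ by an odd amount, so that the partition must contain a singleton whose toggle yields a feasible set of the forbidden size. The paper works with the specific sizes $p$ and $p+3$ extracted from a pattern $\{p,p+1,p+3\}\subseteq D$, $p+2\notin D$, whereas you allow a general odd difference $a-b$ with $b\in\{q,q+1\}$ chosen by parity; this is only a bookkeeping difference, not a different argument.
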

\begin{proof}
 By the gadget constructed in the proof 
  of~\cite[Theorem~2]{cornuejols1988general}, if a degree constraint
  has all gaps of length  $1$ then it is 
  matching realizable.\footnote{We remark that~\cite{cornuejols1988general}
  includes gadgets for other types of degree constraints, including type-1 and type-2, but only under a more general notion of gadget constructions that involve edges  and triangles. 
  The gadget that only involves edges
  is a matching gadget defined in this paper.}
  We give the following gadget (Figure~\ref{fig:matchgate}) to realize a degree constraint $D$ with all gaps of length $0$, which generalizes the gadget in~\cite{tutte1954short}.
  Suppose that $D=\{p, p+1, \ldots, p+r\}$ of arity $n$ where $n\geq p+r\geq p\geq 0$. 
  Consider the following graph $G=(U\cup V, E)$:  $U$ consists of $n$ vertices of degree $1$, and $V$ consists of two parts $V_1$ with  $|V_1|=n$ and $V_2$ with  $|V_2|=n-p$;
the induced subgraph $G(V)$ of $G$  induced by $V$ is a  complete bipartite graph between $V_1$ and $V_2$, and 
the induced subgraph $G(U\cup V_1)$ of $G$ induced by $U\cup V_1$ is a  bipartite perfect matching between $U$ and $V_1$. 
  Every vertex in $V_1$ is labeled by the constraint $\{1\}$.
  There are $r$ vertices in $V_2$ labeled by $\{0, 1\}$ and the other $n-p-r$ vertices in $V_2$ labeled by $\{1\}$.
  One can check that this gadget realizes $D$.
  \begin{figure}[!h]
      \centering
      \includegraphics[height=5cm]{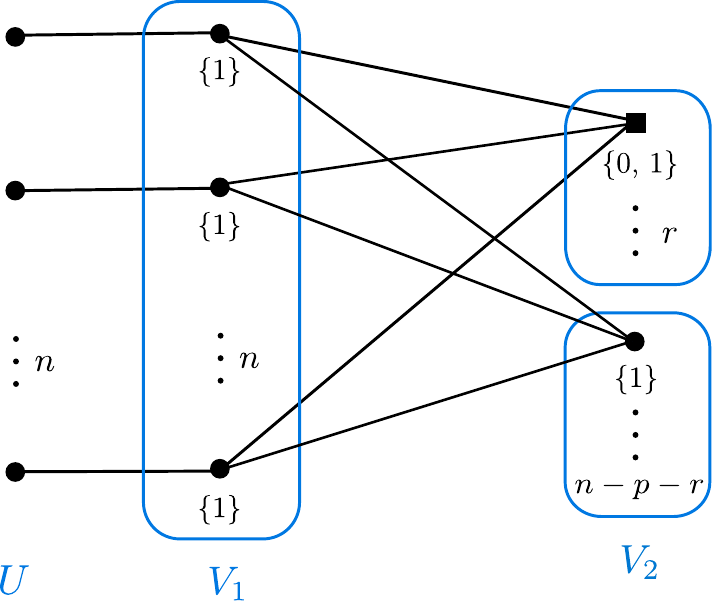}
      \caption{A matching gadget realizing $D=\{p, p+1, \ldots, p+r\}$ of arity $n$}
      \label{fig:matchgate}
  \end{figure}
  
For the other direction, without loss of generality, we may assume that $\{p, p+1, p+3\}\subseteq D$ and $p+2\notin D$. 
Since $D$ has gaps of length at most $1$, 
it can be associated with a symmetric $\Delta$-matroid $M=(U, \sF)$. 
Then, there is $V_1\in \sF$ with $|V_1|=p$ and $V_2\in \sF$ with $|V_2|=p+3$.
Since $M$ is symmetric, we may pick $V_2=V_1\cup \{v_1, v_2, v_3\}$ for some $\{v_1, v_2, v_3\}\cap V_1=\emptyset$. Let $S=V_1\Delta V_2=\{v_1, v_2, v_3\}$.
By Lemma~\ref{lem-mixed-partition}, $S$ can be partitioned into single variables and/or pairs of variables such that for any union $P$ of them, $V_2\backslash P\in \sF$.
Since $|S|=3$, there exists at least a single variable $x_i$ in the partition of $S$
  such that $V_2\backslash\{v_i\}\in \sF.$
Note that $|V_2\backslash\{v_i\}|=p+2$. Thus, $p+2\in D$. A  contradiction.
\end{proof}

\bibliographystyle{alpha}
\bibliography{sz}
\end{document}